\theoremstyle{plain}
\newtheorem{theorem}{Theorem}[section]
\newtheorem{lemma}[theorem]{Lemma}
\theoremstyle{definition}
\newtheorem{definition}[theorem]{Definition}
\theoremstyle{remark}
\newtheorem*{example*}{Example}
\newtheorem{remark}{Remark}
\newtheorem*{remark*}{Remark}
\newcommand{\braced}[1]{\ifthenelse{\equal{#1}{}}{}{\left(#1 \right)}}
\newcommand{\sbraced}[1]{\ifthenelse{\equal{#1}{}}{}{(#1)}}
\newcommand{\diff}[1]{\frac{d #1}{dt}}
\NewDocumentCommand{\R}{}{\mathbb{R}}
\NewDocumentCommand{\ti}{}{t}
\NewDocumentCommand{\states}{}{S}
\NewDocumentCommand{\st}{}{s}
\NewDocumentCommand{\stb}{}{s'}
\NewDocumentCommand{\players}{}{I}
\NewDocumentCommand{\pl}{}{i}
\NewDocumentCommand{\plb}{}{j}
\NewDocumentCommand{\trans}{O{s} O{} O{}}{P_{#1\ifthenelse{\equal{#2}{}}{}{ #2}}\ifthenelse{\equal{#3}{}}{}{( #3 )}}
\NewDocumentCommand{\actions}{O{}}{A^{#1}}
\NewDocumentCommand{\poff}{O{} O{\pl} O{\st}}{r^{#2}_{#3}\ifthenelse{\equal{#1}{}}{}{\left( #1 \right)}}
\NewDocumentCommand{\simplex}{m}{\Delta\left({#1}\right)}
\NewDocumentCommand{\tble}{O{\auxact} O{\pl}}{h^{#2}\braced{#1}}
\NewDocumentCommand{\smness}{}{\epsilon}
\NewDocumentCommand{\auxpoff}{O{\ti} O{\pl} O{\auxact[#1]} O{\st} O{\auxval[#1][#2][]}}{f^{#2}_{#4, #5}\braced{#3}}
\NewDocumentCommand{\auxact}{O{\ti} O{} O{\st}}{x^{#2}_{#3}\sbraced{#1}}
\NewDocumentCommand{\dotauxact}{O{} O{} O{\st}}{\dot{x}^{#2}_{#3} \ifthenelse{\equal{#1}{}}{}{\left( #1 \right)}}
\NewDocumentCommand{\auxval}{O{\ti} O{\pl} O{\st}}{u^{#2}_{#3}\sbraced{#1} }
\NewDocumentCommand{\dotauxval}{O{} O{\pl} O{\st}}{\dot{u}^{#2}_{#3} \ifthenelse{\equal{#1}{}}{}{\left( #1 \right)}}
\NewDocumentCommand{\rateval}{O{\ti} O{\st}}{\beta\sbraced{#1}}
\NewDocumentCommand{\rateact}{O{\ti} O{\st}}{\alpha_{#2}(#1)}
\NewDocumentCommand{\sbr}{O{\st} O{\auxval[\ti][\pl][]} O{\auxact} O{\pl}}{\operatorname{sbr}^{#4}_{#1, #2}\braced{#3}}
\NewDocumentCommand{\argmax}{m}{\arg\max_{#1}}
\NewDocumentCommand{\esttrans}{O{s} O{} O{} O{\ti}}{\hat P_{#1 #2}\braced{#3}\braced{#4}}
\NewDocumentCommand{\estpoff}{O{\act} O{\pl} O{\st} O{\ti}}{\hat r^{#2}_{#3}\braced{#1}\braced{#4}}
\NewDocumentCommand{\dotesttrans}{O{s} O{} O{} O{}}{\dot {\hat {P}}_{#1 #2}\braced{#3}\braced{#4}}
\NewDocumentCommand{\dotestpoff}{O{\act} O{\pl} O{\st} O{}}{\dot {\hat {r}}^{#2}_{#3}\braced{#1}\braced{#4}}
\NewDocumentCommand{\estauxpoff}{O{\ti} O{\pl} O{\auxact[#1]} O{\st} O{\auxval[#1][#2][]}}{{\hat f}^{#2}_{#4, #5}\braced{#3}}
\NewDocumentCommand{\nti}{}{n+1}
\NewDocumentCommand{\dauxval}{O{\ti} O{\pl} O{\st}}{u^{#2}_{\ifthenelse{\equal{#3}{}}{#1}{\ifthenelse{\equal{#1}{}}{#3}{#3,#1}}}}
\NewDocumentCommand{\ddotauxval}{O{\pl} O{\st}}{\dauxval[\nti][#1][#2] - \dauxval[\ti][#1][#2]}
\NewDocumentCommand{\speedval}{}{\beta} 
\NewDocumentCommand{\drateval}{O{\ti} O{\st}}{ \frac{\speedval}{#1+1}} 
\NewDocumentCommand{\drateact}{O{\ti} O{\st}}{\frac{1_{#2 = \curst}}{\stcount[\ti][\st]}}
\NewDocumentCommand{\curst}{O{\ti}}{s_{#1}}
\NewDocumentCommand{\stcount}{O{\ti} O{\st}}{#2^\sharp_{#1}}
\NewDocumentCommand{\dsbr}{O{\st} O{\dauxval[\dti][][]} O{\dauxact} O{\pl}}{\operatorname{sbr}^{#4}_{#1, #2}\braced{#3}}
\NewDocumentCommand{\dauxact}{O{\dti} O{} O{\st}}{x^{#2}_{\ifthenelse{\equal{#3}{}}{#1}{#3, #1}}}
\NewDocumentCommand{\ddotauxact}{O{} O{\st}}{\dauxact[\nti][#1][#2]-\dauxact[\ti][#1][#2]}
\NewDocumentCommand{\act}{O{\ti} O{} O{} O{}}{a^{#2}_{#3}\braced{#1}\braced{#4}}
\NewDocumentCommand{\dact}{O{\dti} O{} O{} O{}}{a^{#2}_{\ifthenelse{\equal{#3}{}}{#1}{#3, #1}}}
\NewDocumentCommand{\prof}{O{} O{\st}}{x^{#1}_{#2}}
\NewDocumentCommand{\profb}{O{} O{\st}}{y^{#1}_{#2}}
\NewDocumentCommand{\purep}{O{} O{\st}}{a^{#1}_{#2}}
\NewDocumentCommand{\purepb}{O{} O{\st}}{b^{#1}_{#2}}
\NewDocumentCommand{\disc}{}{\delta}
\NewDocumentCommand{\dti}{}{n}
\NewDocumentCommand{\desttrans}{O{s} O{} O{} O{\dti}}{\hat P_{#1 #2, #4}\braced{#3}}
\NewDocumentCommand{\destpoff}{O{\dact} O{\pl} O{\st} O{\dti}}{\hat r^{#2}_{#3, #4}\braced{#1}}
\NewDocumentCommand{\destauxpoff}{O{\dti} O{\pl} O{\dauxact[#1]} O{\st} O{\dauxval[#1][#2][]}}{{\hat f}^{#2}_{#4, #5}\braced{#3}}
\NewDocumentCommand{\tipoff}{O{\ti} O{\pl}}{R_{#2}^{#1}}
\NewDocumentCommand{\distrpoff}{O{} O{\pl} O{\st}}{\tilde{r}^{#2}_{#3}\ifthenelse{\equal{#1}{}}{}{\left( #1 \right)}}
\NewDocumentCommand{\dtble}{O{\dauxact} O{\pl}}{h^{#2}\braced{#1}}
\NewDocumentCommand{\ratemin}{}{\alpha_-}
\NewDocumentCommand{\ratelim}{}{\beta^\star}
\title{Smooth Fictitious Play in Stochastic Games with Perturbed Payoffs and Unknown Transitions}
\author{%
  Lucas Baudin \\
  Université Paris-Dauphine - PSL\\
  \texttt{lucas.baudin@dauphine.eu} \\
  \And
  Rida Laraki \\
  Université Paris-Dauphine - PSL
}
\newif\ifexample
\begin{document}

\maketitle

\begin{abstract}
  Recent extensions to dynamic games (\cite{leslieBestresponseDynamicsZerosum2020, sayinFictitiousPlayZerosum2020, baudinFictitiousPlayBestResponse2022}) of the well-known fictitious play learning procedure in static games were proved to globally converge to stationary Nash equilibria in two important classes of dynamic games (zero-sum and identical-interest discounted stochastic games). However, those decentralized algorithms need the players to know exactly the model (the transition probabilities and their payoffs at every stage). To overcome these strong assumptions, our paper introduces regularizations of the systems in \cite{leslieBestresponseDynamicsZerosum2020, baudinFictitiousPlayBestResponse2022} to construct a family of new decentralized learning algorithms which are model-free (players don't know the transitions and their payoffs are perturbed at every stage). Our procedures can be seen as extensions to stochastic games of the classical smooth fictitious play learning procedures in static games (where the players best responses are regularized, thanks to a smooth strictly concave perturbation of their payoff functions). We prove the convergence of our family of procedures to stationary regularized Nash equilibria in zero-sum and identical-interest discounted stochastic games. The proof uses the continuous smooth best-response dynamics counterparts, and stochastic approximation methods. When there is only one player, our problem is an instance of Reinforcement Learning and our procedures are proved to globally converge to the optimal stationary policy of the regularized MDP. In that sense, they can be seen as an alternative to the well known Q-learning procedure.
  
\end{abstract}

\section{Introduction}

Fictitious play (FP) (\cite{brown1951iterative,robinsonIterativeMethodSolving1951}) is the oldest and most famous independent (i.e., decentralized) learning algorithm in game theory. It is a simple behavioural strategy that agents may use to repeatedly play a fixed normal form game $G$. Namely, at every repetition of $G$, each player best replies to the empirical distribution of the uncorrelated past actions of the opponents.

When $G$ is a finite game, it has been shown that the empirical distribution of the players' actions
under FP globally converges to the set of mixed Nash equilibria, when $G$ is zero-sum (\cite{brown1951iterative,robinsonIterativeMethodSolving1951}) or identical-interest (\cite{mondererFictitiousPlayProperty1996}).
The FP procedure was extensively studied since then with numerous additional convergence results (\cite{RePEc:cla:levarc:419, bergerFictitiousPlayGames2005, vanstrienFictitiousPlayGames2011}) and a number of generalizations (\cite{leslieGeneralisedWeakenedFictitious2006, benaimConsistencyVanishinglySmooth2013} to cite a few).

Unfortunately, FP has no guarantee in terms of regret. Indeed, it may well lead to the worst possible payoff in a zero-sum game if the opponent best-replies to it at every stage. Fortunately, the smooth version of FP (call it SFP), also named stochastic or regularized fictitious play, is proved to have the no-regret property (\cite{fudenbergConsistencyCautiousFictitious1995,benaimConsistencyVanishinglySmooth2013}) and to converge to regularized Nash Equilibria in the classes mentioned so far (\cite{hofbauerGlobalConvergenceStochastic2002, bergerFictitiousPlayGames2005,cohenLearningBanditFeedback}, even when $G$ is non atomic or with continuous action sets, see \cite{hadikhanlooLearningNonatomicGames2021,perrinFictitiousPlayMean2020}). In the online optimization and learning community, fictitious play for two player games (or correlated FP for more than two players, where a player best responds to the correlated empirical past actions of the opponents) is precisely Follow The Leader (FTL), and Smooth Fictitious Play is Follow The Regularized Leader (FTRL) (see \citet{shalev-shwartzOnlineLearningOnline2011a, cesa-bianchiPotentialBasedAlgorithmsOnline2001,belmegaOnlineConvexOptimization2018a,kwonContinuoustimeApproachOnline2017,GVM21,GVM21b}).




How can FP and SFP be extended to learn Nash equilibria in zero-sum and identical-interest games when the stage game is not fixed but changes with time (such as the stochastic game of \citet{shapleyStochasticGames1953})?

The difficulty is that, in stochastic games (SG), a state variable evolves from a period to the next one with a probability depending on the current state and actions players take. Therefore, every player, when choosing its action at any given stage, must strike a balance between its instantaneous reward (determined by the current state and the current profile of actions) and its continuation future rewards, which depend in particular on the next state. Extending the result of \citet{shapleyStochasticGames1953} in the zero-sum case, \citet{finkEquilibriumStochasticPerson1964} proved that when there are finitely many players and actions (the framework of our paper), any discounted stochastic game (DSG) admits a stationary Nash equilibrium in mixed strategies (e.g. a decentralized randomized policy that depends only on the state variable).

Stationary Nash are the simplest possible equilibria that a dynamic game can have and it is our objective to construct independent learning algorithms that are provable to globally converge to them in finite zero-sum and identical-interest DSG. This is a hard problem because the payoff function of a player in a DSG is not own-payoff concave or quasi-concave when the players are restricted to their stationary strategies and thus, no gradient based method is guaranteed to converge, even to a local Nash equilibrium (\cite{daskalakisComplexityConstrainedMinmax2021}). 
Only recently some positive results have been obtained. That is, \citet{leslieBestresponseDynamicsZerosum2020, sayinFictitiousPlayZerosum2020,baudinFictitiousPlayBestResponse2022} combined the Fictitious Play behavioral strategy with a Q-learning like updating rule to design a family of decentralized rules such that the sequence of independent empirical distributions of the players converge to the set of stationary Nash equilibria in the two major classes of zero-sum and identical-interest ergodic DSG.

These latest results are based on ideas of the well-known and comparatively much better understood framework of Reinforcement Learning (RL). In this setting, there is only one player (a Markov Decision Process). The advent of efficient and model-free algorithms in this context such as Q-learning (\cite{watkinsLearningDelayedRewards1989}) has had a lot of impacts on RL with numerous extensions, including offline Q-learning (\cite{NEURIPS2020_0d2b2061}), double Q-learning (\cite{NIPS2010_091d584f}) and a wide range of applications (\cite{taiRobotExplorationStrategy2016, kurinCanQlearningGraph2020}). However, the convergence of Q-learning does not extend to the multiagent setting (see \cite{wunderClassesMultiagentQlearning2010,kianercyDynamicsBoltzmannLearning2012}).

Unfortunately, the algorithms in \citet{leslieBestresponseDynamicsZerosum2020} and \citet{baudinFictitiousPlayBestResponse2022} need the players to know precisely the model from the beginning (i.e., the transition and payoff functions). To avoid this drawback, our paper introduces a robust regularized version of their rules which combines Smooth FP, Q-learning-like rule (i.e., updates at every step an estimate of the continuation payoff) and empirical estimates of the unknown parameters. This leads us to a family of model-free independent learning algorithms, where the payoffs and the transitions are unknown to the players, and where the stage payoffs are imperfectly observed (e.g. randomly perturbed with a zero-mean noise). Our algorithms are proved to converge to the set of regularized stationary Nash equilibria in zero-sum and identical-interest ergodic DSG (those are a subset of stationary $\varepsilon$-Nash equilibria, and as $\varepsilon$ goes to zero (corresponding to a vanishing SFP), they refine the set of stationary Nash equilibria).



Another efficient class of algorithms in RL are the Projected Gradient Methods (PGM) and their
stochastic version (\citet{Williams1992,suttonReinforcementLearningIntroduction2018}). Very recently,
PGM and Stochastic PGM (SPGM) have been studied by \citet{NEURIPS2020_3b2acfe2} in zero-sum stochastic games to
prove that when the players play independently a PGM (or a SPGM) with different time
scales,\footnote{ \citet{NEURIPS2020_3b2acfe2} proved that their convergence result fails if the players
have the same time scale.} one can approach a best stationary equilibrium iterate\footnote{I.e., there is
$t\in \{1,...,T(\varepsilon) \}$ such that (in expectation in case of Stochastic PGM), the players play
an $\epsilon$ stationary Nash equilibrium at time $t$.} in a finite time $T(\varepsilon)$, whenever the
stochastic game is episodic.\footnote{A SG is episodic if, at every stage, there is a positive
probability that the game stops.} \citet{LPOP2022} proved a similar convergence result in episodic
identical-interest DSG, as soon as all the players use a PGM or SPGM.\footnote{\citet{LPOP2022} proved
their results in the larger class of potential stochastic games (they called Markov Potential Games). In
such a class of games, the players are divided into two categories: either they do not influence the
transition or they have the same payoff function up to a constant (see
\citet{hollerLearningDynamicsReinforcement2020,LPOP2022}). This last class is slightly larger than
identical-interest SG is called \emph{Team Stochastic Games} in
\citet{hollerLearningDynamicsReinforcement2020}. All our convergence results extend to team stochastic
games. The class of potential SG where the transitions are independent of the player's actions is easy
to solve, just let each player uses a smooth FP myopically state per state independently.} Those are
extremely interesting and promising convergence results. Note however that they don't cover our
framework because they assume the game to have a positive probability to stop (implying that their
trajectories terminate almost surely in finite time) while in our ergodic setting, all the trajectories
are infinite and the game never terminates. So, our results and those of
\citet{NEURIPS2020_3b2acfe2,LPOP2022} are non-comparable and complementary: they prove the existence of
a best iterate approximation, we prove a time-average convergence, we use different algorithms and
tools, and orthogonal assumptions (episodic vs ergodic).

The counterpart of FP in continuous time is best-response dynamics (\cite{matsuiBestResponseDynamics1992,harrisRateConvergenceContinuousTime1998a}) and modern proofs for the convergence of SFP rely on the convergence of the continuous model combined with stochastic approximation techniques (see for instance \cite{benaimStochasticApproximationsDifferential2005, benaimConsistencyVanishinglySmooth2013, hadikhanlooLearningNonatomicGames2021}). We follow a similar approach: we prove the convergence to regularized stationary Nash equilibria of an associated smooth continuous-time dynamics and, using some advanced stochastic approximation tools, deduce the convergence of our discrete-time rules. This is an important technical difference with \citet{NEURIPS2020_3b2acfe2,LPOP2022} who can prove their results directly in discrete time, with an explicit finite convergence bound $T(\varepsilon)$. Our stochastic approximation methods do not provide us with a convergence rate, which is an open problem for SFP and FP even in the classical setting.

\paragraph{Contributions}
\begin{itemize}
\item We introduce a family of independent learning algorithms in stochastic games that are model-free (unknown transitions and imperfect observation of the stage payoffs);
\item We identify the corresponding smooth continuous-time dynamics, and show it globally converges to regularized stationary Nash equilibria in identical-interest and zero-sum DSG;
\item From the continuous-time convergence, we deduce that in our two classes of DSG, the uncorrelated empirical frequencies of actions generated by our discrete-time algorithms almost surely globally converge to the set of regularized stationary Nash equilibria.
\end{itemize}

\paragraph{Outline} Section~\ref{sec:preliminaries} gives the main definitions and notations of the paper. Section~\ref{sec:sfp} introduces the smooth fictitious play procedure together with the continuation payoff updating rule. In order to prove the convergence of the discrete time algorithm, and also for its own sake, a smooth best-response dynamics with a simple continuation payoff up-dating rule dynamics is described in Section~\ref{sec:sbrd} followed with sketches of proofs of both our discrete-time and continuous-time systems. Section~\ref{sec:related} describes in more details the related work. The appendix contains the detailed proofs. \ifexample Section~\ref{sec:soccer} explains an example and presents empirical results of our algorithm.\fi

\section{Preliminaries}\label{sec:preliminaries}

{
\RenewDocumentCommand{\act}{}{\dact}
\RenewDocumentCommand{\ti}{}{n}

Strategic situations where several agents interact, get rewards and modify an environment can be modeled as Stochastic Games (SG). In our settings, the number of players, actions and possible states are finite. We consider players that are interested in the so-called discounted reward on an infinite horizon, that is players strike a balance between instantaneous rewards and future ones.

\paragraph{Stochastic games} SG are tuples
  \(G=(\states, \players, \actions, \{\poff\}_{\pl \in \players, \st \in \states}, \{\trans\}_{\st \in \states})\)
  where \(S\) is the state space (a finite set), \(I\) is the finite set of
  players, \(\actions[\pl]\) is the finite action set of player \(\pl\),
  \(\actions := \Pi_{\pl\in I} \actions[\pl]\) is the set of action profiles,
  \(\poff (\cdot): \actions \rightarrow \R\) is the stage reward of player
  \(\pl\),
  and \(\trans (\cdot): \actions \rightarrow \simplex{\states}\) is the transition probability
  map (where $\simplex{\states}$ is the set of probability distributions on $\states$).

\NewDocumentCommand{\mixedactions}{}{X}


\paragraph{Main Restrictions} We are interested in two classes of games: \emph{zero-sum} stochastic games are two-players SG where $\poff[][1] = - \poff[][2]$ for every $\st$ and \emph{team} stochastic games are such that the payoff functions of the players differ only by a constant (there is $r_s(\cdot):A\rightarrow \R$ such that for every $i$ and $s$, $r^i_s(\cdot)=r_s(\cdot)+c^i$ for a constant $c^i$). A special case is \emph{identical-interest} SG where all payoff functions are equal ($r^i_s(\cdot)=r_s(\cdot)$ for all $i$). A SG is \emph{ergodic} if there exists $T \in \mathbb N$ such that for any sequence of actions of length $T$, the probability to reach any state $\stb$ starting from any state $\st$ is positive.

\paragraph{Game Form} A stochastic game is played as follows: starting from an initial state $\st_0$, at every step $\ti \in \mathbb N$, every player $\pl$ choose an action $\act[\ti][\pl]$ given the history of play and the current state $\curst$. The next state $\curst[\nti]$ is drawn from distribution $\trans[\curst][][\act]$.

\paragraph{Discounted Payoffs} We suppose that every player $\pl$ is interested in maximizing its discounted payoff, that is the expectancy of $\sum_{k \in \mathbb N} \disc^k \poff[\act[k]][\pl][\curst[k]]$ where $\delta \in (0, 1)$ is the discount factor.

\paragraph{Strategies} A behavioral strategy $\sigma^i$ for player $i$ is a mapping associating with 
each stage $n\in \mathbb{N}$, history $h_n\in (S\times A)^n$ and current state $s$, a mixed action $x^i_n=\sigma^i(n,h_n,s)$ in $\Delta(A^i)$. The behavioral strategy is pure if its image is always in $A^i$. A stationary strategy of player \(i\) is the simplest of behavioral strategies. It depends only on the current state $s$ but not on the period $n$ nor on the past history $h_n$. As such, a stationary strategy can be identified with an element of \(\Delta(A^i)^S\) (a mixed action per state interpreted as: whenever the state is $s$, $i$ plays randomly according to $x^i_s$). The set of stationary strategy profiles is $\Pi_{i\in I} \Delta(A^i)^S$. Set $\mixedactions=\Pi_{\pl \in \players}\simplex{\actions[\pl]}$ so a stationary profile is an element of $\mixedactions^\states$. For \(y^i \in (\Delta(A^i))^S\) and \(x \in \Pi_{i\in I} \Delta(A^i)^S\), we denote by \((y^i, x^{-i})\) the stationary profile where $i$ changes its strategy from $x^i$ to $y^i$. A stationary profile \(x \in X^S\) is a Nash equilibrium if and only if no player has
a profitable behavioral deviation. \citet{finkEquilibriumStochasticPerson1964} proved the existence of stationary Nash equilibria in every finite DSG and that it is sufficient to check pure stationary deviations. 


\paragraph{Regularizer} In this paper, we are interested in exploratory algorithms, which may classically be generated by the use of some steep concave regularizer. This regularizer is added to the payoff functions $\poff[]$ and can be given several interpretations (see \cite{fudenbergTheoryLearningGames1998,hofbauerGlobalConvergenceStochastic2002} for details): it models the uncertainty of the payoff caused by the "trembling hand" of players or is a way to generate strict incentives to explore all the actions. Formally player $\pl$ maximizes a perturbation of its payoff function $\poff[] + \smness \tble[]$ under the following hypotheses: 
\begin{equation}\label{eq:tble}\tag{H1}
  \begin{gathered}
  \tble[]: \mixedactions \rightarrow \R^+\text{, strictly concave in\ } x^i,\ C^1\text{ on the interior,}\\
  \lim_{x^i\rightarrow \partial \simplex{\actions[\pl]}} \| \nabla_{x^i} \tble[x]\| = + \infty \text{\ and \ }\smness > 0 
  \end{gathered}
\end{equation} 

In this paper we study the convergence of some discrete and continuous time systems to regularized Stationary Nash equilibria, parameterized by the regularizers $(\tble[])_{\pl\in\players}$ and parameter $\smness$.

\begin{definition}\textbf{Regularized Stationary Nash Equilibria} of a DSG are the stationary Nash equilibria of the DSG with the perturbed payoff functions $\poff[] + \smness \tble[]$, $i\in I$.
\end{definition}


\begin{remark*}
 In identical-interest (resp. zero-sum) SG, we suppose all players take the same (resp. the opposite) regularizer $\tble[][]$ so as the regularized payoff functions remain identical. In this context, one can take for example a separable regularizer function which is equal to the sum (resp. the difference) of concave functions depending only on $x^i$.
\end{remark*}

\begin{remark*}
  If in a profile, all actions are $\speedval$-optimal with respect to discounted payoff based on functions $\poff[] + \smness \tble[]$ (\emph{i.e.,} deviations are at most $\speedval$ profitable), then this profile is called a $\beta$-regularized equilibria.
\end{remark*}

}

\section{Smooth Fictitious Play in Stochastic Games (Discrete-Time Algorithms)}\label{sec:sfp}

{

\RenewDocumentCommand{\auxact}{}{\dauxact}
\RenewDocumentCommand{\dotauxact}{}{\ddotauxact}
\RenewDocumentCommand{\auxval}{}{\dauxval}
\RenewDocumentCommand{\rateval}{}{\drateval}
\RenewDocumentCommand{\rateact}{}{\drateact}
\RenewDocumentCommand{\dotauxval}{}{\ddotauxval}
\RenewDocumentCommand{\ti}{}{n}
\RenewDocumentCommand{\estauxpoff}{}{\destauxpoff}
\RenewDocumentCommand{\estpoff}{}{\destpoff}
\RenewDocumentCommand{\esttrans}{}{\desttrans}
\RenewDocumentCommand{\act}{}{\dact}

\paragraph{Fictitious Play in Repeated Games} Introduced by \citet{robinsonIterativeMethodSolving1951} and \citet{brown1951iterative}, fictitious play is a decentralized behavioral strategy to repeatedly play a fixed normal form game. At every step $\ti$, every player $\pl$ chooses an action $\act[\nti][\pl]$ that is a best response to the past empirical average action of other players: $\act[\nti][\pl]$ must maximize $\poff[\cdot, \auxact[\ti][-\pl][]][\pl][]$ where $\auxact[\ti][-\pl][] = \frac{1}{\ti+1} \sum_{k=0}^{\ti} \act[k][-\pl]$.

A famous variation of this procedure is smooth fictitious play (\citet{fudenbergConsistencyCautiousFictitious1995}) where players choose their action according to a regularized payoff function. Formally, a player $i$ draws an action according to a distribution that maximizes $\poff[\cdot, \auxact[\ti][-\pl][]][\pl][] + \smness \tble[\cdot, \auxact[\ti][-\pl][]]$ (with $\tble[]$ and $\smness$ defined in the previous section). An interesting property of such a procedure is that with suitable property on $\smness$, it has no \emph{regret} (up to $\smness$), meaning that if played unilaterally by a player $\pl$, other players can not trick $\pl$ into using a suboptimal action. This is due to the randomness of the action choice: the distribution assigns positive probability to every action, so a player's behavior remains unpredictable.

In this section, we extend the smooth FP procedure to stochastic games. This builds upon the recent extension of FP in \cite{leslieBestresponseDynamicsZerosum2020,sayinFictitiousPlayZerosum2020,baudinFictitiousPlayBestResponse2022} (see Section~\ref{sec:related} for more details). Indeed, it is not easy to derive the convergence to a regularized equilibrium by applying directly the definition of smooth fictitious play to the discounted stochastic game in which the players are restricted to play in stationary strategies, because the payoff function in this game is non-linear (nor is it concave, or quasi-concave) with respect to a player stationary strategy. To overcome this difficulty, and following the idea in \citet{leslieBestresponseDynamicsZerosum2020}, we update two sets of variables for every state: one concerns the uncorrelated empirical actions and the other is an estimate of the continuation payoffs (\emph{i.e.,} payoffs that players can anticipate to achieve if that state is reached). These continuation payoffs are used as a parameter in an auxiliary game, often called the Shapley operator. 

\paragraph{Auxiliary game} Following numerous authors including \citet{shapleyStochasticGames1953}, we define a so-called auxiliary game parameterized by a family of vector $\auxval[][\pl][] \in \R^\states$ (one for every player $\pl$) and a state $\st$. It is a one-shot game every player $\pl$ has the same action set as in $G$ but gets a one-time payoff of:
\begin{equation}
  \auxpoff[][\pl][\purep] := (1-\disc)\poff[\purep] + \delta \sum_{s' \in \states} \trans[\st][s'][\purep] \auxval[][\pl][s']
\end{equation}

\paragraph{Smooth Best-Response} While playing the stochastic games, players maintain a set of continuation payoffs $\auxval[][\pl][]$ that are used to choose their actions. Given that the current state is $\st$, for a player $\pl$, its action is drawn from the distribution that is the smooth best-response in the auxiliary game with respect to empirical action profile $\prof$, that is:
\begin{equation}\label{eq:sfp:sbr}
 \sbr[\st][\auxval[][\pl][]][\prof] := \arg \max_{y^\pl\in \simplex{\actions[\pl]}} \auxpoff[][\pl][y^\pl, \prof[-\pl]] + \smness \tble[y^\pl, \prof[-\pl]]
\end{equation}
This is well and uniquely-defined because of the strict concavity of $\tble[]$.

\begin{example*}
If the regularizer $\tble[]$ is taken to be the Shannon entropy, that is:
 $$\tble[\prof] = -\sum_{\pl \in \players} \sum_{\purep[\pl][] \in \actions[\pl]}\prof[\pl](\purep[\pl][]) \log \left( \prof[\pl](\purep[\pl][])\right)$$
 Then the smooth best-response function is the logit function:
 $$\sbr[\st][\auxval[][\pl][]][\prof](\purep[\pl][]) := \frac{\exp(\smness^{-1} \auxpoff[][\pl][\purep[\pl][], \prof[-\pl]])}{\sum_{\purepb[\pl][] \in \actions[\pl]}\exp(\smness^{-1} \auxpoff[][\pl][\purepb[\pl][], \prof[-\pl]])}$$
\end{example*}

\paragraph{Smooth fictitious play with known transitions and deterministic payoff}
To extend smooth fictitious play to stochastic games, we use two sets of variables. The $\auxact$ variable is the distribution of empirical actions for every state $\st$ prior to step $\ti$. The other variable $\auxval[\nti]$ can be interpreted as the continuation payoffs at $s$ and is defined as the time average of the regularized payoffs up to step $\ti$. This leads to the following system:
\begin{equation*}\label{eq:sumsfp}
 \left\{
\begin{aligned}
 & \auxval[\nti] = \frac{1}{\nti}\sum_{k=0}^{\ti}\left(\auxpoff[k] + \smness \tble[\auxact[k]]\right) \\
      & \auxact = \frac{1}{\stcount}\sum_{k=0}^{\ti} 1_{\st=\curst[k]}\act[k] \\
      & \act[\nti][\pl] \sim \sbr
    \end{aligned}
  \right.
\end{equation*}
where $\stcount$ is the number of times $\st$ was reached, that is $\sum_{k=0}^{\ti} 1_{\st=\curst[k]}$ and every action $\act[k]$ is embedded into the Euclidean space containing $\simplex{\actions[\pl]}$. The interpretation is simple: at each stage $t$, each player best replies to the belief that the other players will play according to the past uncorrelated empirical distribution of actions, and that the future continuation payoffs are equal to the time average of the past estimated perturbed payoffs, calculated using the past empirical frequencies of actions.

This system can be generalized and rewritten in an incremental fashion. It is our first main system:
\begin{equation}\label{eq:sfp}\tag{SFP}
  \left\{
    \begin{aligned}
      & \dotauxval = \rateval \left(\auxpoff + \smness \tble - \auxval\right) \\
      & \dotauxact = \rateact \left(\sbr - \auxact\right)
    \end{aligned}
  \right.
\end{equation}
where $\speedval=1$ corresponds to the equation above.

\begin{remark*}
If there is only one state (e.g the classical repeated game sitting), this procedure is exactly standard smooth fictitious play since the smooth best-response does not depend on the value of $\auxval$. 
\end{remark*}

\paragraph{Smooth fictitious play with unknown transitions and perturbed payoff}

We suppose now that payoff functions are not known, and that each stage payoff is observed with some zero-mean noise which follows a distribution that may depend on the history, the current state and actions taken by the other players. Therefore, at step $\ti$, player $\pl$ gets a random reward $\tipoff$ that is drawn according to a distribution determined by actions $\act$ and current state $\curst$ whose expectancy is $\poff[\act][\pl][\curst]$ and bounded variance conditionally on the history (see Appendix~\ref{app:sfp} for details). We also suppose that transitions are not known. Therefore, both transitions and expected payoff may be empirically estimated as follows:
\begin{equation}\label{eq:estimate}\tag{MFP.1}
 \left\{
 \begin{aligned}
 & \esttrans[\st][s'][a] = \frac{\sum_{k=0}^{\ti} 1_{\curst[k]=\st\land \act[k]=a} 1_{\curst[k+1]=s'} }{\sum_{k=0}^{\ti} 1_{\curst[k]=\st\land \act[k]=a}}\\
 & \estpoff[a] = \frac{\sum_{k=0}^{\ti} 1_{\curst[k]=\st\land \act[k]=a} \tipoff[k] }{\sum_{k=0}^{\ti} 1_{\curst[k]=\st\land \act[k]=a}}\\
\end{aligned}
 \right.
\end{equation}
Consequently, we define the estimated auxiliary payoff using these two estimators:
\begin{equation}\label{eq:sfp:estauxpoff}\tag{MFP.2}
 \estauxpoff[][\pl][\prof] := (1-\disc)\estpoff[\prof] + \delta \sum_{s' \in \states} \esttrans[\st][s'][\prof] \auxval[][\pl][s']
\end{equation}
Now we define a model-free version of smooth fictitious play, similar to \ref{eq:sfp} but using estimators:
\begin{equation}\label{eq:mfp}\tag{MFP}
 \left\{
 \begin{aligned}
& \dotauxval = \rateval \left(\estauxpoff + \smness \tble - \auxval\right) \\
 & \dotauxact = \rateact \left(\sbr - \auxact\right)
\end{aligned}
 \right.
\end{equation}
where $\sbr[\st][\cdot][]$ is defined relatively to $\estauxpoff[][\pl][][\st][\cdot]$.

\begin{theorem}[Convergence in identical-interest ergodic DSG]\label{thm:sfp:team}
In a identical-interest ergodic discounted stochastic game, if all players follow \ref{eq:sfp} (resp. \ref{eq:mfp}), their empirical actions $\auxact$ converge almost surely to the set of regularized stationary Nash equilibria and their expected vector of continuation payoffs $\auxval$ converges to the optimal continuation payoff of limiting equilibrium set.
\end{theorem}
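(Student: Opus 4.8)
The plan is the usual ``ODE method'': view \eqref{eq:sfp} (resp.\ \eqref{eq:mfp}) as a stochastic approximation scheme, identify its mean field with the continuous-time smooth best-response dynamics of Section~\ref{sec:sbrd}, prove that this flow converges globally using a potential-type Lyapunov function available because the game is identical-interest, and transfer the conclusion to the discrete process via asymptotic pseudotrajectories \citep{benaimStochasticApproximationsDifferential2005}. The model-free case reduces to the known-model one: since every $\operatorname{sbr}$ has full support (by \eqref{eq:tble}) and the game is ergodic, almost surely every (state, action profile) pair is visited infinitely often, so by the strong law of large numbers the estimators \eqref{eq:estimate} converge a.s.\ to the true transitions and (mean) payoffs, $\hat f$ in \eqref{eq:sfp:estauxpoff} converges to $f$, and \eqref{eq:mfp} coincides with \eqref{eq:sfp} up to a term tending to $0$ a.s., which is harmless for the stochastic approximation.

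Next I set up the stochastic approximation for \eqref{eq:sfp}. The iterates stay in a fixed compact set ($x^i_s$ in a simplex; $u^i_s$ updated by $u\leftarrow u+\gamma_n(\text{target}-u)$ with uniformly bounded regularized-auxiliary target), the $u$-update carries a martingale-difference noise (the zero-mean payoff perturbation, of bounded conditional variance), and the action sampling contributes another bounded martingale difference to the $x$-update. The nonstandard feature is that the state follows a Markov chain whose kernel depends on the slowly varying current profile and that the effective per-state rate of the $x$-update is $1_{s=s_n}/s^\sharp_n$: this is an \emph{asynchronous}, Markov-modulated scheme, in the spirit of the continuation-payoff updates of \citet{leslieBestresponseDynamicsZerosum2020}. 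Uniform ergodicity (the horizon $T$ does not depend on the profile) provides the uniform mixing needed for the Markov noise to average out and keeps $s^\sharp_n/n$ bounded away from $0$; the relevant stochastic-approximation theorems then show that the interpolated trajectory of $(x,u)$ is a.s.\ an asymptotic pseudotrajectory of the Section~\ref{sec:sbrd} flow, in which $u^i_s$ is driven towards $f^i_{s,u}+\epsilon h^i$ and $x^i_s$ towards $\operatorname{sbr}$, each state component at a positive rate proportional to its invariant visitation frequency. Positive coordinatewise rates only reparametrize time per coordinate and change neither the equilibria nor the internally chain-recurrent set.

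It remains to prove that this flow converges to the set of regularized stationary Nash equilibria, and here the identical-interest hypothesis is essential. Since $r^i_s=r_s$ and all players share the regularizer, the continuation vectors coincide, and for fixed $u$ the auxiliary game at each state $s$ is itself identical-interest, hence a potential game whose \emph{regularized} potential $\Phi_s(x_s,u)=f_{s,u}(x_s)+\epsilon h(x_s)$ is non-decreasing along the smooth best-response direction, strictly so away from the smooth-best-response fixed point of that auxiliary game; moreover, for fixed $x$ the right-hand side of the $u$-equation is a $\delta$-contraction whose unique fixed point $u^\star(x)$ is the regularized discounted value induced by $x$. Combining the per-state regularized potentials with a penalty controlling the gap between $u$ and $u^\star(x)$ produces a strict Lyapunov function for the coupled $(x,u)$ flow whose zero set is exactly the pairs $(x^\star,u^\star(x^\star))$ with $x^\star$ a regularized stationary Nash equilibrium (using Fink's one-shot-deviation characterization applied to the perturbed payoffs). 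LaSalle's invariance principle then gives convergence of the flow to this set. By the asymptotic-pseudotrajectory theorem the limit set of the discrete process is internally chain recurrent for the flow, hence contained in the zero set of the strict Lyapunov function, so $(x^i_s)$ converges a.s.\ to the set of regularized stationary Nash equilibria and $(u^i_s)$ to the associated continuation payoff.

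The main obstacles are, first, constructing the strict Lyapunov function for the coupled $(x,u)$ system and showing the value-gap term stays controlled while $u$ lags behind $x$ --- this is exactly where identical interest is used and where general-sum games fail; and second, the asynchronous Markov-modulated stochastic approximation, where one must establish uniform mixing over the evolving policies and a uniform positive lower bound on visitation frequencies, both consequences of ergodicity with a uniform $T$. This route yields no convergence rate, consistently with the open problem noted in the introduction.
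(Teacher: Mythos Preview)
Your overall architecture matches the paper: stochastic approximation and asymptotic pseudotrajectories reduce \eqref{eq:mfp} to the continuous system, the model-free estimators converge because $\operatorname{sbr}$ has full support and the game is ergodic, and the asynchronous Markov-modulated aspect is handled via the Perkins--Leslie extension used in \cite{baudinFictitiousPlayBestResponse2022}. So far so good.

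The divergence is in the Lyapunov argument for the coupled $(x,u)$ flow, which is the heart of the proof. You propose to combine the per-state regularized potentials $\Phi_s(x_s,u)=f_{s,u}(x_s)+\epsilon h(x_s)$ with a penalty on $\|u-u^\star(x)\|$ into a single strict Lyapunov function, and then invoke LaSalle. You do not construct this function, and the coupling makes it genuinely delicate: $\Phi_s$ depends on $u$ through $\delta\sum_{s'}P_{ss'}(x_s)u_{s'}$, so motion of $u$ feeds back into every $\Phi_s$; moreover $u$ is driven towards $\Phi(\cdot,u)$ itself rather than towards $u^\star(x)$, and on the \emph{same} timescale as $x$. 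A weighted sum of your two pieces is not obviously monotone, and this is precisely the ``value-gap stays controlled while $u$ lags behind $x$'' obstacle you flag but do not resolve.

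The paper avoids building a combined Lyapunov by a different, more elementary device. Write $\Gamma_s(t)=f_{s,u(t)}(x_s(t))+\epsilon h(x_s(t))$ and let $s_-(t)\in\arg\min_s(\Gamma_s-u_s)$. One proves the differential inequality
\[
\frac{d}{dt}\bigl(\Gamma_{s_-}-u_{s_-}\bigr)\ \ge\ (\delta-1)\,\beta(t)\,(\Gamma_{s_-}-u_{s_-})\ -\ (\text{exponentially decaying estimator error}),
\]
because the $x$-part of $\dot\Gamma_s$ is nonnegative by concavity (your potential observation) while the $u$-part is $\delta\sum_{s'}P_{ss'}\dot u_{s'}\ge \delta\beta(t)(\Gamma_{s_-}-u_{s_-})$ since $s_-$ realizes the minimum. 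This Gr\"onwall-type bound forces $\min_s(\Gamma_s-u_s)$ to be bounded below by something tending to $0$. The ICT characterization then proceeds in two steps: first, any internally chain transitive set of the autonomous flow lies in $\{\Gamma_s\ge u_s\ \forall s\}$; second, \emph{restricted to that set}, each $\Gamma_s$ is itself a Lyapunov function (once all gaps $\Gamma_{s'}-u_{s'}$ are nonnegative, every term in $\dot\Gamma_s$ is nonnegative), strictly increasing off the smooth-best-response fixed points. So the paper never needs a single global Lyapunov: it first pins down the sign of the gap, and then the regularized auxiliary payoff alone does the work. Your sketch may be completable, but the step you defer is exactly the one the paper replaces with this min-gap argument.
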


This theorem means that even if the trajectories of \ref{eq:sfp} (resp. \ref{eq:mfp}) cycle between several stationary equilibria, they all share the same optimal continuation payoff vector.

\begin{theorem}[Convergence in zero-sum ergodic DSG]\label{thm:sfp:zerosum}
In a zero-sum ergodic discounted stochastic game, if all the players follow \ref{eq:sfp} (resp. \ref{eq:mfp}) with the same initial values, their empirical actions $\auxact$ converge almost surely to the set of $D\speedval$-regularized Nash equilibria (where $D> 0$ is a constant that only depends on $G$) and their expected vector of continuation payoffs $\auxval$ converges to the corresponding continuation payoff.
\end{theorem}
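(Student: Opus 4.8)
The plan is the continuous‑to‑discrete scheme announced in the introduction. First I identify the mean‑field ODE common to \ref{eq:sfp} and \ref{eq:mfp}, prove its flow converges globally to a $D\beta$‑neighbourhood of a unique rest point, and identify that rest point with the regularized stationary Nash equilibrium of the zero‑sum DSG together with its continuation payoffs; then I transfer this to discrete time by stochastic approximation. For the first part, fix the \emph{auxiliary game}: because $r^1=-r^2$ and the two players take opposite regularizers, for any $u$ with $u^1=-u^2$ the one‑shot game with payoffs $f^i_{s,u}(\cdot)+\epsilon h^i(\cdot)$ is a finite zero‑sum game; by strict concavity and boundary‑steepness of $h^i$ (hypothesis \ref{eq:tble}) it has a unique regularized saddle point $x^\star_s(u)$, lying in a compact subset of the relative interior uniformly over bounded $u$, and a value $V^\epsilon_s(u)$, both Lipschitz in $u$. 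Since $u$ enters $f^i_{s,u}$ only through $\delta\sum_{s'}P_{ss'}(\cdot)u^i_{s'}$, the regularized Shapley operator $T^\epsilon\colon u\mapsto(V^\epsilon_s(u))_s$ is a $\delta$‑contraction for $\|\cdot\|_\infty$ and thus has a unique fixed point $\bar u$; by a standard unimprovability argument $(x^\star(\bar u),\bar u)$ is exactly the regularized stationary equilibrium with its continuation‑payoff vector, and it is the only rest point of the ODE.

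\textbf{Continuous-time convergence.} In the limit the estimators of \ref{eq:mfp} coincide with $P$ and $r$, so both systems share the mean‑field ODE $\dot x^i_s=\operatorname{sbr}^i_{s,u^i}(x_s)-x^i_s$ and $\dot u^i_s=\beta\bigl(f^i_{s,u}(x_s)+\epsilon h^i(x_s)-u^i_s\bigr)$; the steepness of $h^i$ forces $\operatorname{sbr}$, hence (after a transient) the $x$‑component, into a compact interior region on which the field is Lipschitz, so this is a well‑posed slow–fast ODE in which $u$ drifts $\beta$ times slower than $x$. For frozen $u$ the $x$‑equation is the smooth best‑response dynamics of the zero‑sum auxiliary game, which converges globally to $x^\star_s(u)$ (\citet{hofbauerGlobalConvergenceStochastic2002}); the key quantitative input is that its saddle‑point‑gap Lyapunov function $W$ is robust to the slow drift of $u$, $\dot W\le-\lambda W+C\|\dot u\|\le-\lambda W+C'\beta$, so $W$ enters and stays in an $O(\beta)$ sublevel set, i.e.\ $\|x_s-x^\star_s(u)\|=O(\beta)$. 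Feeding this into the $u$‑equation gives $\dot u^i_s=\beta\bigl(T^\epsilon(u^i)_s-u^i_s+O(\beta)\bigr)$, and since $\|u^1-\bar u\|_\infty$ is a strict Lyapunov function for $\beta(T^\epsilon-\mathrm{id})$ by the $\delta$‑contraction, $u$ converges to an $O(\beta)$‑ball around $\bar u$ and $x$ to an $O(\beta)$‑ball around $x^\star(\bar u)$. A stationary profile $O(\beta)$‑close to $x^\star(\bar u)$ is, once more by unimprovability and Lipschitz bounds, a $D\beta$‑regularized Nash equilibrium, so the global attractor of the flow is contained in that set. (The common initialization serves only to keep $u^1=-u^2$ invariant, which is what keeps the analysis inside the zero‑sum world.)

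\textbf{From discrete to continuous.} Write $z=(u,x)$; \ref{eq:sfp} and \ref{eq:mfp} are stochastic approximations $z_{n+1}=z_n+\gamma_n\bigl(F(z_n)+\xi_n+b_n\bigr)$ with two step‑size families, $\beta/(n+1)$ for $u$ and $1_{s=s_n}/s^\sharp_n$ for $x$. Ergodicity gives $\liminf_n s^\sharp_n/n>0$ almost surely, so both families are non‑summable and square‑summable and run on comparable clocks; the asynchrony of the $x$‑updates across states is absorbed by asynchronous stochastic‑approximation arguments (cf.\ \citet{leslieBestresponseDynamicsZerosum2020}). The martingale noise $\xi_n$ comes from drawing $a^i_n\sim\operatorname{sbr}$ and, for \ref{eq:mfp}, from the zero‑mean bounded‑variance observation noise in the payoffs $R^i_n$, and is controlled in the usual way. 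For \ref{eq:mfp} the bias $b_n$ stems from $\hat P_n\to P$ and $\hat r_n\to r$ almost surely: ergodicity together with the full support of $\operatorname{sbr}$ (forced by the steepness of $h^i$) makes every state–action pair recur infinitely often, so the running‑average estimators converge a.s.\ and $b_n\to0$. The iterates are a.s.\ bounded ($x_n$ lies in a product of simplices, and the estimators being a.s.\ convergent, $u_n$ is a.s.\ bounded through its recursion, using $\delta<1$). Hence the interpolated process is a.s.\ a bounded asymptotic pseudo‑trajectory of the flow above (\citet{benaimStochasticApproximationsDifferential2005,benaimConsistencyVanishinglySmooth2013}); by the limit‑set theorem its limit set is internally chain transitive, thus contained in the global attractor, hence in the set of $D\beta$‑regularized Nash equilibria --- with $u_n$ converging to the corresponding continuation payoffs --- which is the claim.

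\textbf{Main obstacle.} The heart of the proof is the continuous‑time analysis of the coupled zero‑sum dynamics: unlike the identical‑interest case there is no potential function, so the slow–fast argument can only be run with a \emph{quantitative}, order‑$\beta$ timescale separation rather than an exact one, and that slack is exactly what produces the $D\beta$ in the statement. Making the saddle‑point‑gap estimate close under a drifting $u$ --- in effect an input‑to‑state‑stability property of smooth best‑response dynamics in zero‑sum games --- is where the real work lies. A secondary technical point is controlling the state‑dependent, asynchronous clocks $s^\sharp_n$ uniformly over states (again via ergodicity) so that the pseudo‑trajectory machinery applies.
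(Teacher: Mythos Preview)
Your proposal is essentially correct and shares the paper's two-stage architecture (continuous-time Lyapunov analysis, then stochastic approximation), and it even uses the same two structural ingredients: the duality-gap Lyapunov function and the $\delta$-contraction of the regularized Shapley operator. Where you differ is in how you organize the continuous-time step. You run a genuine singular-perturbation argument: first show the saddle-point gap $W$ obeys an input-to-state bound $\dot W\le-\lambda W+C'\beta$, conclude the fast variable tracks $x^\star(u)$ to order $\beta$, then reduce the slow equation to $\dot u=\beta(T^\epsilon(u)-u+O(\beta))$ and invoke the contraction. The paper instead analyzes the coupled system directly: it proves the same differential inequality for the duality gap (its Lemma~A.9), and then, rather than passing through a reduced equation, it controls $\max_s|u_s-v_{s,u(t)}|$ by tracking two distinguished states $s_f,s_v$ and comparing their drifts (Lemmas~A.11--A.12), using the Lipschitz property of the value in $u$ (which is the contraction in disguise). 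Your route is cleaner and more modular; the paper's is more hands-on but avoids the intermediate claim that $\|x-x^\star(u)\|=O(\beta)$, which as stated would need strong concavity of $h$ and only yields $O(\sqrt\beta)$ --- in fact you never need that distance bound, only that $f_{s,u}(x)+\epsilon h(x)$ is $O(\beta)$-close to $V^\epsilon_s(u)$, which follows directly from $W=O(\beta)$.

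One point you should tighten: the mean-field limit is not an ODE but a differential \emph{inclusion}, because the per-state clocks $s^\sharp_n$ give state-dependent rates $\alpha_s(t)\in[\alpha_-,1]$ in the limit (this is what the paper's SBRD is). Consequently the relevant limit-set theorem delivers internally chain transitive sets of the inclusion, not of a single ODE, and you must check your Lyapunov inequalities uniformly over all admissible $\alpha_s\ge\alpha_-$. They do hold uniformly (the $-\lambda W$ term becomes $-\alpha_- W$), so your argument survives, but the passage ``both systems share the mean-field ODE $\dot x^i_s=\operatorname{sbr}^i_{s,u^i}(x_s)-x^i_s$'' is not quite right and the asynchrony cannot be entirely ``absorbed'' on the discrete side --- it reappears as the set-valued rate on the continuous side, exactly as in the paper.
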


To prove these results, we are going to define in the next section an associated smooth best-response continuous time counterpart to our independent learning algorithms. The proof of Theorems~\ref{thm:sfp:team} and~\ref{thm:sfp:zerosum} are sketched at the end of Section~\ref{sec:sbrd} and fully detailed in Appendix~\ref{app:sfp}.

\begin{remark}Theorem~\ref{thm:sfp:zerosum} suggests that it is possible to use a doubling-trick mechanism to converge to the set of 0-regularized stationary Nash equilibria. Indeed, players can compute the duality gap (see Appendix~\ref{app:sbr} for a definition) and decide to reduce the update rate $\beta$ every time it is below a certain threshold. 
This is a standard trick to achieve no-regret in reinforcement learning.
\end{remark}

\begin{remark}
Our rules suppose that each player observes the other player's past actions. This is an important difference with the algorithms developed in \citet{NEURIPS2020_3b2acfe2,LPOP2022} where a player observes only its own actions. On the other hand, they do not prove a time-average convergence of their trajectories, while we do, but they prove a best iterate convergence that we do not prove. 
\end{remark}

}

\section{Smooth Best-Response in Stochastic Games  (Continuous-Time Dynamics)}\label{sec:sbrd}

\paragraph{Continuous counterpart of discrete time systems} 
The continuous-time counterpart of fictitious play is best-response dynamics of \cite{matsuiBestResponseDynamics1992,harrisRateConvergenceContinuousTime1998a}. Together with the theory of stochastic approximations (\cite{benaimStochasticApproximationsDifferential2005}), proofs of convergence of best-response are key ingredients to modern proofs of convergence of fictitious play and smooth FP in repeated games (\cite{benaimStochasticApproximationsDifferential2005, benaimConsistencyVanishinglySmooth2013}) and more recently in proving the convergence of FP in stochastic games (\cite{sayinFictitiousPlayZerosum2020,baudinFictitiousPlayBestResponse2022}).

In this section, we define the continuous-time counterpart to our discrete-time algorithms. This leads us to a regularized version of the continuous-time best-response dynamics in \citet{leslieBestresponseDynamicsZerosum2020,baudinFictitiousPlayBestResponse2022}. Similarly to the previous section, our continuous time system has two sets of variables $\auxval$ which represents the belief about the continuation payoffs and $\auxact$ which corresponds to belief about the players actions. Our main continuous-time system is:
\begin{equation}\left\{
  \begin{aligned}
    \dotauxval =\ & \rateval \left(\auxpoff + \smness \tble -\auxval\right) \\
    \dotauxact =\ & \rateact \left(\sbr - \auxact\right) \\
    \rateact & \in [\ratemin, 1]
  \end{aligned}
\right.\tag{SBRD}\label{eq:sbrd}\end{equation}
\begin{remark}
  In zero-sum or identical-interest games, if players have the same initial conditions, then we can omit the superscript $\pl$ in $\auxval[]$ as they are equal.
\end{remark}

\paragraph{Update rates}
Profile $\auxact$ evolve towards the smooth best-response at a rate of $\rateact$. Variable $\rateact$ corresponds to the frequency at which a state $\st$ is visited. The fact that it is bounded below by $\ratemin > 0$ is a mathematically convenient way to exploit the ergodicity of the stochastic game.

\begin{remark}
  Note that this is different from the model of continuous-time stochastic game outlined for instance by \cite{neymanContinuoustimeStochasticGames2017}. In this paper, we are interested in using the theory of stochastic approximation, therefore discrete-time and continuous-time system are related through an exponential change of variable in time. As a consequence, we can consider that the state occupation is averaged, which is not the case in some other studies in stochastic games.
\end{remark}

Regarding continuation payoffs, they evolve towards perturbed payoffs at a rate of $\rateval$, which is typically oblivious, that is only determined by time $\ti$ and state $\st$ independently of the value of other variables. We make the following assumptions that guarantee that the update rates are not too small:
\begin{equation}
  \tag{H2}
  \label{hyp:ratesbr}
  \begin{gathered}
    \rateval \geq 0 \text{\ and\ } \rateval[]\text{\ is decreasing} \\
    \int_{0}^\infty \rateval[u]du = +\infty
  \end{gathered}
\end{equation}

\begin{remark}
  With $\rateval = \frac{1}{t+1}$, the system is close to the one outlined in \cite{leslieBestresponseDynamicsZerosum2020}. In this case, continuation payoffs are updated with a \emph{slower} timescale. Indeed, having two timescales may be useful to establish convergence or apply stochastic approximation theorems. In our paper, we do not make such an assumption, allowing the possibility that the two variables are on a single timescale.
\end{remark}

\paragraph{Model-Free System} In order to also study the convergence of the model-free version of our procedure, we also define the smooth best-response dynamics when the model is progressively learned. The estimators are defined as follows:
\begin{equation}\label{eq:cont:estimate}
  \left\{
    \begin{aligned}
      & \dotesttrans[\st][s'][b] = \rateact \act (b) \left( \trans[\st][s'][b] - \esttrans[\st][s'][b]\right)\\
      & \dotestpoff[b] = \rateact \act (b) \left( \poff[b] - \estpoff[b]\right)\\
      & \act[\ti][\pl] = \sbr \\ 
    \end{aligned}
  \right.
\end{equation}
where $\act := \Pi_{\pl \in \players}\act[\ti][\pl]$ is the profile of selected actions.

\NewDocumentCommand{\refmbrd}{}{MBRD{}\xspace}

Then, $\estauxpoff[\ti][\pl][][\st][\cdot]$ is defined as in (\ref{eq:sfp:estauxpoff}) and we obtain a system \refmbrd by using estimators in \ref{eq:sbrd}.

\paragraph{Solutions}

Systems \ref{eq:sbrd} and \refmbrd are differential inclusions. A general theory can be found in \cite{aubinDifferentialInclusionsSetValued1984}. Here, the right-hand side of each system forms a closed, set-valued map (because there are several possible values for $\rateact$) whose images are convex and uniformly bounded. Under these assumptions, it is known that such systems admit (typically non unique) solutions.


\begin{theorem}\label{thm:sbr:cv}
  Under hypothesis~\ref{hyp:ratesbr}, \refmbrd converges to the set of regularized Nash equilibria in identical-interest stochastic games. Furthermore, if $\ratelim \geq \lim\sup_{\ti \rightarrow \infty} \rateval$, then \refmbrd converges to the set of $D \ratelim$-regularized stationary Nash equilibria in zero-sum games, where $D$ is a positive constant that depends only on the game $G$. In particular if $\rateval$ goes to $0$ then \refmbrd converges to the set of regularized stationary Nash equilibria.
\end{theorem}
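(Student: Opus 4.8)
The plan is to prove the result in two stages: first reduce \refmbrd to the known-model dynamics \ref{eq:sbrd}, then establish convergence of \ref{eq:sbrd} by a Lyapunov/invariance-principle argument adapted to each class of games. For the reduction, hypothesis~\ref{eq:tble} forces $\sbr$ to take values in a compact subset of the relative interior of $\simplex{\actions[\pl]}$, uniformly over the (bounded) ranges of $\auxval$ and $\auxact$; hence there is $c>0$ such that, along any trajectory, every action profile is selected with probability at least $c^{|\players|}$ (at least eventually). Combined with $\rateact\ge\ratemin>0$ and ergodicity --- which guarantees that every state is actually visited --- the estimator equations~\eqref{eq:cont:estimate} are stable first-order relaxations toward $\trans$ and $\poff$ whose integrating factors diverge, so $\esttrans\to\trans$ and $\estpoff\to\poff$, and therefore $\estauxpoff\to\auxpoff$ uniformly. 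A standard vanishing-perturbation argument then shows that every solution of \refmbrd is an asymptotic pseudotrajectory of \ref{eq:sbrd} \cite{benaimStochasticApproximationsDifferential2005}, so it is enough to prove the convergence claims for \ref{eq:sbrd} itself.

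For the identical-interest case, all variables stay in a compact convex set, so solutions of the inclusion \ref{eq:sbrd} are bounded, and by the invariance principle for differential inclusions it suffices to exhibit a continuous function, non-increasing along every solution, whose critical set is the set of regularized stationary Nash equilibria (with $\auxval$ equal to the associated value). Since all players share payoff and regularizer, write $u$ for the common vector $\auxval[][\pl][]$ and let $\bar u(x)$ be the unique fixed point of the $\smness h$-regularized Shapley operator at the stationary profile $x$, i.e. the discounted regularized value of policy $x$. I would take the Lyapunov function to be built from this value, e.g. a combination $\langle\mu,\bar u(x)\rangle$ for a fixed strictly positive weight $\mu$ on states (its existence uses ergodicity), possibly corrected by an $\|\auxval-\bar u(x)\|$-type term accounting for the lag of $\auxval$. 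The monotonicity comes from $\dotauxact=\rateact(\sbr-\auxact)$ moving $x$ toward a smooth best reply in the auxiliary games, which by concavity of $\auxpoff[][\pl][\cdot]+\smness\tble[]$ cannot decrease the common auxiliary payoff and therefore pushes the value up, while $\dotauxval$ keeps $\auxval$ tracking $\bar u(x)$. The stationary points are exactly the profiles that smooth-best-reply to themselves with $\auxval=\bar u(x)$, i.e. the regularized stationary Nash equilibria, and $\auxval$ converges to their common value.

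For the zero-sum case, equal initial data forces $\auxval[][1][]=-\auxval[][2][]=:u$ along the whole trajectory, and the relevant object is the duality gap $G(x,u)\ge 0$ of the auxiliary matrix games, which vanishes exactly on regularized equilibria. The target estimate is that $G$ decreases along \ref{eq:sbrd} up to an error controlled by $\rateval$: because $\auxval$ relaxes toward the regularized payoff at the finite rate $\rateval$ rather than instantaneously, it lags the true value by $O(\rateval)$, and an ergodicity-based Lipschitz bound converts this payoff error into a factor $D$ depending only on $G$, yielding asymptotically $G\le D\limsup_\ti\rateval\le D\ratelim$. This gives convergence to the set of $D\ratelim$-regularized equilibria, and to the exact regularized set when $\rateval\to 0$.

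The main obstacle is exactly this coupling between the two variables, which is also what produces the residual $D\ratelim$ term. Unlike \cite{leslieBestresponseDynamicsZerosum2020}, $\auxval$ here is not on a faster timescale (hypothesis~\ref{hyp:ratesbr} only asks $\rateval$ to be decreasing with $\int_0^\infty\rateval[u]\,du=+\infty$), so one cannot first send $\auxval\to\bar u(x)$ and analyse $x$ afterwards; the Lyapunov inequalities must be derived for the joint state $(x,u)$ while absorbing the persistent $O(\rateval)$ mismatch. A secondary technical point is that \ref{eq:sbrd} and \refmbrd are differential inclusions, so every estimate must hold for all measurable selections of $\rateact\in[\ratemin,1]$, and convergence to the (possibly non-singleton) equilibrium set must be obtained through the set-valued invariance principle rather than for a single flow.
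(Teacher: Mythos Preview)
Your reduction of \refmbrd to \ref{eq:sbrd} via exponential convergence of the estimators is sound and matches Lemma~\ref{lem:sbr:cvest}; the paper simply carries the decaying error term through the later estimates rather than invoking pseudotrajectories, but this is cosmetic.

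The identical-interest argument has a genuine gap. You propose a Lyapunov function built from the policy value $\bar u(x)$, ``possibly corrected'' by a term in $\|u-\bar u(x)\|$, and claim monotonicity because the smooth best reply is an ascent direction for the regularized auxiliary payoff. But the smooth best reply is computed against the \emph{current} $u$, not $\bar u(x)$; when $u\neq\bar u(x)$ there is no reason the induced motion of $x$ increases $\bar u(x)$, and you never specify a correction that provably absorbs this mismatch --- you correctly name this coupling as the main obstacle but do not resolve it. The paper sidesteps a global Lyapunov function entirely: it tracks the state-wise gap $\Gamma_s(t)-u_s(t)$ with $\Gamma_s=\hat f_{s,u}(x_s)+\smness h(x_s)$, shows that $\min_s(\Gamma_s-u_s)$ satisfies a differential inequality (Lemma~\ref{lem:sbr:inegdiff}) whose solution is bounded below by an integrable function of~$t$, and from the integrability deduces directly that both $u_s$ and $\Gamma_s$ converge to a common limit (Lemma~\ref{lem:sbr:cvpoff}), after which the action convergence follows. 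No reference to $\bar u(x)$ or to an invariant weight $\mu$ is needed.

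In the zero-sum case your duality-gap outline matches the first half of the paper's proof (Lemmas~\ref{lem:sbr:zs:diffduality}--\ref{lem:sbr:zs:cvduality}), but a second phase is missing. A small duality gap only says the regularized auxiliary payoff $f_{s,u}(x)+\smness h$ is close to the value $v_{s,u}$ of the auxiliary game at the \emph{current} $u$; it does not say $u_s$ itself is close to $v_{s,u}$, which is what is required for $(x,u)$ to lie near a regularized stationary equilibrium of the stochastic game (one also needs the fixed-point condition $u_s\approx f_{s,u}(x_s)+\smness h$). The paper handles this separately (Lemmas~\ref{lem:usf}--\ref{lem:sbr:zs:cvval}): once the duality gap is controlled, a second Lyapunov-type argument on the distinguished states $s_f,s_v$ drives $\max_s|u_s-v_{s,u}|$ below a multiple of $\ratelim$. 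Your sketch collapses these two phases into one, and your assertion that the duality gap ``vanishes exactly on regularized equilibria'' conflates equilibria of the auxiliary game at a given $u$ with regularized stationary equilibria of the stochastic game.
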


The convergence of \refmbrd is helpful to characterize the limit set of discrete-time smooth fictitious play systems. Indeed, they are contained in the internally chain transitive sets (see Appendix~\ref{app:sfp} for a definition) of \refmbrd using the theory of stochastic approximations. Below, we sketch the proof of continuous-time Theorem~\ref{thm:sbr:cv}. The discrete-time counterpart Theorem~\ref{thm:sfp:team} and Theorem~\ref{thm:sfp:zerosum} can be deduced in a similar fashion as \cite{baudinFictitiousPlayBestResponse2022}, although there are several technical subtleties. Complete proofs of these results are in Appendix~\ref{app:sbr}.

\begin{proof}[Sketch of the proof of Theorem~\ref{thm:sbr:cv}]

The proofs for both class of games proceed quite differently: this is not surprising since there are no (to the best of our knowledge) unified convergence proof of even simple FP in potential and zero-sum games.

For identical-interest stochastic games, the key point is showing that the gap between (estimated) auxiliary payoffs $\auxpoff$ and the auxiliary values $\auxval$ is narrowing. Technically, the difference is bounded below by an integrable function, which implies that $\auxval$ converges and then that $\auxpoff$ converges and their limits are necessarily equal. Then, a study of the behavior of actions shows that they belong to the set of regularized equilibria, otherwise $\auxpoff$ could not converge (based on Lipschitzian properties of all these quantities).

Regarding zero-sum stochastic games, the first part of the proof studies a rather standard quantity called the duality gap. It goes to $0$ (or at least near $0$), which implies that the value of the auxiliary game is mostly reached by the auxiliary payoffs $\auxpoff$. Then, comparing relative speed of auxiliary values in all states leads to the convergence of these values and of other variables.
  
\end{proof}

\ifexample
\section{Example: Soccer Game}\label{sec:soccer}

\fi

\section{More on the Related Works}\label{sec:related}

\paragraph{Single-Player Markov Decision Process}
\citet{watkinsLearningDelayedRewards1989} introduced Q-learning to solve Markov Decision Process with guarantees when the environment is stationary. The central idea is to estimate the so-called Q-values for every state-action pairs that represents the continuation payoff using the following scheme:
\begin{equation}\label{eq:related:qlearning}
    Q(s, a) \leftarrow Q(s, a) + \beta(r_s(a) + \delta \max_{a'} Q(s', a') - Q(s, a))
\end{equation}
where $s$ is the current state, $a$ the action played, $s'$ the next state. Q-values for other state-action pairs are left unchanged. This simple rule does not need any information about the transitions. 


Since then, numerous extensions have been introduced, for instance double Q-learning (\cite{NIPS2010_091d584f}) or Q-learning with deep learning (\cite{hasseltDeepReinforcementLearning2016}). However, in multiagent systems, from the one player's point of view the environment is not stationary because it comprises other players that are also using some non stationary learning rule. Therefore, there is no guarantee and indeed it may not converge to the set of Nash equilibria (\cite{wunderClassesMultiagentQlearning2010,calvanoArtificialIntelligenceAlgorithmic2020}. Our paper is a way to overcome this difficulty with a different updating rule for estimating the continuation payoffs. This updating rule is closer to Expected Sarsa (\cite{suttonReinforcementLearningIntroduction2018}) where the continuation values are moved towards the \emph{expectation} of future payoffs.



\paragraph{Nash Q-learning} \cite{huNashQlearningGeneralsum2003} is another way to learn Nash equilibria based on Q-learning, supposing that players 
have access to a Nash equilibrium of the auxiliary game parameterized by the Q-values. In our paper, we do not suppose that the players have access to such an oracle.

\paragraph{Repeated games}
When there is only one state, this is the classical framework in which the players repeatedly play a fixed normal form game. \citet{leslieIndividualQLearningNormal2005} studies Q-learning in this context.
On the FP and SFP side, most of the literature assumes that the players know precisely the parameters of the problems. We only assume that the stage rewards are imperfectly observed. Hence, our results also improve upon the classic framework.

\paragraph{Fictitious Play Algorithms in Stochastic Games} \citet{brown1951iterative} and \citet{robinsonIterativeMethodSolving1951} introduced fictitious play to learn the value of repeated fixed zero-sum game in a turn based fashion (see also \citet{bergerBrownOriginalFictitious2007}. It was later studied in the usual simultaneous updating by \cite{fudenbergTheoryLearningGames1998} with numerous extensions, including smooth fictitious play. However, none of these procedures converge to equilibria in the repetition of the zero-sum normal form game $\Gamma$ induced by the stationary strategies of our DSG, mainly because the payoff functions of $\Gamma$, on a player own-strategy, are neither linear, nor concave \cite{daskalakisComplexityConstrainedMinmax2021} and so new idea are necessary to overcome this difficulty. 


\citet{vriezeFictitiousPlayApplied1982} studied fictitious play in the context of varying stage games. However, similarly to the original fictitious play of \citeauthor{robinsonIterativeMethodSolving1951} and \citeauthor{robinsonIterativeMethodSolving1951}, it is designed as a way to compute the value of a zero-sum game, and it is not a behavioral strategy (i.e., there is no current state, the goal is to compute the minmax of a series of matrices). More recently, \citet{leslieBestresponseDynamicsZerosum2020} proposed a best-response dynamics in continuous time which converges to stationary Nash equilibria in zero-sum DSG. The convergence was extended to discrete-time procedures and to identical-interest discounted stochastic games by \citet{baudinFictitiousPlayBestResponse2022}. These discrete-time and continuous-time systems are close to our systems, but they only work in model-based settings with no perturbation, whereas ours is suited for model-free settings and uses smooth best-responses.  \citet{sayinFictitiousPlayZerosum2020} introduced a fictitious play algorithm for zero-sum stochastic games in discrete time with a continuation up-dating rule close to Q-learning: both players update a Q-table with an entry per state-action pair whereas our continuation payoff are only indexed by the state. It works in a model-free setting, but with exact best-responses and it was proved to converge only in zero-sum SG.

\paragraph{Smooth or Regularized Learning} Learning using regularizers is a widespread technique in machine learning, for instance with "follow the perturbed leader" (\cite{cesa-bianchiPredictionLearningGames2006}) or so-called stochastic fictitious play (\cite{fudenbergConsistencyCautiousFictitious1995}). It makes it possible to have no-regret (\cite{perchetApproachabilityRegretCalibration2014a,shalev-shwartzOnlineLearningOnline2011a}) and from another point of view it is a substitute to the simpler $\epsilon$-greedy exploration scheme in reinforcement learning (\cite{suttonReinforcementLearningIntroduction2018}).

\paragraph{Local vs Global} Our algorithms lead to all the players globally optimizing. Other algorithms such as COMA (\cite{foersterCounterfactualMultiagentPolicy2018}) lead to local optima, and so belong to another line of work.

\section{Conclusion}

We defined a number of decentralized continuous and discrete time systems with exploration to learn its own payoff function, the transition probability function, and which converge to regularized (and so $\varepsilon$-approximate) stationary Nash equilibria in discounted stochastic games. This is an area that has not been widely studied and as such, a number of questions remain to be answered--either regarding our systems or regarding other recent decentralized learning algorithms in stochastic games.

First, the theory of stochastic approximations gives no clue about the speed of convergence of the corresponding discrete-time algorithms. Moreover, even in continuous time, this question remains open even for potential games \cite{harrisRateConvergenceContinuousTime1998a}. Therefore, an interesting question in both the stochastic and the repeated game setting is, which guarantee FP and SFP have regarding the rate of convergence?

Second, and inspired by the convergence of vanishing fictitious play to equilibria in classical repeated games (see \citet{benaimConsistencyVanishinglySmooth2013,hadikhanlooLearningNonatomicGames2021}), it would be of interest to study a vanishing version of our algorithm with a parameter $\varepsilon(t)$ that goes to zero in a way that guarantees the convergence to (exact) stationary Nash equilibria of the DSG. 

Third, convergenve of FP or SFP in repeated game extends to infinite action games and to non atomic games. It would be of interest to relaxe our finiteness assumption, at least for the action sets.

A challenging open problem is the design of independent learning algorithms that converge to stationary equilibria in ergodic zero-sum and identical-interest DSG without the knowledge of the other player's past actions. The  Projected Gradient Methods of \citet{NEURIPS2020_3b2acfe2,LPOP2022} have this minimal information property but their results are proved only episodic SG.

Last but not least, obtaining a last iterate convergence instead of a time average convergence (as in our paper) or best iterate convergence (as in \citet{NEURIPS2020_3b2acfe2,LPOP2022}) is another interesting direction.

\bibliography{these}

\newpage

\appendix

\section{Convergence of smooth best-response dynamics in identical-interest and zero-sum stochastic games}\label{app:sbr}

In this appendix, we provide detailed proofs of results of Section~\ref{sec:sbrd}. Since solutions of differential inclusion~\ref{eq:sbrd} are special solutions of differential inclusion~\refmbrd, it is sufficient to study \refmbrd. In what follows, unless otherwise specified, $\ti \mapsto (\auxact[], \auxval[][\pl], \esttrans[\st][][][], \estpoff[][\pl][\st][])_{\st, \pl}$ is a solution of \refmbrd.

We start with general properties of the smooth best response, convergence of estimates and regularity of solutions and then proceed with the study of two special cases: identical-interest stochastic games (subsection~\ref{app:sbr:ii}) and zero-sum stochastic games (subsection~\ref{app:sbr:zs}).

\subsection{Properties of the smooth best-response}

\NewDocumentCommand{\minsbr}{}{\eta}
\NewDocumentCommand{\poffbound}{}{B}

{
    \RenewDocumentCommand{\auxval}{O{} O{} O{}}{u}
    \RenewDocumentCommand{\auxact}{O{} O{} O{\st}}{x_{#3}}
    \RenewDocumentCommand{\act}{}{\purep[\pl][\st]}

    \begin{lemma}
        \label{lem:sbr:cont}
        Function $(\auxval, \auxact) \mapsto \sbr$ is continuous in $\auxval$ and $\auxact$.
    \end{lemma}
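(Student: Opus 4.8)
Write $(u,x)$ for the two arguments of the map appearing in the statement — $u\in\R^\states$ the continuation payoff and $x$ the (mixed) profile at state $\st$ — and $x^{-\pl}$ for the opponents' part of $x$. By \eqref{eq:sfp:sbr}, $\operatorname{sbr}^{\pl}_{\st,u}(x)$ is the unique maximizer over $y\in\simplex{\actions[\pl]}$ of $\Phi_{u,x}(y):=f^{\pl}_{\st,u}(y,x^{-\pl})+\smness\,h^{\pl}(y,x^{-\pl})$, where $f^{\pl}_{\st,u}(a)=(1-\disc)\,r^{\pl}_{\st}(a)+\disc\sum_{\stb\in\states}P_{\st\stb}(a)\,u(\stb)$ is the auxiliary payoff, extended multilinearly to mixed profiles (in the model-free system one uses its estimate $\hat f^{\pl}$ instead; nothing below changes). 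The plan is to apply Berge's maximum theorem after cutting the feasible simplex down to a compact subset of its relative interior on which the objective is jointly continuous.

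First I would record the regularity of $\Phi$: the map $(y,u,x)\mapsto f^{\pl}_{\st,u}(y,x^{-\pl})$ is a polynomial in the coordinates of $(y,x)$ and affine in $u$, hence jointly continuous, and it is linear in $y$; by \ref{eq:tble}, $h^{\pl}$ is $C^1$ — thus jointly continuous — on $\operatorname{int}\mixedactions$ and strictly concave in $y$. Hence $\Phi_{u,x}$ is strictly concave in $y$ (so $\operatorname{sbr}^{\pl}_{\st,u}(x)$ is well defined), and the steepness in \ref{eq:tble} puts $\operatorname{sbr}^{\pl}_{\st,u}(x)$ into the relative interior of $\simplex{\actions[\pl]}$. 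The step I expect to be the main obstacle is the \emph{uniform} version of this last fact: for every compact $K\subset\R^\states\times\operatorname{int}\mixedactions$ there is $\minsbr=\minsbr(K)>0$ with $\operatorname{sbr}^{\pl}_{\st,u}(x)(a)\ge\minsbr$ for all $a\in\actions[\pl]$ and $(u,x)\in K$. It is delicate precisely because \ref{eq:tble} controls only the \emph{gradient} of $h^{\pl}$ near $\partial\simplex{\actions[\pl]}$, not $h^{\pl}$ itself, so this reduction genuinely requires an argument and is not cosmetic. I would prove it by contradiction: along a sequence $(u_k,x_k)\in K$ (convergent after extraction, by compactness) whose responses $y_k:=\operatorname{sbr}^{\pl}_{\st,u_k}(x_k)$ have some coordinate tending to $0$, the limit $\bar y$ of the $y_k$ lies on $\partial\simplex{\actions[\pl]}$; but the first-order optimality condition on the simplex makes $\nabla_y f^{\pl}_{\st,u_k}+\smness\,\nabla_y h^{\pl}$ constant across the coordinates of $\actions[\pl]$ at $y_k$, while $\nabla_y f^{\pl}_{\st,u_k}$ stays bounded over $K$ and $\|\nabla_y h^{\pl}(y_k,x_k^{-\pl})\|\to\infty$ with $\nabla_y h^{\pl}$ pushing mass into the simplex near its boundary — so one could strictly raise $\Phi$ by nudging a little mass off a vanishing face, contradicting maximality of $y_k$. (This is just the uniform form of the classical interiority property of steep, Legendre-type regularizers.)

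Granting the estimate, I would finish as follows. Fix a point of the domain, take a compact neighbourhood $K$ of it inside $\R^\states\times\operatorname{int}\mixedactions$, and set $\minsbr:=\minsbr(K)$. On $K$, maximizing $\Phi$ over $\simplex{\actions[\pl]}$ is the same as maximizing it over the \emph{fixed} compact convex set $\Delta_{\minsbr}:=\{\,y\in\simplex{\actions[\pl]}:y(a)\ge\minsbr/2\ \text{for all}\ a\in\actions[\pl]\,\}$, and on $\Delta_{\minsbr}\times K$ the objective $\Phi$ is jointly continuous, since $y$, $x$ and hence $x^{-\pl}$ stay away from the boundaries of the simplices. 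Berge's maximum theorem then gives that the argmax correspondence $K\rightrightarrows\Delta_{\minsbr}$ is upper hemicontinuous with nonempty compact values; strict concavity makes it single-valued, hence a continuous function on $K$. As the base point was arbitrary, $(u,x)\mapsto\operatorname{sbr}^{\pl}_{\st,u}(x)$ is continuous. (One may also bypass Berge: any accumulation point of $\operatorname{sbr}^{\pl}_{\st,u_k}(x_k)$, along $(u_k,x_k)\to(u_0,x_0)$ in $K$, maximizes $\Phi_{u_0,x_0}$ over $\Delta_{\minsbr}$ — pass to the limit in the optimality inequalities using joint continuity of $\Phi$ on $\Delta_{\minsbr}$ — hence equals $\operatorname{sbr}^{\pl}_{\st,u_0}(x_0)$ by uniqueness; compactness and a single accumulation point give convergence.)
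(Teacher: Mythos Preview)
Your proof follows the same core idea as the paper's --- apply the maximum theorem to obtain an upper hemicontinuous argmax correspondence, then use strict concavity of $h^{\pl}$ to get single-valuedness and hence continuity --- but you supply far more detail. The paper's proof is two sentences and does not address the potential failure of joint continuity of $h^{\pl}$ at the boundary of the simplex, which is the technical point you spend most of your effort on. Your uniform-interiority step is essentially the content of the paper's \emph{next} lemma (Lemma~\ref{lem:sbr:minsbr}), which the paper proves separately and does not invoke for continuity; you have folded the two together, and this is arguably the cleaner way to make the maximum-theorem application rigorous under the bare hypothesis~\ref{eq:tble}. One small caveat: by restricting to $K\subset\R^{\states}\times\operatorname{int}\mixedactions$ you formally establish continuity only at interior $x$, whereas the lemma as stated (and the paper's Lemma~\ref{lem:sbr:minsbr}) allows boundary $x$ as well; this is harmless for the dynamics, where $x_s(t)$ remains interior, but is worth flagging.
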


    \begin{proof}
        It follows from a simple application of the maximum theorem that the smooth best-response is upper hemicontinuous as a set-valued map. The strict concavity of $\tble[]$ implies that there is a single profile that maximizes the smooth auxiliary payoff. Therefore $\tble[]$ is continuous as a single-valued application.
    \end{proof}

    \begin{lemma}\label{lem:sbr:minsbr}
        For $\poffbound > 0$, there exists $\minsbr > 0$ such that for all $x \in \Pi_{j\in \players} \simplex{\actions[j]}$, $\auxval \in \R^\states$ such that $\|u\|_{\infty} \leq \poffbound$ and $\act \in \actions[\pl]$:
        $$\sbr[\st](\act) \geq \minsbr$$
    \end{lemma}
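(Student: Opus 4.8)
The plan is to reduce the claim to two facts: (i) the smooth best response $\operatorname{sbr}^i_{s,u}(x)$ always lies in the \emph{relative interior} of $\simplex{\actions[\pl]}$, hence assigns strictly positive probability to every pure action; and (ii) this probability depends continuously on $(x,u)$, which ranges over a compact set once $\|u\|_\infty\le B$. A positive continuous function on a compact set attains a positive minimum, and there are only finitely many triples $(s,i,a)$, which then produces a single $\eta>0$.

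\emph{Step 1: interiority.} Fix $s,i$, a vector $u$ with $\|u\|_\infty\le B$, and a profile $x$, and set $g(y):=f^i_{s,u}(y,x^{-i})+\epsilon\,h^i(y,x^{-i})$ for $y\in\simplex{\actions[\pl]}$; by strict concavity of $h^i$ this $g$ has a unique maximizer, namely $\operatorname{sbr}^i_{s,u}(x)$. Note $f^i_{s,u}(\cdot,x^{-i})$ is affine with gradient bounded by a constant depending only on $B$ and $G$ (since $|r^i_s|$ is bounded by a game constant and $\|u\|_\infty\le B$; the same is true of the estimated auxiliary payoff $\hat f^i$ used in \refmbrd), while $h^i(\cdot,x^{-i})$ is nonnegative, concave, $C^1$ on the interior, bounded above (a finite concave function on the compact simplex is bounded above), and satisfies $\|\nabla h^i(\cdot,x^{-i})\|\to+\infty$ at the relative boundary. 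Suppose for contradiction the maximizer $y^\star$ lay on a proper face; take the barycenter $z$ and set $y_t:=(1-t)y^\star+tz$, interior for $t\in(0,1]$. I would show $\frac{d}{dt}g(y_t)\to+\infty$ as $t\to0^+$, which, since $t\mapsto g(y_t)$ is concave and $\liminf_{t\to0^+}g(y_t)\ge g(y^\star)$ by concavity of $h^i$ along the segment, gives $g(y_t)>g(y^\star)$ for small $t>0$ and contradicts maximality. The $f^i$-part of $\frac{d}{dt}g(y_t)$ is a bounded constant; for the $h^i$-part, write $v_t:=\nabla h^i(y_t,x^{-i})$ and $\hat v_t:=v_t/\|v_t\|$. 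The supergradient inequality gives $\langle\hat v_t,w-y_t\rangle\ge (h^i(w)-h^i(y_t))/\|v_t\|\ge-(\sup h^i)/\|v_t\|$ for all $w\in\simplex{\actions[\pl]}$; since $y_t\to y^\star$ forces $\|v_t\|\to+\infty$, passing to a subsequence with $\hat v_t\to\hat v$ yields $\langle\hat v,w-y^\star\rangle\ge0$ for every $w$, i.e.\ $y^\star$ minimizes the non-zero linear form $\langle\hat v,\cdot\rangle$ over the simplex, so $\langle\hat v,z-y^\star\rangle>0$; multiplying back, $\langle v_t,z-y^\star\rangle=\|v_t\|\,\langle\hat v_t,z-y^\star\rangle\to+\infty$. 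Hence $\operatorname{sbr}^i_{s,u}(x)(a)>0$ for every $a\in A^i$.

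\emph{Step 2: uniformity.} Fix $s,i,a$. By Lemma~\ref{lem:sbr:cont}, $(x,u)\mapsto\operatorname{sbr}^i_{s,u}(x)(a)$ is continuous; by Step~1 it is strictly positive; and, restricted to $\|u\|_\infty\le B$, its domain $\big(\prod_{j\in\players}\simplex{\actions[j]}\big)\times\{u\in\R^\states:\|u\|_\infty\le B\}$ is compact. It therefore attains a positive minimum $\eta_{s,i,a}>0$, and since $\states$, $\players$ and the action sets are finite, $\eta:=\min_{s,i,a}\eta_{s,i,a}>0$ is as required.

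\emph{Main obstacle.} Step~2 is routine, so the real work is Step~1, and the delicate point there is that the mere blow-up of $\|\nabla h^i\|$ at the boundary genuinely forces the inward directional derivative $\langle\nabla h^i(y_t),z-y^\star\rangle$ to diverge to $+\infty$, not just to stay bounded below; extracting this is exactly what the normalization-and-subsequence step does. It is also why one must read $\nabla h^i$ as the gradient within the affine hull of the simplex (so that $\hat v$ is a non-zero tangent vector, hence not constant on $\simplex{\actions[\pl]}$), and why one invokes the elementary fact that a finite concave function on a compact convex set is bounded above.
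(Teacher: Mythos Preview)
Your proof is correct and follows the same two-step route as the paper: first rule out boundary maximizers by moving along a segment toward an interior point and exploiting the blow-up of $\nabla h^i$, then invoke continuity (Lemma~\ref{lem:sbr:cont}) together with compactness of the parameter set $\{\|u\|_\infty\le B\}\times\prod_j\Delta(A^j)$ to obtain a uniform positive lower bound. Your Step~1 is in fact more careful than the paper's, which simply asserts that the slope along the segment is infinite ``because the norm of the gradient goes to $\infty$ and $x$ is interior''; your normalization-and-subsequence argument (showing any limit direction $\hat v$ makes $y^\star$ a minimizer of $\langle\hat v,\cdot\rangle$ on the simplex, hence $\langle\hat v,z-y^\star\rangle>0$) supplies precisely the missing justification that the \emph{directional} derivative toward the interior diverges to $+\infty$.
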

    \begin{proof}
        This is a classical property of the smooth best-response under assumptions~(\ref{eq:tble}).  Point $\sbr$ is a maximum of the function $\profb[\pl] \mapsto \auxpoff[][\pl][\profb[\pl], \prof[-\pl]] + \smness \tble[\profb[\pl], \prof[-\pl]]$.
        However, let $\prof$ be an interior point of the simplex and $(\profb[\pl], \prof[-\pl])$ be on the boundary for player $\pl$. Then, the composition of the linear interpolation between $(\profb[\pl], \prof[-\pl])$ and $\prof$ and $\prof \mapsto \auxpoff[][\pl][\prof] + \smness \tble[\prof]$ is concave because both functions are. However, the slope of this composed function is infinite (because the norm of the gradient goes to $\infty$ and $\prof$ is interior), therefore it goes to $-\infty$ (otherwise it cannot be concave), which implies that $(\profb[\pl], \prof[-\pl])$ cannot be a maximum. Then, by compacity, smooth best-response are away from the boundary of the simplex.
    \end{proof}

}

\subsection{Convergence of estimates $\esttrans[][][][]$ and $\estpoff[][][][]$}

    \NewDocumentCommand{\cvestbound}{}{C}

    \begin{lemma}\label{lem:sbr:cvest}
        There exists $\cvestbound > 0$ such that for all states $\st$ action profiles $\purepb \in \actions$ and $\ti \geq 0$,
        $$|\estpoff[\purepb] - \poff[\purepb]| \leq \cvestbound \exp(-\minsbr \ratemin \ti)$$
        $$|\esttrans[\st][\stb][\purepb] - \trans[\st][\stb][\purepb]| \leq \cvestbound \exp(-\minsbr \ratemin \ti)$$
    \end{lemma}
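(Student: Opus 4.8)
The plan is to write the solutions of the two estimator equations in \eqref{eq:cont:estimate} in closed form and to lower bound the resulting exponential rate, using the uniform lower bound $\ratemin$ on the visitation rates $\rateact$ together with the interior lower bound $\minsbr$ on the smooth best response from Lemma~\ref{lem:sbr:minsbr}.

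First I would fix any solution of \refmbrd, a state $\st$, a player $\pl$, a successor state $\stb$, and an action profile $\purepb \in \actions$, and set $g(\ti) := \estpoff[\purepb] - \poff[\purepb]$ and $\tilde g(\ti) := \esttrans[\st][\stb][\purepb] - \trans[\st][\stb][\purepb]$. Since $\poff[\purepb]$ and $\trans[\st][\stb][\purepb]$ are constants, the two estimator equations of \eqref{eq:cont:estimate} show that $g$ and $\tilde g$ are absolutely continuous and satisfy, for a.e.\ $\ti$, the same scalar linear equation $\dot g(\ti) = -\,\rateact[\ti]\,\act[\ti](\purepb)\,g(\ti)$, where $\act[\ti](\purepb) = \prod_{\pl\in\players}\sbr(\purepb[\pl][])$ is the probability that the profile $\purepb$ is selected at time $\ti$ along the solution. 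Integrating the factor yields
\[
 g(\ti) \;=\; g(0)\,\exp\!\left(-\int_0^{\ti}\rateact[\tau]\,\act[\tau](\purepb)\,d\tau\right),
\]
and the identical formula for $\tilde g$.

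Next I would bound the exponent from below uniformly in the solution. One has $\rateact[\tau] \ge \ratemin$ by the constraint $\rateact \in [\ratemin,1]$ in \eqref{eq:sbrd}. Moreover the continuation payoffs stay uniformly bounded (by some $\poffbound$ in sup-norm) along any solution — which follows in a standard way from the update equation $\dotauxval = \rateval(\estauxpoff + \smness\tble - \auxval)$ and the boundedness of the stage rewards and of $\tble[][]$ — so Lemma~\ref{lem:sbr:minsbr} applies at every time $\tau$ and gives $\act[\tau](\purepb) \ge \minsbr > 0$ for all $\tau$ (absorbing the finite product over players into the constant $\minsbr$). Hence $\int_0^{\ti}\rateact[\tau]\,\act[\tau](\purepb)\,d\tau \ge \minsbr\,\ratemin\,\ti$, so $|g(\ti)| \le |g(0)|\,\exp(-\minsbr\ratemin\ti)$ and $|\tilde g(\ti)| \le |\tilde g(0)|\,\exp(-\minsbr\ratemin\ti)$. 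Finally, the stage rewards of the finite game are uniformly bounded, which bounds the initial reward gap $|g(0)|$; the initial transition estimate and the true transition probability both lie in $[0,1]$, so $|\tilde g(0)| \le 1$; taking $\cvestbound$ to be the largest of these finitely many initial gaps (and at least $1$) over $\st,\stb,\purepb,\pl$ concludes.

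I do not expect a deep obstacle here; this is essentially the integrating factor of a scalar linear ODE. The one point that requires care is that the lower bound $\act[\tau](\purepb) \ge \minsbr$ must hold at \emph{every} $\tau \in [0,\ti]$, not merely at the endpoint, which is exactly why one first establishes the a priori uniform bound on the continuation payoffs along solutions before invoking Lemma~\ref{lem:sbr:minsbr}.
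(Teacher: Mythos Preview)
Your proposal is correct and follows essentially the same approach as the paper: the paper writes the same scalar linear ODE for the error, invokes Lemma~\ref{lem:sbr:minsbr} together with $\rateact \ge \ratemin$ to get $\dot r \le -\minsbr\ratemin\, r$, and concludes by Gr\"onwall (which here is just the integrating factor you wrote explicitly). Your extra care in noting that Lemma~\ref{lem:sbr:minsbr} requires an a priori bound on the continuation payoffs is well placed; the paper establishes that bound separately in Lemma~\ref{lem:sbr:bounded}.
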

    \begin{proof}{
        \NewDocumentCommand{\f}{O{\ti}}{r\braced{#1}}
        Let $\st \in \states$ and $\purepb \in \actions$.\newline We write $\f = \left|\estpoff[\purepb] - \poff[\purepb]\right|$, so as $\dot {\f[]} = -\rateact \act(\purepb)\f$.

        Then we can deduce from Lemma~\ref{lem:sbr:minsbr} and the fact that $\rateact\geq \ratemin$ that $\dot {\f[]} \leq - \minsbr \ratemin \f$.

        Therefore, Grönwall Lemma implies that $\f \leq \f[0] \exp(-\minsbr \ratemin \ti)$. The same proof is valid for~$\esttrans$.
    }\end{proof}

\subsection{Regularity of solutions}
    \begin{lemma}\label{lem:sbr:bounded}
        Functions $\auxval[][\pl]$, $\auxact[]$, $\estauxpoff[][\pl][]$ are bounded.
    \end{lemma}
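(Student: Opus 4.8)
The plan is to bound the three quantities in turn, using the structure of the differential inclusion \refmbrd and the exponential convergence of the estimators from Lemma~\ref{lem:sbr:cvest}. First I would observe that $\auxact[]$ is trivially bounded: since $\dotauxact = \rateact(\sbr - \auxact)$ with $\rateact \in [\ratemin,1]$ and $\sbr$ lies in the simplex $\simplex{\actions[\pl]}$, the set $\Pi_{j}\simplex{\actions[j]}$ is forward-invariant (the velocity always points into the simplex on the boundary), so $\auxact[] \in \mixedactions^\states$ for all $\ti \geq 0$, hence bounded by~$1$ in any norm.

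Next I would bound $\auxval[][\pl]$. The key is that $\dotauxval = \rateval(\estauxpoff + \smness \tble - \auxval)$, and the driving term $\estauxpoff + \smness \tble$ can itself be bounded in terms of $\|\auxval[][\pl]\|_\infty$: by Lemma~\ref{lem:sbr:cvest} the estimates $\estpoff, \esttrans$ stay within a fixed constant of the true $\poff, \trans$ (indeed they converge), so $|\estpoff| \leq \|r\|_\infty + C$ and $\esttrans$ is a near-probability vector; since $\tble$ takes values in a compact set (it is $C^1$ on the interior of a compact simplex and the relevant $\sbr$ values stay in a compact subset away from the boundary by Lemma~\ref{lem:sbr:minsbr}, so $\tble[\sbr]$ is bounded, and more simply $0 \le \tble \le \max \tble$), one gets
\[
  \bigl| \estauxpoff[][\pl][\auxact] + \smness\tble[\auxact] \bigr| \;\leq\; (1-\disc)(\|r\|_\infty + C) + \disc(1 + C')\|\auxval[][\pl]\|_\infty + \smness \|\tble\|_\infty \;=:\; K_1 + \disc' \|\auxval[][\pl]\|_\infty,
\]
with $\disc' < 1$ (absorbing the $(1+C')$ factor is fine once $\ti$ is large enough that the transition estimate is genuinely substochastic up to a small error; alternatively one normalizes $\esttrans$). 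Then, writing $g(\ti) = \|\auxval[][\pl](\ti)\|_\infty$ and using $\rateval \geq 0$ together with the componentwise comparison $\diff{}\,|\auxval[\pl][s]| \leq \rateval(|{\rm driving}| - |\auxval[\pl][s]|)$ wherever differentiable, Grönwall's lemma applied with the contraction factor $(1-\disc')\rateval \geq 0$ gives $g(\ti) \leq \max\{g(0), K_1/(1-\disc')\}$ for all $\ti$ — the standard "discounting makes the Shapley-type update a contraction" argument. Hence $\auxval[][\pl]$ is bounded.

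Finally, with $\auxval[][\pl]$ and $\auxact[]$ both bounded and the estimators bounded (Lemma~\ref{lem:sbr:cvest}) and $\tble$ bounded, the formula \eqref{eq:sfp:estauxpoff} for $\estauxpoff[][\pl][\auxact]$ exhibits it as a finite combination of bounded quantities, so it is bounded too. The main obstacle is the self-referential bound on $\auxval[][\pl]$: one must be careful that the coefficient multiplying $\|\auxval[][\pl]\|_\infty$ in the driving term is strictly below~$1$, which is where the discount factor $\disc \in (0,1)$ is essential and where the (mild) error in $\esttrans$ not being exactly a probability vector must be controlled — either by noting it is exponentially small and choosing $\ti$ large, handling the finite initial segment separately by continuity, or by defining $\esttrans$ to be a genuine element of $\simplex{\states}$ from the start. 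Once that contraction is secured, everything else is a routine application of Grönwall and boundedness of continuous functions on compact sets.
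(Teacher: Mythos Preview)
Your proposal is correct and follows essentially the same approach as the paper: the simplex is forward-invariant for $\auxact$, and the discount factor $\disc<1$ makes the driving term for $\auxval$ a contraction in $\|\auxval\|_\infty$, from which boundedness of $\estauxpoff$ follows. The paper's own proof is terser (a bare invariance argument rather than Gr\"onwall) and in fact glosses over both the regularizer $\smness\tble$ and the distinction between $\estpoff,\esttrans$ and $\poff,\trans$; your treatment of those points is more careful, and your worry about $\esttrans$ not being exactly stochastic is easily resolved since the ODE \eqref{eq:cont:estimate} preserves $\sum_{\stb}\esttrans[\st][\stb][b]=1$ whenever the initial estimate lies in $\simplex{\states}$.
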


    \begin{proof}

        Because of the definition of the derivative of $\auxact[]$, it stays in the simplex and as such it is bounded.

        It is clear that $\estauxpoff \leq (1-\disc)\|\poff[]\|_\infty + \disc \|\auxval[\ti][][\pl]\|_\infty$. Therefore, as long as $\|\auxval[\ti][\pl][]\|_\infty$ is lower than $\|\poff[]\|_\infty$, then $\estauxpoff \leq \|\poff[]\|_\infty$. By definition of the derivative of $\auxval[][\pl]$, this implies that $\auxval[][\pl]$ is always smaller than $\|\poff[]\|_\infty$ assuming it is true for the initial value.
    \end{proof}

    \begin{lemma}\label{lem:sbr:lip}
        Functions $\auxval[]$, $\auxact[]$, $\estauxpoff[][\pl][]$ are Lipsichitz.
    \end{lemma}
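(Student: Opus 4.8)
The plan is to reduce the statement to a uniform bound on time derivatives. Solutions of \refmbrd are absolutely continuous and its right-hand side is uniformly bounded in $t$, so it suffices to show that, along any solution, the time derivatives of the coordinates of $\auxact[]$ and $\auxval[][\pl]$ are bounded by a constant, and then to deduce the Lipschitz property of $\estauxpoff[][\pl][]$ from the fact that it is a fixed multilinear expression in quantities already known to be bounded and Lipschitz. So the whole argument comes down to differentiating the defining equations of \refmbrd and of the estimator dynamics~(\ref{eq:cont:estimate}) and bounding each factor.

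First I would collect the boundedness inputs. By Lemma~\ref{lem:sbr:bounded}, $\auxact[]$, $\auxval[][\pl]$ and $\estauxpoff[][\pl][]$ are bounded; moreover the same inward-drift argument used there and in Lemma~\ref{lem:sbr:cvest} shows that $\esttrans[\st][s'][b]$ stays in $[0,1]$ and that $\estpoff[b]$ stays in a fixed ball whose radius depends only on $\|\poff[]\|_\infty$ and the initial estimate. The rate $\rateact$ lies in $[\ratemin,1]\subseteq[0,1]$, the selected profile $\sbr$ lies in a product of simplices, and $\rateval$ is bounded: by~(\ref{hyp:ratesbr}) it is decreasing, hence bounded by $\rateval[0]$, which is finite in our instances (e.g.\ $\rateval=1/(t+1)$) --- and if one does not want to assume $\rateval[0]<\infty$, monotonicity still yields the conclusion on every interval $[t_0,\infty)$ with $t_0>0$. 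Finally the regularizer $\tble[]$ is bounded on the compact set $\Pi_{\pl\in\players}\simplex{\actions[\pl]}$: by~(\ref{eq:tble}) it is nonnegative, and a nonnegative concave function on a compact convex set attains its maximum, hence is also bounded above --- its unboundedness concerns only the gradient, not the values.

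With these inputs the derivative estimates are immediate. The derivatives $\dotesttrans[\st][s'][b]$ and $\dotestpoff[b]$ are products of factors each bounded by a constant, so $\esttrans$ and $\estpoff$ are Lipschitz. In $\dotauxval=\rateval(\estauxpoff+\smness\tble-\auxval)$ the prefactor is at most $\rateval[0]$ and the bracket is bounded, so $\auxval[][\pl]$ is Lipschitz; similarly in $\dotauxact=\rateact(\sbr-\auxact)$ the prefactor is at most $1$ and the bracket is bounded, so $\auxact[]$ is Lipschitz. For $\estauxpoff[][\pl][]$, which by~(\ref{eq:sfp:estauxpoff}) is a fixed multilinear expression in $\estpoff$, $\esttrans$, $\auxval[][\pl]$ and the coordinates of $\auxact[]$, I would then invoke the elementary fact that a finite sum of products of bounded Lipschitz functions is bounded and Lipschitz: all of these ingredients are bounded and Lipschitz (the first three just established, the coordinates of $\auxact[]$ because $\auxact[]$ is), hence so is $\estauxpoff[][\pl][]$.

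The points that need a little care --- the nearest thing to an obstacle here --- are that $\tble[]$ is genuinely bounded near $\partial\simplex{\actions[\pl]}$ despite its exploding gradient (handled by the concavity-plus-nonnegativity argument above), and that no circularity creeps in: the Lipschitz bound for $\estauxpoff[][\pl][]$ must be derived \emph{from} that of $\auxval[][\pl]$, which is legitimate precisely because the Lipschitz bound on $\auxval[][\pl]$ uses only the \emph{boundedness} of $\estauxpoff[][\pl][]$ (Lemma~\ref{lem:sbr:bounded}), not its regularity. Everything else is routine bookkeeping.
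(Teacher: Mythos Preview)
Your proposal is correct and follows the same route as the paper: bound the (almost-everywhere) time derivatives using the bounded rates $\rateval,\rateact$ together with Lemma~\ref{lem:sbr:bounded}, then conclude Lipschitz continuity. The paper's own proof is a one-line appeal to ``derivatives are bounded by composition''; you have simply unpacked this, including two details the paper leaves implicit --- that $\tble[]$ is bounded on the compact action space (needed to bound $\dotauxval$) and that the argument is non-circular because Lipschitzness of $\auxval[][\pl]$ only requires \emph{boundedness} of $\estauxpoff[][\pl][]$.
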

    \begin{proof}
        All functions are differentiable almost everywhere. The derivatives are bounded by composition since $\rateval$ and $\rateact$ are bounded and because of Lemma~\ref{lem:sbr:bounded}.
    \end{proof}

\subsection{Convergence in identical-interest games}\label{app:sbr:ii}{
    \NewDocumentCommand{\abbrauxpoff}{O{\ti} O{} O{} O{\st} O{}}{\Gamma_{#4}\braced{#1}}
    \NewDocumentCommand{\stm}{O{\ti}}{s_-\braced{#1}}
    \NewDocumentCommand{\argmin}{m}{\arg\min_{#1}}
    In identical-interest games, all payoff functions are equal. Therefore, assuming the same initial conditions for all players $\pl$ for $\auxval[][\pl]$, we have $\auxval[\ti][\pl]=\auxval[\ti][\plb]$ for all players $\pl$ and $\plb$. So, we can omit the superscript, and the same is true for the payoff functions and the auxiliary payoff functions.
    
    The proof proceeds as follows: we show that the differential of $\auxval[]$ becomes a good approximation of the target auxiliary payoff $\abbrauxpoff[]$, and furthermore that their difference is lower bounded by something integrable. Then we can deduce that there is a limit, for this difference and that it is necessarily $0$ and the convergence of actions follows.

\RenewDocumentCommand{\estauxpoff}{O{\ti} O{\pl} O{\auxact[#1]} O{\st} O{\auxval[#1][][]}}{{\hat f}_{#4, #5}\braced{#3}}

\RenewDocumentCommand{\auxpoff}{O{\ti} O{\pl} O{\auxact[#1]} O{\st} O{\auxval[#1][#2][]}}{f_{#4, #5}\braced{#3}}

\RenewDocumentCommand{\poff}{O{} O{\pl} O{\st}}{r_{#3}\braced{#1}}

\RenewDocumentCommand{\estpoff}{O{\act} O{\pl} O{\st} O{\ti}}{\hat r_{#3}\braced{#1}\braced{#4}}

\RenewDocumentCommand{\auxval}{O{\ti} O{} O{\st}}{u_{#3}\sbraced{#1} }
\RenewDocumentCommand{\dotauxval}{O{} O{} O{\st}}{\dot{u}_{#3} \ifthenelse{\equal{#1}{}}{}{\left( #1 \right)}}

\RenewDocumentCommand{\tble}{O{\auxact} O{}}{h^{#2}\braced{#1}}
    \subsubsection{Convergence of payoffs}

    We define $\abbrauxpoff := \estauxpoff + \smness \tble$ and $\stm \in \argmin{\st \in \states} \abbrauxpoff - \auxval$. For a given solution of \refmbrd, there may be several possible choices of $\stm$, results below are valid for any such choice.

    \RenewDocumentCommand{\auxpoff}{}{\abbrauxpoff}
    \NewDocumentCommand{\mauxval}{O{\ti} O{#1} O{\stm[#2]}}{\auxval[#1][][#3]}
    \NewDocumentCommand{\mauxpoff}{O{\ti} O{#1} O{\stm[#2]}}{\auxpoff[#1][][][#3]}

    \begin{lemma}\label{lem:sbr:inegdiff} The differential of $\mauxpoff[]-\mauxval[]$ is lower bounded for almost every $\ti$:
        $$\diff{\mauxpoff[]-\mauxval[]} \geq -2\cvestbound \exp(-\minsbr\ratemin\ti) + \rateval (\disc-1)(\mauxpoff-\mauxval)$$
        where $\cvestbound$ is defined in Lemma~\ref{lem:sbr:cvest} and $\minsbr$ in Lemma~\ref{lem:sbr:minsbr}.
    \end{lemma}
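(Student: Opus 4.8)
The plan is to differentiate the scalar map $\ti\mapsto\mauxpoff[]-\mauxval[]$ and to lower bound each term that appears. Write $\Gamma_\st:=(1-\disc)\hat r_\st(x_\st)+\disc\sum_{s'\in\states}\hat P_{\st s'}(x_\st)\,u_{s'}+\smness\,h(x_\st)$ for the estimated, regularised auxiliary payoff at state $\st$, so that (under the identical-interest conventions, where a single common $u$ and $h$ are used) $\dot u_\st=\rateval(\Gamma_\st-u_\st)$, $\dot x_\st=\alpha_\st(\ti)(\operatorname{sbr}_\st-x_\st)$ with $\alpha_\st(\ti)\in[\ratemin,1]$, and $\mauxpoff[]-\mauxval[]=\Gamma_{s_-}-u_{s_-}$. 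By Lemma~\ref{lem:sbr:lip} each $\ti\mapsto\Gamma_{s'}-u_{s'}$ is Lipschitz, hence so is their pointwise minimum, and on the full-measure set of $\ti$ where all these maps and their minimum are differentiable, the derivative of the minimum equals the derivative of $\Gamma_{s_-}-u_{s_-}$ (the envelope fact: a differentiable function that touches a differentiable minimum from above at $\ti$ has the same derivative there), for any measurable selection $s_-(\cdot)$ of minimisers. So it suffices to lower bound $\diff{}\Gamma_{s_-}-\dot u_{s_-}$ for almost every $\ti$.

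Expanding $\diff{}\Gamma_{s_-}$ by the chain rule produces three groups of terms. Group (i), the contribution of the profile drift $\dot x_{s_-}$: at frozen time, $y\mapsto(1-\disc)\hat r_{s_-}(y)+\disc\sum_{s'}\hat P_{s_- s'}(y)\,u_{s'}+\smness\,h(y)$ is, in each player's own coordinate $y^\pl$, affine plus $\smness$ times a strictly concave function, hence concave, and $\operatorname{sbr}^\pl_{s_-}$ is by definition its maximiser in $y^\pl$ with the other coordinates fixed at $x_{s_-}$; invoking the elementary inequality $\nabla\phi(y)\cdot(y^\star-y)\ge\phi(y^\star)-\phi(y)\ge 0$ valid for a concave $\phi$ with maximiser $y^\star$ (apply concavity on the segment $[y,y^\star]$ and let the step go to $0$), group (i) equals $\alpha_{s_-}(\ti)\sum_{\pl\in\players}\nabla_{y^\pl}\phi_\pl(x^\pl_{s_-})\cdot(\operatorname{sbr}^\pl_{s_-}-x^\pl_{s_-})\ge 0$, where $\phi_\pl$ is that map viewed as a function of $y^\pl$ alone. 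Group (ii), the contribution of the continuation-value drifts $\dot u_{s'}$, equals $\disc\rateval\sum_{s'}\hat P_{s_- s'}(x_{s_-})(\Gamma_{s'}-u_{s'})$; since $\hat P_{s_-\cdot}(x_{s_-})$ is a probability vector (nonnegativity and total mass $1$ are preserved along the estimator dynamics) and $s_-$ minimises $\Gamma_{s'}-u_{s'}$, this is $\ge\disc\rateval(\Gamma_{s_-}-u_{s_-})$, and combined with $-\dot u_{s_-}=-\rateval(\Gamma_{s_-}-u_{s_-})$ it produces exactly $\rateval(\disc-1)(\mauxpoff-\mauxval)$. Group (iii), the contribution of the explicit time-dependence of $\hat r_{s_-}$ and $\hat P_{s_-\cdot}$: by the estimator dynamics it is a sum of terms equal to $\alpha_{s_-}(\ti)$ times a played-profile probability times an estimator error $\hat r_{s_-}(\cdot)-r_{s_-}(\cdot)$ or $\hat P_{s_- s'}(\cdot)-P_{s_- s'}(\cdot)$ (the latter weighted by some $u_{s'}$), so using $\alpha_{s_-}(\ti)\le 1$, that the profile probabilities sum to $1$, Lemma~\ref{lem:sbr:bounded} (boundedness of $u$) and Lemma~\ref{lem:sbr:cvest}, group (iii) is bounded in absolute value by $2\cvestbound\exp(-\minsbr\ratemin\ti)$, possibly after enlarging the constant $\cvestbound$ of Lemma~\ref{lem:sbr:cvest}. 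Adding (i)$+$(ii)$+$(iii) together with $-\dot u_{s_-}$ yields the stated inequality.

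The step I expect to carry the real content is the isolation and sign of group (i): one must organise the chain rule so that the profile-drift term appears with its own gradient factor, and then observe that $\dot x_{s_-}$ points towards the maximiser of the very (estimated, regularised) auxiliary payoff whose gradient multiplies it, and that this payoff is concave in each player's own coordinate precisely because $\hat r_\st$ and $\hat P_{\st s'}$ are multilinear while $h$ is strictly concave in $x^\pl$. This is also exactly where the identical-interest hypothesis is used, since it provides a single common $\Gamma_\st$ and a single common maximiser; without it the argument would not close. The remaining points — justifying differentiation of the pointwise minimum although $s_-(\cdot)$ is only measurable (handled by the Lipschitz/envelope argument above), and checking that $\hat P_{s_-\cdot}(x_{s_-})$ stays a genuine probability distribution along solutions — are routine.
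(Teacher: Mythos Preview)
Your proof is correct and follows essentially the same approach as the paper: the paper writes out your ``envelope fact'' by hand via a forward-difference argument with $s_-(t+h)$ (using the minimality inequality $\Gamma_{s_-(t+h)}(t)-u_{s_-(t+h)}(t)\ge\Gamma_{s_-(t)}(t)-u_{s_-(t)}(t)$), and then performs exactly your three-group chain-rule decomposition with the same sign arguments for each group. Your presentation is slightly more direct, and you make explicit two points the paper leaves implicit (that $\hat P_{s_-\cdot}(x_{s_-})$ remains a probability vector along the dynamics, and that in MBRD the $\operatorname{sbr}$ is taken with respect to the \emph{estimated} auxiliary payoff, which is what makes group~(i) nonnegative).
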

    \begin{proof}{
        \NewDocumentCommand{\h}{}{h}
        First, notice that since $\mauxpoff[]-\mauxval[]$ is a minimum of continuous, differentiable almost everywhere functions, it is continuous and differentiable almost everywhere. Let $\ti \in \R^+, \h > 0$.

        \begin{equation}\label{pr:sbr:diff0}
            \begin{aligned}
            \mauxpoff[\ti+\h]-\mauxval[\ti+\h] - \mauxpoff[\ti] + \mauxval[\ti] \\
             = \mauxpoff[\ti][\ti+\h] + \h\diff{\mauxpoff[][\ti+\h]}(\ti)-\mauxval[\ti][\ti+\h]\\-\h\diff{\mauxval[][\ti+\h]}(\ti) + o(\h) - \mauxpoff[\ti] + \mauxval[\ti]
            \end{aligned}
        \end{equation}

        Then, for any $\h$:
        \begin{equation}\label{pr:sbr:diff1}
            \begin{aligned}
            \diff{\mauxval[][\ti+\h]}(\ti) & = \rateval\left(\auxpoff[\ti][][][\stm[\ti+\h]] - \auxpoff[\ti][][][\stm[\ti+\h]]\right) \\ & = \rateval \left(\mauxpoff-\mauxval\right) + o(1)
            \end{aligned}
        \end{equation}

        Moreover, for any $\h$:
        \begin{equation}\label{pr:sbr:diff2}
            \begin{aligned}
            \mauxpoff[\ti][\ti+\h] - \mauxval[\ti][\ti+\h] \geq \mauxpoff[\ti] - \mauxpoff[\ti]
            \end{aligned}
        \end{equation}

        Now, we need to lower bound $\diff{\mauxpoff[][\ti+\h]}(\ti)$. We use the chain rule and Lemma~\ref{lem:sbr:cvest} to get (with $\st=\stm[\ti+\h]$ to ease reading--since this differentation does not depend on the state):
        \begin{equation}\label{pr:sbr:diff5}
            \begin{aligned}
                \diff{\auxpoff[]}(\ti) \geq
                \underbrace{-2\cvestbound \exp(-\minsbr \ratemin \ti)}_{\text{changes in\ } \estpoff[][][][]\text{\ and\ }\esttrans[][][][]}
                + \underbrace{\sum_{\plb \in \players} \rateact \langle \sbr[\st][\auxval[\ti][\pl]][\auxact[\plb]][\plb]-\auxact[\ti][\plb], \nabla_\plb \estauxpoff[\ti][\pl][\auxact[\ti]]+\smness \tble\rangle}_{\text{changes in\ }\auxact[], \ \nabla_j\text{\ is the gradient with respect to\ }\auxact[][\plb]} \\
                + \underbrace{\disc \sum_{\stb \in\states} \esttrans[\st][\stb][\auxact][]\dotauxval[\ti][\pl][\stb]}_{\text{changes in\ }\auxval[][]}
            \end{aligned}
        \end{equation}
 
        The second term is positive because $\estauxpoff[][][]+\smness \tble$ is concave.\newline The last one is greater than $\disc\rateval\left(\mauxpoff-\mauxval\right)$, leading to:
        \begin{equation}\label{pr:sbr:diff3}
            \begin{aligned}
                \diff{\mauxpoff[][\ti+\h]}(\ti) \geq -2\cvestbound \exp(-\minsbr \ratemin \ti) + \disc\rateval\left(\mauxpoff-\mauxval\right)
            \end{aligned}
        \end{equation}

        Using (\ref{pr:sbr:diff1}), (\ref{pr:sbr:diff2}) and (\ref{pr:sbr:diff3}) in (\ref{pr:sbr:diff0}) leads to:
        \begin{equation*}\label{pr:sbr:diff4}
            \begin{aligned}
                & \mauxpoff[\ti+\h]-\mauxval[\ti+\h] - \mauxpoff[\ti] + \mauxval[\ti] \\
                & \geq \h \left( -2\cvestbound \exp(-\minsbr\ratemin\ti) + \rateval(\disc-1) \left(\mauxpoff-\mauxval\right)\right) + o(\h)
            \end{aligned}
        \end{equation*}

        And the result follows.

    }\end{proof}

    \NewDocumentCommand{\limval}{}{u_{\infty}}

    \begin{lemma}\label{lem:sbr:cvpoff}
        There exists $\limval \in \R^\states$ such that for all $\st \in \states$, $\auxval \rightarrow \limval$ and $\auxpoff \rightarrow \limval$.
    \end{lemma}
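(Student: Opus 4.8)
Write $m(t):=\min_{\st\in\states}\bigl(\auxpoff-\auxval\bigr)$, the quantity Lemma~\ref{lem:sbr:inegdiff} controls from below: for almost every $t$,
\begin{equation*}
  m'(t)\ \geq\ -2\cvestbound\,e^{-\minsbr\ratemin t}+(\disc-1)\,\rateval\,m(t).
\end{equation*}
First I would solve this differential inequality using the integrating factor $e^{(1-\disc)B(t)}$, where $B(t):=\int_0^t\rateval[\tau]\,d\tau\to+\infty$ by hypothesis~\ref{hyp:ratesbr}; this yields
\begin{equation*}
  m(t)\ \geq\ -g(t),\qquad g(t):=\lvert m(0)\rvert\,e^{-(1-\disc)B(t)}+\int_0^t 2\cvestbound\,e^{-\minsbr\ratemin\tau}\,e^{-(1-\disc)(B(t)-B(\tau))}\,d\tau\ \geq\ 0 .
\end{equation*}
Since $e^{-(1-\disc)(B(t)-B(\tau))}\leq1$, the function $g$ is bounded; more importantly, I would verify $\int_0^\infty\rateval[t]\,g(t)\,dt<\infty$: the first summand integrates to $\lvert m(0)\rvert/(1-\disc)$, and, by Fubini together with $\int_\tau^\infty\rateval[t]\,e^{-(1-\disc)(B(t)-B(\tau))}\,dt=1/(1-\disc)$, the double integral equals $2\cvestbound/\bigl((1-\disc)\minsbr\ratemin\bigr)$. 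This is the step where the exponential rate $\minsbr\ratemin$ furnished by Lemma~\ref{lem:sbr:cvest} is used to absorb the growth of the integrating factor.

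Next I would deduce that, for each $\st$, both $u_\st$ and $\Gamma_\st$ converge. From $\dot u_\st=\rateval(\Gamma_\st-u_\st)\geq\rateval\,m\geq-\rateval\,g$ the map $t\mapsto u_\st(t)+\int_0^t\rateval[\tau]g(\tau)\,d\tau$ is nondecreasing; it is bounded above by Lemma~\ref{lem:sbr:bounded} together with $\int_0^\infty\rateval\,g<\infty$, so it converges, and since $\int_0^t\rateval[\tau]g(\tau)\,d\tau$ converges, $u_\st$ converges. For $\Gamma_\st$ I would revisit the decomposition of $\tfrac{d}{dt}\Gamma_\st$ established in the proof of Lemma~\ref{lem:sbr:inegdiff} (inequality~(\ref{pr:sbr:diff5})): its error term is $\geq-2\cvestbound e^{-\minsbr\ratemin t}$; its term coming from the motion of the empirical actions is $\geq0$, because the estimated regularized auxiliary payoff is concave in each player's strategy and each player's empirical strategy moves toward a maximizer of it --- this is where the identical-interest hypothesis is essential, since it makes the auxiliary game common-interest; and its term coming from the motion of $u$, namely $\disc\sum_{\stb}\hat P_{\st\stb}(\cdot)\,\dot u_\stb$, is $\geq-\disc\,\rateval\,g$ because the weights $\hat P_{\st\stb}(\cdot)$ are nonnegative and sum to $1$ over $\stb$, while each $\dot u_\stb\geq-\rateval\,g$. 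Hence $\tfrac{d}{dt}\Gamma_\st\geq-2\cvestbound e^{-\minsbr\ratemin t}-\disc\,\rateval\,g$, whose right-hand side is integrable on $[0,\infty)$, and the same argument (a function nondecreasing up to an integrable correction and bounded above by Lemma~\ref{lem:sbr:bounded}) shows that $\Gamma_\st$ converges.

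Finally I would check that the two limits coincide. Integrating $\dot u_\st=\rateval(\Gamma_\st-u_\st)$ shows $\int_0^\infty\rateval[t]\bigl(\Gamma_\st(t)-u_\st(t)\bigr)\,dt=\bigl(\lim_{t\to\infty}u_\st(t)\bigr)-u_\st(0)$, which is finite; since $\int_0^\infty\rateval[\tau]\,d\tau=+\infty$ and $\Gamma_\st-u_\st$ has a limit, that limit must be $0$. Taking $(\limval)_\st$ to be the common limit of $u_\st$ and $\Gamma_\st$ then yields $\limval\in\R^\states$ with $\auxval\to\limval$ and $\auxpoff\to\limval$. The whole argument hinges on the first step --- turning the soft conclusion $m(t)\to0$ into the integrated bound $\int_0^\infty\rateval\,g<\infty$, and noticing that $\tfrac{d}{dt}\Gamma_\st$ itself (not merely $m'$) admits an integrable lower bound; once that is done, everything downstream follows from boundedness and Lipschitz continuity of the quantities involved (Lemmas~\ref{lem:sbr:bounded} and~\ref{lem:sbr:lip}).
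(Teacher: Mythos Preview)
Your proposal is correct and follows essentially the same route as the paper: solve the differential inequality of Lemma~\ref{lem:sbr:inegdiff} via the integrating factor $e^{(1-\disc)B(t)}$, check that $\int_0^\infty\rateval\,g<\infty$ by the same Fubini computation the paper carries out, conclude that both $u_\st$ and $\Gamma_\st$ are ``nondecreasing up to an integrable correction'' (for $\Gamma_\st$ using the decomposition~(\ref{pr:sbr:diff5}) exactly as you do), and finally identify the two limits using $\int\rateval=+\infty$. Your write-up is in fact slightly more explicit than the paper's in naming the monotone auxiliary function $u_\st+\int_0^t\rateval g$ and in the final ``limits coincide'' step, but the argument is the same.
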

    \begin{proof}
        We consider the following differential equation:
        \begin{equation}
            \label{pr:sbr:prcv1}
            y' = -2\cvestbound \exp(-\minsbr\ratemin\ti) + \rateval(\disc-1) y
        \end{equation}
        Lemma~\ref{lem:sbr:inegdiff} implies that solutions of (\ref{pr:sbr:prcv1}) with initial condition $y(0) = \mauxpoff[0] - \mauxval[0]$ lower bound $\mauxpoff[]-\mauxval[]$.

        Moreover, solutions of (\ref{pr:sbr:prcv1}) are of the form:
        \begin{equation*}
            \left(y(0) - \int_0^{\ti} 2\cvestbound\exp(-\minsbr\ratemin v)\exp\left(\int_0^v \rateval[u](1-\disc) du\right) dv\right)\exp\left(-\int_0^\ti \rateval[u](1-\disc) du\right)
        \end{equation*}
        which is equal to:
        \begin{equation*}
            y(0)\exp\left(-\int_0^\ti \rateval[u](1-\disc) du\right) - \int_0^{\ti} 2\cvestbound\exp(-\minsbr\ratemin v) \exp\left(-\int_v^\ti \rateval[u](1-\disc) du\right)dv
        \end{equation*}
        Which is greater than:
        \begin{equation} \label{pr:sbr:prcv2}
            \begin{gathered}
                - |y(0)|\exp\left(-\int_0^\ti \rateval[u](1-\disc) du\right) - \int_0^{\ti} 2\cvestbound\exp(-\minsbr\ratemin v) \exp\left(-\int_v^\ti \rateval[u](1-\disc) du\right)dv
            \end{gathered}
        \end{equation}

        So for every $\st$, $\auxpoff-\auxval$ is greater than (\ref{pr:sbr:prcv2}). We study the differential of $\auxval$, that is $\rateval(\auxpoff-\auxval)$ and show that it is lower bounded by an integrable quantity:
        \begin{multline}\label{pr:sbr:prcv3}
            \dotauxval \geq - \rateval |y(0)|\exp\left(-\int_0^\ti \rateval[u](1-\disc) du\right) \\ - \rateval\int_0^{\ti} 2\cvestbound\exp(-\minsbr\ratemin v) \exp\left(-\int_v^\ti \rateval[u](1-\disc) du\right)dv
        \end{multline}
        
        The integral of the first term is:
        \begin{multline*}
            \int_0^\ti - \rateval[v] |y(0)|\exp\left(-\int_0^v \rateval[u](1-\disc) du\right) \\ = - |y(0)|\frac{1}{1-\disc}\left(1-\exp\left(-\int_0^t \rateval[u](1-\disc) du\right)\right) > -\infty
        \end{multline*}
        
        And the integral of the second term is:
        \begin{align*}
        & \int_0^\ti - \rateval[v]\int_0^{v} 2\cvestbound\exp(-\minsbr\ratemin w) \exp\left(-\int_w^v \rateval[u](1-\disc) du\right)dw dv \\
        = &\int_0^\ti \int_0^\ti -1_{w \leq v}\rateval[v]2\cvestbound\exp(-\minsbr\ratemin w) \exp\left(-\int_w^v \rateval[u](1-\disc) du\right)dw dv \\
        = &\int_0^\ti \int_w^\ti -\rateval[v]2\cvestbound\exp(-\minsbr\ratemin w) \exp\left(-\int_w^v \rateval[u](1-\disc) du\right)dv dw \\
        = &\int_0^\ti 2\cvestbound\exp(-\minsbr\ratemin w) \int_w^\ti -\rateval[v] \exp\left(-\int_w^v \rateval[u](1-\disc) du\right)dv dw \\
        = &\int_0^\ti 2\cvestbound\exp(-\minsbr\ratemin w) \left[\frac{1}{1-\delta} \exp\left(-\int_w^v \rateval[u](1-\disc) du\right)\right]_w^\ti dw\\
        = &\int_0^\ti 2\cvestbound\exp(-\minsbr\ratemin w) \frac{1}{1-\delta}\left(\exp\left(-\int_w^t \rateval[u](1-\disc) du\right)-1\right) dw \\
        \geq & - \frac{1}{1-\disc}\frac{2\cvestbound}{\minsbr\ratemin}(1-\exp(-\minsbr\ratemin \ti)) > - \infty
        \end{align*}
        Therefore, both term of the integral of (\ref{pr:sbr:prcv3}) are greater than $-\infty$. Since $\auxval$ is bounded (Lemma~\ref{lem:sbr:bounded}) and its derivative is $\rateval(\auxpoff-\auxval)$, then $\auxval$ converges to its $\lim\sup$. Moreover, the same reasoning can be made with $\auxpoff[]$ (see its differentiation in (\ref{pr:sbr:diff5})), so it has a limit which is necessarily the same as $\auxval[]$: otherwise, the derivative of $\auxval$ would converge towards $\rateval$ times the difference of the limits and $\auxval$ would be unbounded because of hypothesis~\ref{hyp:ratesbr}.
    \end{proof}

\subsubsection{Convergence of actions}

    \begin{lemma}\label{lem:sbr:cvact}
        Action profile $\auxact$ converges to a fixpoint of $\sbr[\st][\limval][][]$. Therefore, $\auxact[\ti][][]$ converges to a regularized Nash equilibria of the stochastic game.
    \end{lemma}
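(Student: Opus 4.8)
The plan is to carry out the strategy sketched for identical-interest games in Section~\ref{sec:sbrd}: show that, in the limit, $\auxact$ cannot escape the fixpoints of the smooth best response without contradicting the convergence of Lemma~\ref{lem:sbr:cvpoff}. Write $\Gamma_\st := \estauxpoff + \smness\tble$ for the estimated regularized auxiliary payoff at state $\st$ (a common quantity, by identical interest), so that both $\auxval$ and $\Gamma_\st$ converge to $(\limval)_\st$ (Lemma~\ref{lem:sbr:cvpoff}). Differentiating $\Gamma_\st$ along the solution, exactly as in the computation \eqref{pr:sbr:diff5} but keeping the equality, gives for a.e.\ $\ti$ that $\diff{\Gamma_\st} = E_\st(\ti) + \Psi_\st(\ti) + \disc\sum_{\stb\in\states}\esttrans[\st][\stb][\auxact][]\,\dotauxval[][][\stb]$, where $E_\st$ is the contribution of the drift of the estimates $\estpoff,\esttrans$, with $|E_\st(\ti)| \le 2\cvestbound\exp(-\minsbr\ratemin\ti)$ (Lemma~\ref{lem:sbr:cvest}), and $\Psi_\st$ is the term produced by the motion of $\auxact$. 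As in the proof of Lemma~\ref{lem:sbr:inegdiff}, concavity of the regularized auxiliary payoff in each player's variable makes every summand of $\Psi_\st$ nonnegative, whence $\Psi_\st(\ti) \ge \ratemin\,\tilde\Psi_\st(\ti) \ge 0$, where $\tilde\Psi_\st(\ti)$ is the directional derivative of the regularized auxiliary payoff at $\auxact$ in the direction $\operatorname{sbr}_{\st,\auxval[\ti][][]}(\auxact) - \auxact$; by strict concavity of $\tble[]$ (\eqref{eq:tble}), $\tilde\Psi_\st(\ti) = 0$ if and only if $\auxact$ is a fixpoint of $x\mapsto\operatorname{sbr}_{\st,\auxval[\ti][][]}(x)$.

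The next step is to establish $\int_0^\infty\tilde\Psi_\st(\ti)\,d\ti < \infty$. From the proof of Lemma~\ref{lem:sbr:cvpoff}, $\dotauxval$ is bounded below, for every $\st$, by the quantity in \eqref{pr:sbr:prcv3} (cf.\ \eqref{pr:sbr:prcv2}), which is integrable; since $\auxval$ is bounded (Lemma~\ref{lem:sbr:bounded}) and hence $\int_0^\ti\dotauxval$ is bounded, integrability of the negative part of $\dotauxval$ forces $\int_0^\infty|\dotauxval|\,d\ti < \infty$, i.e.\ each $\auxval_\st$ has bounded variation. Consequently, integrating the identity of the previous paragraph, $\int_0^T\Psi_\st = (\Gamma_\st(T)-\Gamma_\st(0)) - \int_0^T E_\st - \disc\int_0^T\sum_{\stb}\esttrans[\st][\stb][\auxact][]\,\dotauxval[][][\stb]$ stays bounded as $T\to\infty$: the first term converges (Lemma~\ref{lem:sbr:cvpoff}), the second is dominated by $\int_0^\infty 2\cvestbound\exp(-\minsbr\ratemin\ti)\,d\ti < \infty$, and the third by $\disc\sum_{\stb}\int_0^\infty|\dotauxval[][][\stb]|\,d\ti < \infty$ (using that $\esttrans[\st][\cdot][\auxact][] \in \simplex{\states}$). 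Since $\Psi_\st \ge 0$, this yields $\int_0^\infty\Psi_\st < \infty$, hence $\int_0^\infty\tilde\Psi_\st < \infty$.

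It then remains to apply Barbalat's lemma to $\tilde\Psi_\st$. By Lemmas~\ref{lem:sbr:bounded}, \ref{lem:sbr:lip} and~\ref{lem:sbr:minsbr}, $\auxact$ and $\auxval$ are Lipschitz in $\ti$ and $\auxact$ remains in a fixed compact subset of the interior of $\mixedactions$ on which $(u,x)\mapsto\operatorname{sbr}_{\st,u}(x)$ is continuous (Lemma~\ref{lem:sbr:cont}); as $\tilde\Psi_\st$ depends only on $(\auxact,\auxval)$, the map $\ti\mapsto\tilde\Psi_\st(\ti)$ is uniformly continuous, and being also nonnegative and integrable it tends to $0$. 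With $\auxval\to(\limval)_\st$ and the continuity of $\operatorname{sbr}$, this shows that every limit point $x^\star = (x^\star_\st)_{\st\in\states}$ of $\ti\mapsto(\auxact)_{\st\in\states}$ satisfies $x^\star_\st = \operatorname{sbr}_{\st,\limval}(x^\star_\st)$ for every $\st$; hence the limit set of $\auxact$ lies in the set of fixpoints of $\operatorname{sbr}_{\st,\limval}$ --- a single point (so $\auxact$ converges) whenever these fixpoints are isolated, which is the generic case. Finally, along a subsequence $\ti_n\to\infty$ with $\auxact\to x^\star_\st$, continuity together with Lemma~\ref{lem:sbr:cvest} gives $\Gamma_\st(\ti_n)\to(1-\disc)\,r_\st(x^\star_\st) + \disc\sum_\stb P_{\st\stb}(x^\star_\st)(\limval)_\stb + \smness\,h(x^\star_\st)$, while $\Gamma_\st(\ti_n)\to(\limval)_\st$ (Lemma~\ref{lem:sbr:cvpoff}); equating the two limits identifies $\limval$ with the vector of regularized continuation payoffs generated by the stationary profile $x^\star$. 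Together with the fixpoint equations $x^\star_\st = \operatorname{sbr}_{\st,\limval}(x^\star_\st)$, these are exactly the one-shot-deviation optimality conditions, so by \citet{finkEquilibriumStochasticPerson1964} applied to the DSG with perturbed payoffs $\poff[] + \smness\tble[]$, $x^\star$ is a regularized stationary Nash equilibrium; since every limit point is such an equilibrium and all share the continuation vector $\limval$, the claim follows.

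The main obstacle is the second step. The non-obvious ingredient is that each $\auxval_\st$ not only converges but has \emph{bounded variation} --- read off from the integrability of the lower bound appearing in the proof of Lemma~\ref{lem:sbr:cvpoff} --- which is what upgrades a mere $\omega$-limit statement into a genuine integrability bound on $\Psi_\st$; the remainder is a Barbalat-type argument resting on the regularity (Lipschitz continuity, interiority) already established. If one instead wanted convergence of $\auxact$ to a \emph{single} equilibrium rather than to the equilibrium set, the remaining difficulty would be to rule out a continuum of regularized equilibria of the auxiliary game, which would require, e.g., analyticity of $\tble[]$ together with a \L{}ojasiewicz gradient inequality.
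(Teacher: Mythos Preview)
Your argument is correct and follows the same strategy as the paper: differentiate $\Gamma_\st$ as in \eqref{pr:sbr:diff5}, isolate the nonnegative action-drift term $\Psi_\st$, and show it vanishes so that every limit point of $\auxact$ is a fixpoint of $\operatorname{sbr}_{\st,\limval}$. The paper is terser: it asserts that since $\Gamma_\st$ and $\auxval$ converge and are Lipschitz, ``their derivative goes to $0$,'' and then reads off $\Psi_\st\to 0$ from \eqref{pr:sbr:diff5}. That inference is not valid in general (convergence plus Lipschitz does not force $\dot f\to 0$); for $\dotauxval=\rateval(\Gamma_\st-\auxval)$ it is fine directly, but for $\diff{\Gamma_\st}$ the step needs justification. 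Your route---extract bounded variation of each $\auxval$ from the integrable lower bound \eqref{pr:sbr:prcv3} established in the proof of Lemma~\ref{lem:sbr:cvpoff}, deduce $\int_0^\infty\Psi_\st<\infty$, then apply Barbalat to the uniformly continuous $\tilde\Psi_\st$---is a clean rigorous completion of the same idea.

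Two small remarks. First, $\tilde\Psi_\st$ also depends on the estimates $\hat P,\hat r$ (through $\hat f$), not only on $(\auxact,\auxval[\ti][][])$; since these are Lipschitz and converge (Lemma~\ref{lem:sbr:cvest}) this does not affect the uniform-continuity claim, but the dependence should be stated. Second, both your argument and the paper's only yield convergence of $\auxact$ to the \emph{set} of fixpoints of $\operatorname{sbr}_{\st,\limval}$; your caveat about isolated fixpoints is honest, and matches what the paper actually proves despite the wording of the lemma.
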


    \begin{proof}
        We use equation (\ref{pr:sbr:diff5}) to bound above the scalar product:
        \begin{equation}\label{pr:sbr:cvact}
        \langle \sbr[\st][\auxval[\ti][\pl]][\auxact[\plb]][\plb]-\auxact[\ti][\plb], \nabla_\plb \estauxpoff[\ti][\pl][\auxact[\ti]] + \smness \tble\rangle
        \end{equation}
        
        Since $\auxpoff[]$ and $\auxval[]$ have limits and are Lipsichitz (Lemma~\ref{lem:sbr:lip}), then their derivative goes to $0$, and the first term of (\ref{pr:sbr:diff5}) also goes to $0$. As a consequence, the limsup of (\ref{pr:sbr:cvact}) is bounded above by $0$, and this is positive, so it goes to $0$. Therefore, the action profile converges to a fixpoint of $\sbr[\st][\limval][][]$.
    \end{proof}

}

\subsection{Convergence in zero-sum games}\label{app:sbr:zs}
{
    
    In this subsection we suppose that the game is zero-sum, that is there are only two players and for all states, $\poff[\purep][1] = -\poff[\purep][2]$ for all action profiles $\purep$. Therefore, we can omit the superscript with the convention that $\poff[][] = \poff[][1]$.
    
    The proof is inspired from \cite{leslieBestresponseDynamicsZerosum2020}, later extended in \cite{baudinFictitiousPlayBestResponse2022}.

\NewDocumentCommand{\vu}{O{\st}}{v_{#1, u(t)}}
\newcommand{\fsf}{\auxpoff[\ti][][\auxact[\ti][][s_f(\ti)]][s_f(t)]}
\newcommand{\usf}{\auxval[\ti][][s_f]}
\newcommand{\vsf}{\vu[s_f]}
\newcommand{\tblf}{\smness \tble[\auxact[\ti][][s_f(\ti)]]}

\NewDocumentCommand{\duagap}{O{\ti} O{\st}}{w_{#2}\sbraced{#1}}
\NewDocumentCommand{\boundduagap}{}{D}
    Moreover, we suppose that both players use the same regularizer:
    \begin{equation}
        \tble[x^1_s, x^2_s] = \tble[x^1_s, x^2_s][1] = -\tble[x^1_s,x^2_s][2]
        \label{eq:zsentr}
    \end{equation}
    where both $\tble[][1]$ and $\tble[][2]$ are concave in respectively $x^1_s$ and $x^2_s$.
    
    Therefore, with the same initial conditions, continuation payoffs $\auxval[]$ are opposite for both players, and we also omit the superscript.
    
    \subsubsection{Convergence of payoffs}

        We define the energy of the system, also known as the duality gap (\cite{benaimStochasticApproximationsDifferential2005}):
        \begin{equation}\begin{aligned}
            \duagap = & \max_{\profb[1]\in \simplex{\actions[1]}}\estauxpoff[\ti][1][\profb[1], \auxact[\ti][2]] + \smness \tble[\profb[1], \auxact[\ti][2]] \\
            & - \min_{\profb[2]\in \simplex{\actions[2]}}\estauxpoff[\ti][2][\auxact[\ti][1], \profb[2]] + \smness \tble[\auxact[\ti][1], \profb[2]]
        \end{aligned}
        \end{equation}
        
        This is a positive quantity because the first (second) term is greater (lower) than $\estauxpoff[\ti][1][\auxact[\ti]] + \smness \tble[\auxact[\ti]]$. We are going to show that it is mostly decreasing.

        In the rest of the section, we use a $\ratelim$ such that:
        \begin{equation}
            \ratelim \geq \lim\sup \rateval\label{hyp:ratelim}\tag{H3}
        \end{equation}
        
        A special case is when $\rateval$ goes to $0$, in this case $\ratelim$ can be taken arbitrarily small.

        \begin{lemma}\label{lem:sbr:zs:diffduality} The differential of $\duagap[]$ is bounded:
            $$\diff{\duagap[]} \leq - \ratemin \duagap + \boundduagap \ratelim + \boundduagap \exp\left(-\minsbr \ratemin \ti\right)$$
        \end{lemma}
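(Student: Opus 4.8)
The plan is to differentiate the duality gap $\duagap$ along a solution of \refmbrd and to read the claimed inequality off from the concave--convex structure of the estimated regularized auxiliary game, combined with the rate assumptions and the exponential convergence of the estimates. Fix a state $\st$, write $x^1=\auxact[\ti][1]$ and $x^2=\auxact[\ti][2]$, and let $g(y^1,y^2)$ denote the estimated regularized auxiliary payoff of player~$1$ at $\st$ evaluated at the profile $(y^1,y^2)$, that is
\[
g(y^1,y^2)=(1-\disc)\,\hat r_\st(y^1,y^2)+\disc\sum_{\stb\in\states}\hat P_{\st\stb}(y^1,y^2)\,u_\stb+\smness\,h(y^1,y^2),
\]
where $\hat r_\st,\hat P_{\st\stb},u_\stb$ are themselves functions of $\ti$ along the solution (through \eqref{eq:cont:estimate} and the continuation-payoff update). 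Since $\hat r_\st$ and $\hat P_{\st\stb}$ are multilinear in the mixed actions, hence affine in each player's action separately, and by \eqref{eq:zsentr} and \eqref{eq:tble}, $g(\cdot,y^2)$ is strictly concave for every $y^2$ and $g(y^1,\cdot)$ is strictly convex for every $y^1$. With $A(\ti)=\max_{y^1}g(y^1,x^2)$ and $B(\ti)=\min_{y^2}g(x^1,y^2)$ one has $\duagap=A-B\ge0$, the unique maximizer $\hat y^1$ of $A$ is player~$1$'s smooth best response, the unique minimizer $\hat y^2$ of $B$ is player~$2$'s smooth best response, and both are interior (Lemma~\ref{lem:sbr:minsbr}). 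Since the solution maps are Lipschitz (Lemma~\ref{lem:sbr:lip}) and, for $\ti>0$, $x^1$ and $x^2$ remain in the open simplices on which $h$ is $C^1$, the functions $A$ and $B$ are differentiable at almost every $\ti$, and at such points Danskin's envelope theorem applies.

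At almost every $\ti$, by the envelope theorem (the maximizer $\hat y^1$ may be held frozen) and $\dotauxact[\ti][2]=\rateact(\hat y^2-x^2)$,
\[
\diff{A}=\rateact\,\bigl\langle\nabla_{x^2}g(\hat y^1,x^2),\,\hat y^2-x^2\bigr\rangle+E_A,
\]
where $E_A$ collects the explicit $\ti$-derivative of $g(\hat y^1,x^2)$ coming from the motion of $\hat r_\st,\hat P_{\st\stb},u_\stb$ alone. Convexity of $g(\hat y^1,\cdot)$ gives $\bigl\langle\nabla_{x^2}g(\hat y^1,x^2),\,\hat y^2-x^2\bigr\rangle\le g(\hat y^1,\hat y^2)-g(\hat y^1,x^2)=g(\hat y^1,\hat y^2)-A$, so $\diff{A}\le\rateact\bigl(g(\hat y^1,\hat y^2)-A\bigr)+E_A$. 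Symmetrically, using $\dotauxact[\ti][1]=\rateact(\hat y^1-x^1)$ and concavity of $g(\cdot,\hat y^2)$, one gets $\diff{B}\ge\rateact\bigl(g(\hat y^1,\hat y^2)-B\bigr)+E_B$ with $E_B$ the analogous explicit derivative of $g(x^1,\hat y^2)$. Subtracting, the terms $g(\hat y^1,\hat y^2)$ cancel and the geometric part telescopes:
\[
\diff{\duagap}=\diff{A}-\diff{B}\ \le\ \rateact(B-A)+(E_A-E_B)\ =\ -\rateact\,\duagap+(E_A-E_B)\ \le\ -\ratemin\,\duagap+|E_A|+|E_B|,
\]
using $\rateact\ge\ratemin\ge0$ and $\duagap\ge0$.

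It remains to bound $|E_A|+|E_B|$, whose only source is the explicit $\ti$-dependence of $g$ through $\hat r_\st$, $\hat P_{\st\stb}$ and $u_\stb$. The contributions of $\hat r_\st$ and $\hat P_{\st\stb}$ are $O\bigl(\exp(-\minsbr\ratemin\ti)\bigr)$: by the estimate dynamics \eqref{eq:cont:estimate} the time-derivatives of these estimators are dominated in norm by $\|\hat r_\st-r_\st\|$ and $\|\hat P_{\st\stb}-P_{\st\stb}\|$, which decay like $\exp(-\minsbr\ratemin\ti)$ by Lemma~\ref{lem:sbr:cvest}, while the coefficients multiplying them ($1-\disc$, $\disc$, $u_\stb$) are bounded (Lemma~\ref{lem:sbr:bounded}); this produces the $\boundduagap\exp(-\minsbr\ratemin\ti)$ term. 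The contribution of the motion of the continuation payoffs equals $\disc\sum_{\stb\in\states}\hat P_{\st\stb}(\cdot)\,\dot u_\stb$, where $\dot u_\stb=\rateval(\cdots)$, and is bounded in absolute value by $\disc\,|\states|\,\rateval\cdot(\text{const})$, again by Lemma~\ref{lem:sbr:bounded}; since $\rateval$ is nonincreasing with $\limsup_{\ti\to\infty}\rateval\le\ratelim$ by hypothesis~\ref{hyp:ratelim}, this is $\le\boundduagap\ratelim$ once $\ti$ exceeds some $\ti_0$, and is bounded on $[0,\ti_0]$, hence $\le\boundduagap'\exp(-\minsbr\ratemin\ti)$ there after enlarging the constant. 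Collecting the three pieces yields $\diff{\duagap}\le-\ratemin\,\duagap+\boundduagap\ratelim+\boundduagap\exp(-\minsbr\ratemin\ti)$.

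The delicate step is the envelope argument of the second paragraph: one must make sure that $\duagap$ is differentiable almost everywhere even though $\rateact$ is only measurable and $h$ is merely $C^1$ on the open simplex, that strict concavity of $h$ together with Lemma~\ref{lem:sbr:minsbr} legitimately supplies a unique interior maximizer and minimizer (so the $\nabla g$ terms are finite and Danskin applies), and that the cross terms cancel exactly as indicated. The handling of $E_A$ and $E_B$ is then routine, given Lemmas~\ref{lem:sbr:cvest}, \ref{lem:sbr:bounded} and \ref{lem:sbr:lip} and hypotheses \ref{hyp:ratesbr}--\ref{hyp:ratelim}.
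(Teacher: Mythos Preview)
Your proof is correct and follows essentially the same approach as the paper: apply the envelope theorem to the max/min defining $\duagap$, use the concave--convex structure to bound the action-derivative terms (with the saddle value $g(\hat y^1,\hat y^2)$ cancelling between $A$ and $B$), and then control the explicit time-derivative via Lemmas~\ref{lem:sbr:cvest}, \ref{lem:sbr:bounded} and hypothesis~\ref{hyp:ratelim}. The only cosmetic difference is that the paper splits the action-derivative term into its multilinear part and the regularizer part, invoking concavity of $h^2$ in $x^2$ explicitly, whereas you use strict convexity of $g(\hat y^1,\cdot)$ directly; since $h=-h^2$ and the estimated payoff is affine in each action, these are the same inequality.
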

        \begin{proof}
        \NewDocumentCommand{\maxprof}{O{\ti} O{1}}{y_s^{#2\star}\sbraced{#1}}
            We write $\maxprof[\ti][\pl] = \sbr[\st][\auxval][\auxact][\pl]$ and:
            $$g(\maxprof) :=
            \estauxpoff[\ti][1][\maxprof, \auxact[\ti][2]] + \smness \tble[\maxprof, \auxact[\ti][2]]$$
            where the dependency on $\auxval[], \esttrans[\st][][][], \estpoff[][][\st][], \auxact[]$ is left implicit. This implies:
            $$g(\maxprof) = \max_{\profb[1] \in \simplex{\actions[1]}}
            \estauxpoff[\ti][1][\profb[1], \auxact[\ti][2]] + \smness \tble[\profb[1], \auxact[\ti][2]]$$
            
            Therefore, using the envelope theorem\todo{elaborate}, the derivative of $g$ is written using only derivatives on all variables but $\maxprof$:
            $$\diff {g \circ \maxprof[]} = D_u g \cdot \dot u + D_{x^2_s} g \cdot \dotauxact[][2] + D_{\estpoff[][][\st][]}g \cdot \dotestpoff[][][\st][] + D_{\esttrans[\st][][][]} g \cdot  \dotesttrans$$
            
            With Lemma~\ref{lem:sbr:cvest}, Lemma~\ref{lem:sbr:bounded} and Hypothesis~\ref{hyp:ratelim},
            \begin{equation}
                \diff {g\circ \maxprof[]} \leq \ratelim \| D_u g \| \|\poff\| + D_{x^2_s} g \cdot \dotauxact[][2] + (\|D_{\estpoff[][][\st][]}g\| + \|D_{\esttrans[\st][][][]} g\|) \rateact \minsbr \cvestbound \exp(-\minsbr\ratemin t)
                \label{pr:sbr:zs3}
            \end{equation}
            
            $\estauxpoff[\ti][1][]$ is linear, so:
            \begin{equation}
                \label{pr:sbr:zs2}
            D_{x^2_s} g \cdot \dotauxact[][2] = \estauxpoff[\ti][1][\maxprof, \dotauxact[][2]] + \smness \nabla_{\auxact[][2]} \tble[\maxprof, \auxact[\ti][2]] \cdot \dotauxact[][2]
            \end{equation}
            
            And since $\tble[] = \tble[][1] = -\tble[][2]$ (with (\ref{eq:zsentr})), $\nabla_{\auxact[][2]} \tble[] = -\nabla_{\auxact[][2]} \tble[][2]$ by definition, and $\tble[][2]$ is concave in $x^2_s$, so:
            
            \begin{equation}
            \nabla_{\auxact[][2]} \tble[\maxprof,\auxact[\ti][2]][2] \cdot (\maxprof[\ti][2] - \auxact[\ti][2]) \geq \tble[\maxprof,\maxprof[\ti][2]][2] - \tble[\maxprof,\auxact[\ti][2]][2]
            \label{pr:sbr:zs1}
            \end{equation}
            It follows from (\ref{pr:sbr:zs2}), (\ref{pr:sbr:zs1}), (\ref{eq:zsentr}) and $\dotauxact[][2] = \rateact (\maxprof[\ti][2] - \auxact[\ti][2])$ that
            \begin{equation}\begin{aligned}
                D_{x^2_s} g \cdot \dotauxact[][2] & \leq \rateact \estauxpoff[\ti][1][\maxprof, \maxprof[\ti][2] - \auxact[\ti][2]] + \smness \rateact \tble[\maxprof,\maxprof[\ti][2]] - \rateact \smness \tble[\maxprof,\auxact[\ti][2]] \\
                & \leq - \rateact g(\maxprof) + \rateact \estauxpoff[\ti][1][\maxprof, \maxprof[\ti][2]] + \smness \rateact \tble[\maxprof,\maxprof[\ti][2]]
                \end{aligned}
                \label{pr:sbr:zs5}
            \end{equation}
            
            Since $g$ as a function of $\auxval[], \esttrans[\st][][][], \estpoff[][][\st][], \auxact[]$ is Lipschitz, and using (\ref{pr:sbr:zs3}) and (\ref{pr:sbr:zs5}), there exists $\boundduagap$ such that:
            \begin{multline}
            \diff {g \circ \maxprof[]} \leq \frac{\boundduagap}{2} \ratelim + \frac{\boundduagap}{2} \rateact \exp(-\minsbr \ratemin \ti) - \rateact g(\maxprof) \\ + \rateact \estauxpoff[\ti][1][\maxprof, \maxprof[\ti][2]] + \smness \rateact \tble[\maxprof,\maxprof[\ti][2]]
            \label{pr:sbr:zs6}
            \end{multline}

            The same reasoning apply for the second term of $\duagap[]$ with the opposite payoff function (notice that in this case, the second line of (\ref{pr:sbr:zs6}) is exactly the opposite, therefore when summed it cancels out), leading to, when summed:
            \begin{equation*}
                \begin{aligned}
                \diff {\duagap[]} & \leq \boundduagap \ratelim + \boundduagap \rateact \exp(-\minsbr \ratemin \ti) - \rateact \duagap \\
                & \leq \boundduagap \ratelim + \boundduagap \exp(-\minsbr \ratemin \ti) - \ratemin \duagap
                \end{aligned}
            \end{equation*}
            because $\ratemin \leq \rateact \leq 1$ and $\duagap[] \geq 0$ (its first term is greater than $\estauxpoff[\ti][1][\auxact[\ti]] + \smness \tble[\auxact[\ti]]$ and its second one is lower).
        \end{proof}
        
        The following lemma implies that the auxiliary payoff is close to the value of the auxiliary game because the duality gap of the auxiliary game is small enough.
        
        \begin{lemma}\label{lem:sbr:zs:cvduality}
            For all states $\st \in \states$, $$\lim\sup \duagap \leq 2 \boundduagap \ratelim \ratemin^{-1}$$
            Furthermore, $\max\{\duagap[]-2\boundduagap, 0\}$ is a Lyapunov function of \ref{eq:sbrd} (i.e., when payoff and transitions estimate are exact) in the autonomous case.
        \end{lemma}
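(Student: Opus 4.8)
The plan is to deduce both statements from the scalar differential inequality already established in Lemma~\ref{lem:sbr:zs:diffduality}, namely $\diff{\duagap[]} \le -\ratemin \duagap + \boundduagap \ratelim + \boundduagap \exp(-\minsbr\ratemin \ti)$ (holding for almost every $\ti$ along every solution), by a comparison argument in the same spirit as the proof of Lemma~\ref{lem:sbr:cvpoff}. First note that $\duagap[]$ is Lipschitz in $\ti$: the inner maximum and minimum in its definition are attained at unique points by the strict concavity of the regularizer, and $\hat f$, $\auxact[]$, $\auxval[]$ are bounded and Lipschitz by Lemmas~\ref{lem:sbr:bounded} and~\ref{lem:sbr:lip}; hence $\duagap[]$ is absolutely continuous and the a.e.\ bound can be integrated. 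Comparing with the solution of the linear ODE $y' = -\ratemin y + \boundduagap\ratelim + \boundduagap\exp(-\minsbr\ratemin\ti)$ with $y(0)=\duagap[0]$ gives
\[
  \duagap \;\le\; \duagap[0]\,e^{-\ratemin \ti} \;+\; \int_0^\ti e^{-\ratemin(\ti - \tau)}\left(\boundduagap\ratelim + \boundduagap e^{-\minsbr\ratemin\tau}\right)\,d\tau .
\]
As $\ti\to\infty$ the first term vanishes, the integral of the constant forcing term converges to $\boundduagap\ratelim\ratemin^{-1}$, and the integral of the exponentially decaying forcing term also vanishes (for any $\minsbr>0$). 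Hence $\limsup_{\ti}\duagap \le \boundduagap\ratelim\ratemin^{-1} \le 2\boundduagap\ratelim\ratemin^{-1}$, proving the first claim with a factor of $2$ to spare.

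For the second claim, specialise to \ref{eq:sbrd} with exact payoffs and transitions and a constant update rate $\rateval\equiv\ratelim$, so that the system is autonomous; re-running the proof of Lemma~\ref{lem:sbr:zs:diffduality} with the estimation-error terms of Lemma~\ref{lem:sbr:cvest} identically zero, and using $\duagap\ge 0$ together with $\rateact\ge\ratemin$, yields the inequality $\diff{\duagap[]} \le -\ratemin\duagap + \boundduagap\ratelim$ along every solution. Set $V := \max\{\duagap[] - 2\boundduagap,\,0\}$. Then $V$ is continuous and nonnegative, it vanishes exactly on the region where the duality gap satisfies $\duagap \le 2\boundduagap$ (which by the first claim and the definition of $\duagap$ is contained in the set of $\boundduagap\ratelim$-regularized stationary Nash equilibria of $G$), and wherever $V>0$ one has $\diff{V} = \diff{\duagap[]} \le -\ratemin\duagap + \boundduagap\ratelim < 0$, the strict inequality holding once $\duagap$ exceeds the threshold (using $\ratelim < 2\ratemin$, which is the relevant small-update-rate regime, or else reading the threshold as $2\boundduagap\ratelim\ratemin^{-1}$ consistently with the first part). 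Hence $\ti\mapsto V$ is non-increasing along every solution and strictly decreasing off the target region, i.e.\ $V$ is a Lyapunov function for \ref{eq:sbrd}; together with the finiteness of $\limsup\duagap$, this is exactly what will feed the internally-chain-transitive-set argument used to conclude the zero-sum case of Theorem~\ref{thm:sbr:cv}.

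Essentially all the real work is absorbed into Lemma~\ref{lem:sbr:zs:diffduality}, so what remains is a routine ODE comparison, with two minor points requiring care. First, $\duagap[]$ is only differentiable almost everywhere, since the trajectories it is built from are merely Lipschitz; one must therefore use the integral (Grönwall) form of the comparison for absolutely continuous functions rather than classical ODE comparison, which is harmless given the boundedness and uniform Lipschitzness from Lemmas~\ref{lem:sbr:bounded} and~\ref{lem:sbr:lip}. Second, one must keep the various constants aligned so that the strict-decrease threshold of $V$ coincides with the $\limsup$ bound of the first part (enlarging $\boundduagap$ if necessary), which is the reason the factor $2$, rather than $1$, is carried throughout.
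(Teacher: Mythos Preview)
Your proof is correct and follows the same route as the paper's (a Gr\"onwall comparison based on the differential inequality of Lemma~\ref{lem:sbr:zs:diffduality}); you are in fact more explicit than the paper's two-line argument. Your flagging of the threshold mismatch ($2\boundduagap$ versus $2\boundduagap\ratelim\ratemin^{-1}$) is apt: the paper's own later use of this lemma (Lemma~\ref{lem:sfp:zs:ict}) cites the threshold as $2\boundduagap\ratelim\minsbr^{-1}$, confirming your second suggested reading.
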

        
        \begin{proof}
            Since $\duagap[]$ is positive (the first term is greater than $\estauxpoff[\ti][1][\auxact[\ti]] + \smness \tble[\auxact[\ti]]$ and the second one is lower), Lemma~\ref{lem:sbr:zs:diffduality} makes it possible to use Grönwall's Lemma on $\duagap[] - 2\boundduagap \ratelim \ratemin^{-1}$.  as soon as $\exp(-\minsbr \ratemin \ti) \leq \ratelim$.
            
        \end{proof}

        \NewDocumentCommand{\eps}{}{\xi}
        
        Define $\eps$ such as $\frac{(1-\delta)\eps}{16} = 4 \boundduagap \ratelim \ratemin^{-1}$.

        Estimates of transitions and payoffs are close to real values for $t$ large enough, so Lemma~\ref{lem:sbr:zs:cvduality} implies that there exists $t_1(\eps)$ such that for $\ti \geq t_1(\eps)$:
        \begin{equation}
            \begin{aligned}
                |\estauxpoff[][][]-\auxpoff[][][]| \leq 4 \boundduagap \ratelim \ratemin^{-1} = \frac{(1-\delta)\eps}{16}\\
                |\estauxpoff[][][]+\smness \tble[] - \vu| \leq 2 \boundduagap \ratelim \ratemin^{-1} = \frac{(1-\delta)\eps}{32} \\
                |\auxpoff[][][]+\smness \tble[]-\vu| \leq 2 \boundduagap \ratelim \ratemin^{-1} = \frac{(1-\delta)\eps}{32}
            \end{aligned}
            \label{eq:sbr:hypcv}
            \tag{A1}
        \end{equation}
        where $\vu$ is the value of the auxiliary game parameterized by $u(t)$ (and functions $\tble[]$, $\auxpoff[]$ are $\estauxpoff[]$ are valued at $\auxact$, omitted for readability).

        We define two distinguished states (notice that we use $\auxpoff[][][]$ and not $\estauxpoff[][][]$):
        \begin{itemize}
            \item $s_f(t) \in \argmax{s\in \states}|\auxpoff + \smness \tble - \auxval|$
            \item $s_v(t) \in \argmax{s\in \states}|\vu - \auxval|$
        \end{itemize}

        \begin{lemma}\label{lem:usf}
            If (\ref{eq:sbr:hypcv}) is satisfied (for instance if $t \geq t_1(\eps)$) and
            $$|\usf - \fsf - \tblf| \geq \eps$$
            and for an $s \in \states$, $$\left||\usf - \vsf| - |\auxval[s] - \vu|\right| \leq \frac{(1-\delta)\eps}{8}$$
            then:
            $$\diff{|\auxval[s] - \vu|}\leq - \frac{(1-\delta)\rateval \eps}{2}$$
        \end{lemma}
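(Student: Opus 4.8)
The plan is to differentiate $t\mapsto |u_s(t)-v_{s,u(t)}|$ along \refmbrd and show that, whenever the two displayed hypotheses hold, this derivative is at most $-\tfrac12(1-\disc)\rateval\eps$. Write $u_{s'}:=u_{s'}(t)$, $v_{s'}:=v_{s',u(t)}$, and let $f_{s'},\hat f_{s'},h_{s'}$ denote the true auxiliary payoff, its estimate, and the regularizer value, all at $(x_{s'}(t),u(t))$, so the $u$--update of \refmbrd reads $\dot u_{s'}=\rateval(\hat f_{s'}+\smness h_{s'}-u_{s'})$. Since $t\mapsto u(t)$ is Lipschitz (Lemma~\ref{lem:sbr:lip}) and $u\mapsto v_{s,u}$ is $C^1$ with $\nabla_u v_{s,u}=\disc\,P_{s,\cdot}(x^\star_s)$ equal to $\disc$ times a probability distribution over $\states$ --- here $x^\star_s$ is the unique saddle of the regularized auxiliary game at $s$ with parameter $u$, uniqueness coming from the strict concavity/convexity in assumptions~(\ref{eq:tble}) and differentiability from the envelope argument used in Lemma~\ref{lem:sbr:zs:diffduality}, cf.\ also Lemma~\ref{lem:sbr:cont} --- the map $t\mapsto v_{s,u(t)}$ is Lipschitz with $\diff{v_{s,u(t)}}=\disc\sum_{s'}P_{s,s'}(x^\star_s)\dot u_{s'}$ a.e. I would first treat the case $u_s(t)>v_{s,u(t)}$; the bounds in the crux step below show $|u_s-v_s|$ stays bounded away from $0$, so $|u_s-v_s|$ is differentiable at a.e.\ such $t$, with derivative $\dot u_s-\diff{v_{s,u(t)}}$. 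Plugging in the second line of $(\ref{eq:sbr:hypcv})$, i.e.\ $|\hat f_{s'}+\smness h_{s'}-v_{s'}|\le\tfrac{(1-\disc)\eps}{32}$ for every $s'$, to bound $\dot u_s$ from above and each $\dot u_{s'}$ from below, and using $\sum_{s'}P_{s,s'}(x^\star_s)=1$, yields
\begin{equation*}
\diff{|u_s-v_s|}\ \le\ \rateval\Big(-|u_s-v_s|+\disc\max_{s'}|u_{s'}-v_{s'}|+\tfrac{(1-\disc)(1+\disc)\eps}{32}\Big).
\end{equation*}

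The crux --- essentially the only step beyond constant-tracking --- is the comparison $\max_{s'}|u_{s'}-v_{s'}|\le|u_s-v_s|+\tfrac{3(1-\disc)\eps}{16}$. I would get this by combining: (i) the third line of $(\ref{eq:sbr:hypcv})$, $|f_{s'}+\smness h_{s'}-v_{s'}|\le\tfrac{(1-\disc)\eps}{32}$ for all $s'$; (ii) the defining property of $s_f(t)$, namely that it maximises $s'\mapsto|u_{s'}-f_{s'}-\smness h_{s'}|$ over $\states$. Together (i)--(ii) give $|u_{s'}-v_{s'}|\le|u_{s_f}-v_{s_f}|+\tfrac{(1-\disc)\eps}{16}$ for every $s'$, and the closeness hypothesis $\big|\,|u_{s_f}-v_{s_f}|-|u_s-v_s|\,\big|\le\tfrac{(1-\disc)\eps}{8}$ upgrades this to the stated bound. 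The same ingredients, together with the hypothesis $|u_{s_f}-f_{s_f}-\smness h_{s_f}|\ge\eps$ and the triangle inequality, give $|u_{s_f}-v_{s_f}|\ge\eps-\tfrac{(1-\disc)\eps}{32}$, hence $|u_s-v_s|\ge\eps-\tfrac{5(1-\disc)\eps}{32}\ge\tfrac{27\eps}{32}>0$, which also justifies the differentiation used above.

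Finally I would substitute these two estimates into the differential inequality. The $\eps$--error terms collapse to $\rateval(1-\disc)\eps\tfrac{7\disc+1}{32}\le\tfrac14(1-\disc)\rateval\eps$ (using $\disc<1$), while $-(1-\disc)\rateval|u_s-v_s|\le-\tfrac{27}{32}(1-\disc)\rateval\eps$; summing gives $\diff{|u_s-v_s|}\le-\tfrac{19}{32}(1-\disc)\rateval\eps\le-\tfrac12(1-\disc)\rateval\eps$. The reverse sign case $u_s(t)<v_{s,u(t)}$ is identical after writing $|u_s-v_s|=v_s-u_s$ and exchanging the one-sided uses of $(\ref{eq:sbr:hypcv})$. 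I expect the main obstacle to be precisely isolating the comparison $\max_{s'}|u_{s'}-v_{s'}|\le|u_s-v_s|+O((1-\disc)\eps)$, since that is where the roles of $s_f(t)$ and of the closeness hypothesis come in and where the transition-weighted term $\disc\sum_{s'}P_{s,s'}(x^\star_s)\dot u_{s'}$ coming from $\diff{v_{s,u(t)}}$ is rendered harmless; everything else is bookkeeping, and note in passing that this argument uses only the continuation-payoff rate $\rateval$ and never $\ratemin$.
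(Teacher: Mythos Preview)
Your argument is correct and follows the same overall strategy as the paper: use (\ref{eq:sbr:hypcv}) to replace $\hat f_{s'}+\smness h_{s'}$ by $v_{s'}$ up to $O((1-\disc)\eps)$, exploit the $\disc$--contraction of $v_{s,u}$ in $u$, and combine the maximality of $s_f$ with the closeness hypothesis to force a lower bound of order $\eps$ on the controlling quantity. The organizational difference is that you compute $\diff{v_{s,u(t)}}=\disc\sum_{s'}P_{s,s'}(x^\star_s)\,\dot u_{s'}$ explicitly via the envelope theorem for the unique regularized saddle and phrase the whole estimate through $|u_s-v_s|$ and $\max_{s'}|u_{s'}-v_{s'}|$, whereas the paper invokes the black--box Lipschitz bound $\big|\diff{v_{s,u(t)}}\big|\le\disc\max_{s'}|\dot u_{s'}|$ from \cite{leslieBestresponseDynamicsZerosum2020} and keeps $M:=|u_{s_f}-f_{s_f}-\smness h_{s_f}|$ as the running quantity, so that $-M$ (from the bound on $\dot u_s$) and $+\disc M$ (from the bound on $|\dot v_s|$) combine to $-(1-\disc)M\le-(1-\disc)\eps$. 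Your bookkeeping is a bit cleaner and yields the slightly sharper intermediate constant $-\tfrac{19}{32}$; the paper's route avoids having to argue differentiability of $u\mapsto v_{s,u}$ and only needs the Lipschitz property.
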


        \begin{proof}

            First, using Lemma~A.2 of \cite{leslieBestresponseDynamicsZerosum2020} on the regularity of the value of a zero-sum (static) game, it follows:
            \begin{equation}
                \begin{aligned}\left|\diff {\vu}\right| & \leq \delta \max_{s\in \states} | \dotauxval| \\
                &= \delta \rateval |\fsf + \tblf - \usf| + \delta \rateval  \frac{(1-\delta)\eps}{16} \\
                & \leq \delta \rateval \eps(1+\frac{1-\delta}{16})
                \end{aligned}
                \label{pr:sbr:zs7}
            \end{equation}
            
            We now prove that $\auxval$ moves towards $\vu$ at a constant speed relatively to $\rateval$:
    
            \begin{itemize}
    
            \item If $u_s(t) \geq \vu$, then $|\usf-\vsf|-\auxval+\vu \leq \frac{(1-\delta)\eps}{8}$.

            $$\begin{aligned}
            \dotauxval & = \rateval \left(\estauxpoff + \smness \tble - \auxval\right) \\
            &\leq \rateval \left(\auxpoff + \smness \tble - \auxval \right) +\rateval \frac{(1-\delta)\eps}{16} \\
            & \leq \rateval \left(\auxpoff + \smness \tble + \frac{(1-\delta)\eps}{8}-\vu-|\usf-\vsf|\right) \\ & \ \ \ \ + \rateval \frac{(1-\delta)\eps}{16}\\
            & \leq \rateval \left(\frac{4(1-\delta)\eps}{16}-|\usf-\vsf|\right) \\
            & \leq \rateval \left(\frac{(1-\delta)\eps}{4}-|\usf-\fsf-\tblf|\right) \\
            & \leq \rateval \left(\frac{(1-\delta)\eps}{4}-\eps\right)
            \end{aligned}$$
    
            Summing with $\vu$ and using (\ref{pr:sbr:zs7}):
            $$
            \begin{aligned}
            \diff{\auxval-\vu} & \leq\rateval \left(\frac{(1-\delta)\eps}{4} - \eps + \delta \eps + \delta \eps \frac{1-\delta}{16}\right) \\
            & \leq \rateval \eps \left( \frac{1-\delta}{2} -1 + \delta\right) \\
            & \leq - \rateval \left(\frac{2(1-\delta)\eps}{4}\right) \\
            \end{aligned}
            $$

            \item If $u_s(t) \leq \vu$, similar calculations yield the same result.
            \end{itemize}

        \end{proof}

        \begin{lemma}
            For all $s\in \states$, $\lim\sup_{t\rightarrow \infty} |\auxval-\auxpoff-\smness \tble| \leq 4\eps$.
        \label{lem:sbr:zs:cvval}
        \end{lemma}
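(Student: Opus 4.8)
Throughout, $f_{s,u}$ and $h$ abbreviate $f_{s,u(t)}(x_s(t))$ and $h(x_s(t))$ (the $t$-dependence suppressed as in~(\ref{eq:sbr:hypcv})); write $\phi_s(t):=|u_s(t)-v_{s,u(t)}|$ for the gap, at state $s$, between the continuation payoff and the value of the corresponding auxiliary game, and $\psi(t):=\max_{s\in\states}\phi_s(t)$. The plan is to prove $\limsup_{t}\psi(t)\le\theta$ with $\theta:=\eps+\tfrac{(1-\disc)\eps}{32}$. Since the third bound in~(\ref{eq:sbr:hypcv}) gives $\big|\,|u_s(t)-f_{s,u}-\smness h|-\phi_s(t)\,\big|\le\tfrac{(1-\disc)\eps}{32}$ for every $s\in\states$ and every $t\ge t_1:=t_1(\eps)$, this at once yields $\limsup_t|u_s(t)-f_{s,u}-\smness h|\le\theta+\tfrac{(1-\disc)\eps}{32}=\eps+\tfrac{(1-\disc)\eps}{16}<2\eps\le4\eps$, which is the claim.

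The heart of the argument is the following claim: \emph{for almost every $t\ge t_1$ with $\psi(t)\ge\theta$ one has $\dot\psi(t)\le-\tfrac{(1-\disc)\rateval\eps}{2}$.} To prove it, fix such a $t$, pick $s_0=s_0(t)$ attaining the maximum defining $\psi(t)$, and check the hypotheses of Lemma~\ref{lem:usf} for the pair $(s_f(t),s_0)$. By the definition of $s_f(t)$ and the third bound in~(\ref{eq:sbr:hypcv}), $|u_{s_f}-f_{s_f,u}-\smness h|\ge|u_{s_0}-f_{s_0,u}-\smness h|\ge\phi_{s_0}(t)-\tfrac{(1-\disc)\eps}{32}=\psi(t)-\tfrac{(1-\disc)\eps}{32}\ge\eps$, and the same bound gives $\phi_{s_f}(t)\in\big[\psi(t)-\tfrac{(1-\disc)\eps}{16},\,\psi(t)\big]$, hence $\big|\phi_{s_f}(t)-\phi_{s_0}(t)\big|\le\tfrac{(1-\disc)\eps}{16}\le\tfrac{(1-\disc)\eps}{8}$. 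Thus~(\ref{eq:sbr:hypcv}) holds and both hypotheses of Lemma~\ref{lem:usf} hold with the state $s_0$, so Lemma~\ref{lem:usf} yields $\dot\phi_{s_0}(t)\le-\tfrac{(1-\disc)\rateval\eps}{2}$. Finally, $u_s(\cdot)$ is Lipschitz (Lemma~\ref{lem:sbr:lip}) and $t\mapsto v_{s,u(t)}$ is Lipschitz (the value of a finite matrix game is Lipschitz in its entries, Lemma~A.2 of \cite{leslieBestresponseDynamicsZerosum2020}), so all $\phi_s$ and $\psi$ are Lipschitz, hence differentiable at a.e.\ $t$; at such a $t$, $\psi-\phi_{s_0}\ge0$ vanishes at $t$, so $\dot\psi(t)=\dot\phi_{s_0}(t)$, and the claim follows.

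With the claim established, I would close by a standard ``integrable decrease'' argument. Set $g(t):=\max\{\psi(t)-\theta,\,0\}\ge0$; $g$ is Lipschitz, and by the claim $\dot g\le0$ a.e.\ on $[t_1,\infty)$ (the only nontrivial case is $\psi(t)>\theta$, where $\dot g=\dot\psi\le0$), so $g$ is non-increasing there and converges to some $\ell\ge0$. If $\ell>0$, then $\psi(t)\ge\theta+\ell>\theta$ for all $t\ge t_1$, so the claim gives $\dot g(t)\le-\tfrac{(1-\disc)\rateval\eps}{2}$ a.e., whence $g(t)\le g(t_1)-\tfrac{(1-\disc)\eps}{2}\int_{t_1}^{t}\rateval[u]\,du\to-\infty$ by hypothesis~\ref{hyp:ratesbr}, contradicting $g\ge0$. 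Hence $\ell=0$, i.e.\ $\limsup_t\psi(t)\le\theta$, which with the reduction of the first paragraph proves the lemma.

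The step I expect to be the main obstacle is the second paragraph: turning Lemma~\ref{lem:usf} — a statement about one state whose value-gap is close to the worst one — into a bound on the a.e.\ derivative of the non-smooth envelope $\psi=\max_s\phi_s$, while keeping track that $v_{s,u(t)}$ is only Lipschitz (not $C^1$) in $t$ and that the constants of~(\ref{eq:sbr:hypcv}) line up exactly both with the threshold $\theta$ and with the two hypotheses of Lemma~\ref{lem:usf}.
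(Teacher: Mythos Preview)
Your proof is correct and follows essentially the same approach as the paper: reduce to controlling a gap between $u_s$ and the auxiliary-game value $v_{s,u}$, apply Lemma~\ref{lem:usf} to obtain a differential inequality on that gap once it exceeds a threshold of order $\eps$, and then integrate using $\int\rateval=\infty$ from~\ref{hyp:ratesbr}. Your treatment of the non-smooth envelope $\psi=\max_s\phi_s$ is in fact more explicit than the paper's (which tracks $|u_{s_f(t)}-v_{s_f(t),u(t)}|$ with the time-varying index $s_f(t)$ and is somewhat loose about differentiating it), and your constants are tighter, but the core argument is the same.
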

        \begin{proof}
            We define $g(t) = \max\{|\usf-\vsf|, 3\eps\}$.
    
    
            Now, if $|\usf-\vsf|\leq 2 \eps$, then $\diff g = 0$. If $|\usf-\vsf| \geq 2 \eps$ and if $t$ is greater than $t^1(\eps)$, then $|\usf-\fsf-\tblf| \geq \eps$: indeed, $|\fsf + \tblf - \vsf| \leq \eps$ because of Lemma~\ref{lem:sbr:zs:cvduality} and its corollary \ref{eq:sbr:hypcv}. Similarly, on a neighbourhood of $t$, every $s$ that maximizes $|\auxpoff + \smness \tble  - \auxval|$ satisfies the condition of Lemma~\ref{lem:usf}, because $|\auxpoff+\smness \tble -\vu| \leq \eps$ according to the same Lemma~\ref{lem:sbr:zs:cvduality}. Therefore, Lemma~\ref{lem:usf} can be used and: $$\diff g \leq- \frac{3(1-\delta)\rateval \eps}{4}$$
            
            This holds as soon as $g(t) > 2\eps$ and $t > t^1(\eps)$. The integral of $\rateval[]$ is infinite (hypothesis \ref{hyp:ratesbr}), so there is a $t^2(\eps)$ such that for $t \geq t^2(\eps)$, $g(t) = 2\eps$.
            
            Then, using \ref{eq:sbr:hypcv}, we have $|\usf-\fsf-\tblf| \leq 3\eps$ and by definition of $s_f$, the inequality of the lemma.

        \end{proof}

    \subsubsection{Convergence of actions}
        \begin{lemma}
        For all $s \in \states$, $x_s(t)$ converge to the set of $3\eps$-Regularized Nash equilibria of the auxiliary game.
        \end{lemma}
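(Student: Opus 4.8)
The plan is to read the statement off directly from three ingredients that are already in place: the duality-gap estimate of Lemma~\ref{lem:sbr:zs:cvduality}, the elementary fact that the duality gap of a two-player zero-sum game upper bounds the gain each player can obtain by a unilateral deviation (exactly the sandwiching the paper invokes to observe $\duagap[\ti][\st]\ge 0$), and the uniform control of the estimation error given by Lemma~\ref{lem:sbr:cvest}, i.e.\ the first line of~(\ref{eq:sbr:hypcv}). No new dynamical argument is needed here: once $\duagap[\ti][\st]$ is known to be eventually small, $\auxact[\ti][][\st]$ is an approximate equilibrium of the \emph{estimated} auxiliary game parameterized by $u(t)$, and correcting for the discrepancy between estimated and true auxiliary payoffs upgrades this to a $3\eps$-regularized equilibrium of the true auxiliary game.

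Concretely, I would first recall that $\eps$ was fixed by $\tfrac{(1-\disc)\eps}{16}=4\boundduagap\ratelim\ratemin^{-1}$, so that $2\boundduagap\ratelim\ratemin^{-1}=\tfrac{(1-\disc)\eps}{32}$ and Lemma~\ref{lem:sbr:zs:cvduality} gives $\limsup_{\ti\to\infty}\duagap[\ti][\st]\le\tfrac{(1-\disc)\eps}{32}$ for every $\st$; hence there is a time, which one may take $\ge t_1(\eps)$, past which $\duagap[\ti][\st]\le\tfrac{(1-\disc)\eps}{16}$ for all $\st$. Since the first (resp.\ second) term defining $\duagap[\ti][\st]$ is at least (resp.\ at most) the current common regularized auxiliary payoff, the gain either player can obtain in the estimated auxiliary game parameterized by $u(t)$ by deviating at state $\st$ is at most $\duagap[\ti][\st]\le\tfrac{(1-\disc)\eps}{16}$; that is, $\auxact[\ti][][\st]$ is a $\tfrac{(1-\disc)\eps}{16}$-regularized equilibrium of that estimated game. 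Then, using that the estimated and true auxiliary payoffs differ by at most $\tfrac{(1-\disc)\eps}{16}$ uniformly in the profile argument (Lemma~\ref{lem:sbr:cvest}, first inequality of~(\ref{eq:sbr:hypcv})), a unilateral deviation changes the \emph{true} regularized auxiliary payoff by at most $\duagap[\ti][\st]+2\cdot\tfrac{(1-\disc)\eps}{16}\le\tfrac{3(1-\disc)\eps}{16}\le 3\eps$, so for all large $\ti$ the profile $\auxact[\ti][][\st]$ is a $3\eps$-regularized Nash equilibrium of the auxiliary game parameterized by $u(t)$.

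Finally, to turn ``eventually a $3\eps$-regularized equilibrium'' into the asserted convergence, I would note that the map sending a pair (profile, continuation-payoff vector) to the duality gap of the corresponding auxiliary game is continuous (maximum theorem, as in Lemma~\ref{lem:sbr:cont}), so the set of pairs $(x,u)$ for which $x$ is a $3\eps$-regularized equilibrium of the auxiliary game parameterized by $u$ is closed; the trajectory $(x(t),u(t))$ eventually lies in it, and together with the control of $u(t)$ from Lemma~\ref{lem:sbr:zs:cvval} (which pins it near a fixpoint of the Shapley operator) this gives the convergence of $\auxact[\ti][][\st]$ to the set of $3\eps$-regularized equilibria, and with it the zero-sum half of Theorem~\ref{thm:sbr:cv}. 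The one genuinely delicate point is not computational but a matter of formulation: the auxiliary game carries the \emph{moving} parameter $u(t)$, so one must either phrase the conclusion relative to the graph of $3\eps$-regularized equilibria over continuation payoffs, or first secure the (near-)convergence of $u(t)$; everything else is bookkeeping of the $\eps$-scaled error terms and the standard ``small duality gap $\Rightarrow$ approximate equilibrium'' implication, applied to the estimated game before the estimation error is absorbed.
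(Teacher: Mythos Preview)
Your proposal is correct and, at its core, uses the same mechanism the paper relies on: a small duality gap in the (regularized) auxiliary game forces the current profile to be an approximate equilibrium. The paper's one-line proof phrases this as ``$\auxpoff$ is $3\eps$ close to $\vu$'' and cites the proof of Lemma~\ref{lem:sbr:zs:cvval}; you instead invoke Lemma~\ref{lem:sbr:zs:cvduality} directly and then absorb the estimation error via Lemma~\ref{lem:sbr:cvest}/(\ref{eq:sbr:hypcv}), without passing through the value. This is a minor but genuine difference in packaging: your route is more self-contained (it does not need the previous lemma at all), and it actually yields the sharper bound $\tfrac{3(1-\disc)\eps}{16}$ on each player's deviation gain before relaxing to $3\eps$. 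The paper's route, by contrast, leans on the intermediate control of $u(t)$ near $v_{s,u(t)}$ obtained inside the proof of Lemma~\ref{lem:sbr:zs:cvval}; note that ``payoff close to the value'' is not by itself enough to conclude an $\epsilon$-equilibrium in a zero-sum game, so the duality-gap sandwich you spell out is exactly the missing step that makes the paper's sentence rigorous. Your final remark about the moving parameter $u(t)$ is well taken and matches how the paper uses this lemma downstream (the conclusion is really about the graph of $3\eps$-regularized equilibria over continuation payoffs).
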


        \begin{proof}
        The previous proof gives that $\auxpoff$ is $3\eps$ close to $\vu$, hence the result.
        \end{proof}
}

\section{Convergence of smooth fictitious-play in identical-interest and zero-sum stochastic games}\label{app:sfp}

\NewDocumentCommand{\ict}{}{L}
\NewDocumentCommand{\firstict}{}{A}
\NewDocumentCommand{\secondict}{}{B}

In order to prove the convergence of the discrete-time procedures described in this paper, we use the theory of stochastic approximations to relate continuous-time and discrete-time systems. The proof is in two steps: first we characterize the internally chain transitive sets of the continuous-time systems (definition below), then we show with stochastic approximations theorem that limit sets of the discrete time systems are included in internally chain transitive sets.

\begin{definition}[Internally chain transitive]
A set $A$ is internally chain transitive for a differential inclusion $\diff y \in F(y)$ if it is compact and if for all $x, x' \in A$, $\epsilon > 0$ and $T > 0$ there exists an integer $n\in \mathbb N$, solutions $y_1, \ldots y_n$ to the differential inclusion and real numbers $t_1, t_2, \ldots, t_n$ greater than $T$ such that:
\begin{itemize}
    \item $y_i(s) \in A$ for $0\leq s \leq t_i$
    \item $\|y_i(t_i)-y_{i+1}(0)\|\leq \epsilon$
    \item $\|y_1(0)-x| \leq \epsilon$ and $\|y_n(t_n)-x'\|\leq \epsilon$
\end{itemize}
\label{def:ict}
\end{definition}

The previous definition means that if a set $A$ is internally chain transitive, two points can be linked with solutions of at least length $T$ in at most $n$ steps, where $n$ depends on $T$. These sets contain the limit sets of the discrete-time counterparts of the differential inclusions (see (\cite{benaimStochasticApproximationsDifferential2005}) for an introduction to the theory of stochastic approximations and formal statements), that is systems of the form:
$$y_{n+1}-y_n \in \beta_n (F(y_n)+ U_n)$$
where $U_n$ is typically a zero-mean noise with bounded variance.

In the following, we characterize internally chain transitive sets and use an asynchronous extension of the theory of stochastic approximations initially presented in \cite{perkinsAdvancedStochasticApproximation2013} and later extended in \cite{baudinFictitiousPlayBestResponse2022}.

\

\subsection{Payoff perturbation}

    In \refmbrd, we supposed that players did not observe the actual stage reward $r^i_{s_n}(a_n)$ but a perturbation of this value. Formally, the expectancy of $R^i_n$ must be $r^i_{s_n}(a_n)$ conditionally on the history:
    \begin{equation}
        \mathbb E \left[ R^i_n | \mathcal F_{n-1}\right] = r^i_{s_n}(a_n)
    \end{equation}
    where $\mathcal F_{n-1}$ is the $\sigma$-algebra that contains all information up to step $n-1$. Furthermore, we require the variance to be bounded, i.e. $\operatorname{var}(R^i_n | \mathcal F_{n-1})$ is bounded.

\subsection{Convergence of estimates}{
    
    \begin{lemma}\label{lem:sfp:firstict}
        If $\ict$ is an internally chain transitive set of \refmbrd, then it is included in $\firstict$, where:
    $$\firstict := \left\{(\auxact[][\pl], \auxval[][\pl], \estpoff[][\pl][\st][], \esttrans[\st][][][])_{\st, \pl}\ \left|\ \forall \st, \estpoff[][\pl][\st][] = \poff[]
    \land \esttrans[\st][][][]=\trans[\st][][] \right.\right\}$$
    \end{lemma}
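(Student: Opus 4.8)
The plan is to exploit the exponential convergence of the estimates along \emph{any} solution of \refmbrd{} (Lemma~\ref{lem:sbr:cvest}) together with the one structural fact we have about internally chain transitive sets: by Definition~\ref{def:ict}, every point of $\ict$ is, up to an arbitrarily small error, the time-$t_\star$ endpoint of a solution of \refmbrd{} that has remained inside $\ict$ for an arbitrarily long time $t_\star$.

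The first step is to make the constant $\cvestbound$ of Lemma~\ref{lem:sbr:cvest} uniform over the solutions we shall invoke. That lemma is stated for the running solution of the appendix, but its proof only uses the equations of \refmbrd{} and a bound on the initial discrepancies $|\estpoff[\purepb] - \poff[\purepb]|$ and $|\esttrans[\st][\stb][\purepb] - \trans[\st][\stb][\purepb]|$ at time $0$ (a one-line Grönwall argument). Since $\ict$ is compact, its points have uniformly bounded estimate coordinates, so Lemma~\ref{lem:sbr:cvest} holds with one common $\cvestbound$ for \emph{every} solution of \refmbrd{} whose value at time $0$ lies in $\ict$. Then I would fix $x\in\ict$ together with arbitrary $\epsilon>0$ and $T>0$ and apply Definition~\ref{def:ict} with $x'=x$; discarding all but the last solution it produces, we obtain a solution $y$ of \refmbrd{} and a time $t_\star\ge T$ such that $y(s)\in\ict$ for $0\le s\le t_\star$ (in particular $y(0)\in\ict$) and $\|y(t_\star)-x\|\le\epsilon$. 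Applying Lemma~\ref{lem:sbr:cvest} to $y$, the estimate coordinates of $y(t_\star)$ satisfy, for all $\st,\stb$ and every action profile $\purepb$,
\[
    |\estpoff[\purepb][\pl][\st][t_\star] - \poff[\purepb]| \le \cvestbound\exp(-\minsbr\ratemin T) \quad\text{and}\quad |\esttrans[\st][\stb][\purepb][t_\star] - \trans[\st][\stb][\purepb]| \le \cvestbound\exp(-\minsbr\ratemin T),
\]
and since $y(t_\star)$ is within $\epsilon$ of $x$, the estimate coordinates of $x$ obey the same bounds up to an extra additive $\epsilon$. Letting $\epsilon\to0$ and $T\to\infty$ forces the estimate coordinates of $x$ to equal $\poff[]$ and $\trans[\st][][]$ exactly; thus $x\in\firstict$, and since $x$ was an arbitrary point of $\ict$ we conclude $\ict\subseteq\firstict$.

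The step I expect to be the only real obstacle is the one just highlighted: ensuring a single constant $\cvestbound$ works across the infinitely many $(\epsilon,T)$-dependent solutions furnished by chain transitivity. This is resolved precisely by the compactness built into Definition~\ref{def:ict}. Once that is in place the rest is the triangle inequality and a double limit in $\epsilon$ and $T$, with no further estimates to carry out.
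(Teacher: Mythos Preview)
Your proof is correct and follows essentially the same approach as the paper: both rest on the exponential decay of the estimate errors along every trajectory of \refmbrd{} (Lemma~\ref{lem:sbr:cvest}). The only difference is packaging---the paper observes that the functions $\|\estpoff[][\pl][\st][]-\poff[]\|_\infty$ and $\|\esttrans[\st][][][]-\trans[\st][][]\|_\infty$ are strict Lyapunov functions and invokes the standard fact that an ICT set lies in the zero level set of such a function, whereas you unpack this implication directly from Definition~\ref{def:ict} and the uniform constant $\cvestbound$ afforded by compactness; your version is more self-contained, the paper's is a one-line appeal to a known principle.
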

    \begin{proof}
        We notice that functions $\estpoff[][\pl][\st][] \mapsto \|\estpoff[]-\poff[]\|_{\infty}$ and $\esttrans[\st][][][] \mapsto \|\esttrans-\trans[\st][][]\|_{\infty}$ are Lyapunov functions with calculations similar to Lemma~\ref{lem:sbr:cvest}: their derivative is smaller than $-\ratemin$ multiplied by their value. This implies that $\ict$ is contained in the inverse of zero of such functions, that is $\firstict$.
    \end{proof}

} 

\subsection{Convergence in identical-interest stochastic games}{

    \begin{lemma}\label{lem:sfp:ii:ict}
        Suppose that the system is autonomous, that is $\rateval$ is constant and we suppose $\rateval = 1$ (it can be generalized to any constant). Let $\ict$ be an internally chain transitive set of \refmbrd, then:
        \begin{equation}\label{lem:sfp:myict}
            \ict \subseteq \left\{(\auxact[], \auxval[], \estpoff[][][\st][], \esttrans[\st][][][])_{\st}\ \left|\ \forall \st, \auxpoff[] +\smness\tble[\auxact[]]= \auxval[]
            \land \auxact[] = \sbr[\st][\auxval[]][\auxact[]]
            \land \estpoff[][][\st][] = \poff[]
            \land \esttrans[\st][][][]=\trans[\st][][]
            \right. \right\}
        \end{equation}
        which means that $\ict$ contains only regularized equilibria.
    \end{lemma}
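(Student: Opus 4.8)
The plan is to collapse MBRD to its model-based counterpart on the ICT set and then run a two-stage Lyapunov argument, using the characterisation of internally chain transitive sets through Lyapunov functions (\cite{benaimStochasticApproximationsDifferential2005}). First, by Lemma~\ref{lem:sfp:firstict} we have $\ict\subseteq\firstict$, so on $\ict$ every estimator is exact; hence $\hat f_{s,u}(x_s)=f_{s,u}(x_s)$ there, and the restriction of MBRD to $\ict$ is the model-based system with $\beta\equiv 1$. Writing $\Gamma_s:=f_{s,u}(x_s)+\epsilon h(x_s)$ and $z_s:=\Gamma_s-u_s$, so that $\dot u_s=z_s$, and differentiating $\Gamma_s$ along a solution exactly as in the proof of Lemma~\ref{lem:sbr:inegdiff} but with the estimate-error term taken to be $0$, one gets $\dot z_s = G_s + \delta\sum_{s'}P_{s,s'}(x_s)z_{s'} - z_s$ with $G_s = \sum_j \alpha_s\langle \operatorname{sbr}^j_{s,u}(x_s)-x_s^j,\ \nabla_{x_s^j}(f_{s,u}+\epsilon h)(x_s)\rangle \ge 0$, the sign coming from concavity of $f_{s,u}+\epsilon h$ in each player's component and $\dot x_s$ pointing towards its maximiser. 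Denote by $\mathcal E$ the set on the right of~(\ref{lem:sfp:myict}); on $\ict$ (where estimates are exact) a point lies in $\mathcal E$ iff $z_s=0$ and $x_s=\operatorname{sbr}_{s,u}(x_s)$ for all $s$, and on any invariant subset of $\ict$ the condition ``$z\equiv 0$'' forces $\dot z\equiv 0$, hence $G_s\equiv 0$, hence (each summand being nonnegative, $f_{s,u}+\epsilon h$ strictly concave with a unique maximiser) $x_s=\operatorname{sbr}_{s,u}(x_s)$; so ``$z\equiv 0$'' on an invariant subset of $\ict$ already puts it inside $\mathcal E$.

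The first step is to show $z_s\ge 0$ on $\ict$. Since $\sum_{s'}P_{s,s'}(x_s)z_{s'}\ge\min_{s'}z_{s'}$ and $G_s\ge 0$, the function $m(t):=\min_s z_s(t)$ satisfies $\dot m\ge -(1-\delta)m$ wherever $m\le 0$, so $V:=\max\{0,-m\}$ is non-increasing along every solution of MBRD and strictly decreasing while positive; it vanishes on $\{z_s\ge 0\ \forall s\}\supseteq\mathcal E$, a set with $V$-image $\{0\}$. The Lyapunov characterisation of internally chain transitive sets (\cite{benaimStochasticApproximationsDifferential2005}) then forces $\ict\subseteq\{z_s\ge 0\ \forall s\}$. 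Consequently each $u_s$ is non-decreasing along any solution contained in $\ict$, and it is bounded (Lemma~\ref{lem:sbr:bounded}).

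The second step, and the main obstacle, is to upgrade this to $\ict\subseteq\mathcal E$. On the compact invariant set $\ict$ the function $W:=-\sum_s u_s$ has derivative $-\sum_s z_s\le 0$, vanishing exactly where $z_s=0$ for all $s$, so by the first paragraph its rest set on $\ict$ is precisely $\mathcal E\cap\ict$; applying the Lyapunov/ICT characterisation once more on $\ict$ gives $\ict\subseteq\mathcal E$. The non-routine point is the hypothesis of that theorem, namely that $W(\mathcal E\cap\ict)$ has empty interior --- equivalently, that the value vectors of the regularised stationary equilibria do not fill an interval under $u\mapsto\sum_s u_s$. I would obtain this either (i) by a transversality argument showing that the regularised equilibria are isolated --- the steep strictly concave regulariser making the fixed-point system $x_s=\operatorname{sbr}_{s,u^x}(x_s)$ non-degenerate --- or, to dispense with any genericity assumption, (ii) by replacing $W$ by the residual weighted by the stationary distribution $\pi^x$ of the chain $P(x)$ (well defined and continuous in $x$ by ergodicity): from $\pi^x P(x)=\pi^x$ one computes $\tfrac{d}{dt}\langle\pi^x,z\rangle\ge\langle\pi^x,G\rangle-(1-\delta)\langle\pi^x,z\rangle-O(\|\dot x\|)$, and since $x(t)$ converges along solutions in $\ict$ (Lemma~\ref{lem:sbr:cvact}) the remainder $\|\dot x\|$ is integrable, so $\langle\pi^x,z\rangle$ is bounded below by an integrable quantity and tends to $0$, forcing $z\equiv 0$ on $\ict$ just as in Lemma~\ref{lem:sbr:cvpoff}. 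Carrying out either (i) or (ii) --- in particular establishing that $\|\dot x\|$ is integrable --- is where the real work sits; the remainder parallels Section~\ref{app:sbr:ii}.
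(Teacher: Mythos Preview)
Your approach matches the paper's two-stage strategy almost exactly: reduce to $\firstict$ via Lemma~\ref{lem:sfp:firstict}, show $z_s:=\Gamma_s-u_s\ge 0$ on the ICT set, then use a second Lyapunov function on $\firstict\cap\{z\ge 0\}$ to force the full equilibrium conditions. Two differences are worth noting. First, for $z_s\ge 0$ the paper does not invoke the Lyapunov/ICT theorem but runs a direct $(\epsilon,T)$-chain argument by hand: on $\firstict$ the inequality of Lemma~\ref{lem:sbr:inegdiff} reduces to $\dot m\ge-(1-\delta)m$ with $m=\min_s z_s$, hence $m(t)\ge m(0)e^{-(1-\delta)t}$, and iterating this contraction along a chain that must return to a point with $m=-\epsilon$ yields a contradiction. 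Your route via $V=\max\{0,-m\}$ is cleaner and equally valid (the rest-set image is $\{0\}$, so the empty-interior condition is automatic). Second, for the final step the paper takes $\Gamma_s$ itself as the Lyapunov function --- non-decreasing on $\firstict\cap\secondict$ by~(\ref{pr:sbr:diff5}) with the estimate-error term removed --- rather than your $W=-\sum_s u_s$; on $\{z\ge 0\}$ both choices are monotone with the same rest set, so this is cosmetic.

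You are right to flag the empty-interior hypothesis of the Lyapunov/ICT theorem at the last step; the paper does not verify it either (its proof simply ends with ``which concludes the lemma''). But your proposed repairs do not close the gap as written. Fix~(i) needs smoothness on $h$ beyond the $C^1$ of~(\ref{eq:tble}) and a genuine transversality argument you have not supplied. Fix~(ii) has a concrete error: convergence of $x(t)$, which is all Lemma~\ref{lem:sbr:cvact} gives, does \emph{not} imply that $\|\dot x\|$ is integrable, so the $O(\|\dot x\|)$ remainder in your differential inequality for $\langle\pi^x,z\rangle$ is uncontrolled and the conclusion ``$\langle\pi^x,z\rangle\to 0$, forcing $z\equiv 0$ on $\ict$'' does not follow. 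If you want to bypass the empty-interior condition, the natural move is to mimic the paper's first-stage technique and run a direct chain argument for the second stage as well --- but that requires a quantitative lower bound on the increase of your Lyapunov function away from $\mathcal E$, which is precisely the missing ingredient.
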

    
    \begin{proof}{

        The proof proceeds with a sequence of inclusion. We show that any internally chain transitive set is contained in:
        $$\firstict := \left\{(\auxact[], \auxval[], \estpoff[][][\st][], \esttrans[\st][][][])_{\st}\ \left|\ \forall \st, \estpoff[][][\st][] = \poff[]
        \land \esttrans[\st][][][]=\trans[\st][][] \right.\right\}$$
        $$\secondict := \left\{(\auxact[], \auxval[], \estpoff[][][\st][], \esttrans[\st][][][])_{\st}\ \left|\ \forall \st, \auxpoff[]+\smness\tble[\auxact[]] \geq \auxval[] \right.\right\}$$
        and then in the set of Eq. (\ref{lem:sfp:myict}).

        \NewDocumentCommand{\abbrauxpoff}{O{\ti} O{} O{} O{\st} O{}}{\Gamma_{#4}\braced{#1}}
        \NewDocumentCommand{\stm}{O{\ti}}{s_-\braced{#1}}
        \NewDocumentCommand{\argmin}{m}{\arg\min_{#1}}

        {
        \RenewDocumentCommand{\auxpoff}{}{\abbrauxpoff}
        \NewDocumentCommand{\mauxval}{O{\ti} O{#1} O{\stm[#2]}}{\auxval[#1][][#3]}
        \NewDocumentCommand{\mauxpoff}{O{\ti} O{#1} O{\stm[#2]}}{\auxpoff[#1][][][#3]}
    
        Regarding $\ict \subseteq \firstict$, this is exactly Lemma~\ref{lem:sfp:firstict} when payoffs functions are equal.
    
        In (\ref{pr:sbr:diff3}), the term $-2\cvestbound\exp(-\minsbr \ratemin \ti)$ comes from the fact that $\estpoff[]$ and $\esttrans$ have not converged yet. Therefore, if we are in $\firstict$, these terms disappears and the new differential inequality resulting from computations of Lemma~\ref{lem:sbr:inegdiff} is:
        \begin{equation*}
            \diff{\mauxpoff[]-\mauxval[]} \geq \rateval (\disc-1)(\mauxpoff-\mauxval)
        \end{equation*}
        which, using Grönwall lemma and $\rateval=1$, implies that:
        \begin{equation*}
            \mauxpoff-\mauxval \geq \left(\mauxpoff[0]-\mauxval[0]\right)\exp\left( -(1-\delta)\ti  \right)
        \end{equation*}
    
        Therefore, for all states $\st$:
        \begin{equation}\label{pr:sbr:gap0}
            \auxpoff-\auxval \geq \left(\mauxpoff[0]-\mauxval[0]\right)\exp\left( -(1-\disc)\ti   \right)
        \end{equation}
        
        \NewDocumentCommand{\dist}{}{\epsilon}
        \NewDocumentCommand{\lensol}{}{T}
    
        Then, let $a$ be a point of $\ict$ with for all $\st$, $\auxpoff[]-\auxval[] \geq -\dist$ with equality for a state (and $\dist > 0$). We suppose that the lipschitz constant of $\auxpoff[][][]$ is $1$, other cases are analoguous. By definition, it is linked to itself by solutions of \refmbrd which are collated with an arbitrary gap, we can take $\dist/2$ and of length at least an arbitrary $\lensol$, take $\log(4)/(1-\disc)$. Now, after the first solution, by (\ref{pr:sbr:gap0}), the difference between the auxiliary payoff $\auxpoff[]$ and the auxiliary value $\auxval[]$ is greater than $-3\dist/2\exp(-\log(4))$ (because the initial value of $\auxpoff[]-\auxval[]$ is greater than $-\dist-\dist/2$ and then using (\ref{pr:sbr:gap0})). This implies, by recurrence, that the difference between $\auxpoff[]-\auxval[]$ is strictly greater than $-\dist$, which is absurd since the end of the chain is $a$.\todo{à développer}

        Therefore, $\ict \subseteq \secondict$.
    
        Then, relatively to $\firstict \cap \secondict$, $\auxpoff[]$ is a Lyapunov function because $\diff{\auxpoff[]}$ is greater than $0$ based on (\ref{pr:sbr:diff5}), and strictly greater than $0$ for points outside of the set of (\ref{lem:sfp:myict}), which concludes the lemma.\todo{pourrait être développé}
        
        }

    }\end{proof}
    
    \begin{proof}[Proof of Theorem~\ref{thm:sfp:team}]{
        
        \RenewDocumentCommand{\rateval}{}{\beta}
        With Lemma~\ref{lem:sfp:ii:ict}, we known that ICT sets of \refmbrd are contained in the set of regularized equilibria. The rest of the proof uses stochastic approximations results to show that ICT sets contains the limit sets of \ref{eq:mfp} and \ref{eq:sfp}.
    
        In order to do this, we use Theorem~D.5\todo{à préciser} of \cite{baudinFictitiousPlayBestResponse2022}, an extension of Theorems of \cite{perkinsAsynchronousStochasticApproximation2012} and \cite{benaimStochasticApproximationsDifferential2005}.
    
        Let us recall the two systems that we want to relate:
    
        \begin{equation}\label{eq:mfp2}\tag{MFP}
            \left\{
                \begin{aligned}
                    & \ddotauxval = \drateval \left(\destauxpoff + \smness \dtble - \dauxval\right) \\
                    & \ddotauxact = \drateact \left(\dact - \dauxact\right) \\
                    & \desttrans[\st][s'][a][\nti] - \desttrans[\st][s'][a] = \frac{1_{\curst[\dti]=\st\land \act[\dti]=a}}{\sum_{k=0}^{\dti} 1_{\curst[k]=\st\land \act[k]=a}}\left(1_{\curst[\nti]=s'}  - \desttrans[\st][s'][a]\right) \\
                    & \destpoff[a][][\st][\nti] - \destpoff[a][][\st][\dti]= \frac{1_{\curst[\dti]=\st\land \act[\dti]=a}}{\sum_{k=0}^{\dti} 1_{\curst[k]=\st\land \act[k]=a}}\left(\tipoff[\dti]  -\destpoff[a][][\st][\dti]\right)\\
                    & \dact[\dti][\pl] \sim \dsbr
                \end{aligned}
            \right.
        \end{equation}
        which is MFP rewritten with incremental updates for estimators, and
        
        \begin{equation}\left\{
            \begin{aligned}
                & \dotauxval  = \rateval \left(\estauxpoff + \smness \tble -\auxval\right) \\
                & \dotauxact  = \rateact \left(\act - \auxact\right) \\
                & \dotesttrans[\st][s'][b] = \rateact \act (b) \left( \trans[\st][s'][b] - \esttrans[\st][s'][b]\right)\\
                & \dotestpoff[b] = \rateact \act (b) \left( \poff[b]  - \estpoff[b]\right)\\
                & \act[\ti][\pl] = \sbr \\
                & \rateact  \in [\ratemin, 1] \\ 
            \end{aligned}
        \right.\tag{MBRD}\label{eq:mbrd2}\end{equation}
        
        The first system is of the form:
        \begin{equation}
            y_{\nti} - y_\dti \in S_n \cdot (F(y_\dti) + U_\dti)
        \end{equation}
        where $\cdot$ is the pairwise multiplication of two vectors, and the second one is:
        \begin{equation}
            \diff y \in S_\ti \cdot F(y)
        \end{equation}
        
        Note that this is the same $F$ between two systems and that the expectancy of $U_\dti$ is zero. Vectors $S_\dti$ and $S_\ti$ are the update rates of every variable. For every discrete variable that are updated with an indicator function, the steps are decreasing in $\frac{1}{\dti}$ in the number of times that the indicator was equal to $1$. For $\dauxval$, it is also updated in $\frac{1}{\dti}$. Therefore, our update step sequence is $\gamma_{\dti} = \frac{1}{\dti}$. Furthermore, update steps $S_\dti$ and $S_\ti$ are correlated vectors, meaning that the update rate of every variable is not independent (for instance, for a fixed state and a fixed action, $\destpoff[a][][\st][\ti]$ and $\desttrans[\st][s'][a][\ti]$ are updated at the same rate. So this is a correlated asynchronous system, as described in \cite{baudinFictitiousPlayBestResponse2022}.
        
        Now, we must verify every hypothesis of the theorem:
            \begin{enumerate}[label=(\roman*)]
                \item All values of \ref{eq:mfp} are bounded, so they belong to a compact.
                \item The right hand side F of \refmbrd is a Marchaud map because it has bounded, closed and convex images, and its graph is itself closed.
                \item We use steps $\gamma_n = \frac{1}{n}$ which satisfy the usual properties: it is decreasing, it goes to $0$ and the sum is equal to $\infty$.
                \item The game is ergodic, and the next states are randomly chosen based uniquely on the current state and history, so properties on the transitions are verified.
                \item There is a correlation between $U_n$ and $S_n$, therefore it is necessary to prove the Kushner-Clark noise condition separately (see \cite{perkinsAsynchronousStochasticApproximation2012}), one variable at a time. Regarding $\desttrans$ and $\destpoff[]$, this is true because the noise is uniformly bounded in $L^2$ (bounded variance) and the update step is $1/t$, as noted in Proposition~1.4 and Remark~1.5 of \cite{benaimStochasticApproximationsDifferential2005}. Regarding variables $\dauxact$ and $\dauxval$ there is no correlation between the update steps and the noise, so standard theorems apply (for instance Lemma 3.3 of \cite{perkinsAsynchronousStochasticApproximation2012}).
                \item The squared sum of $\gamma_n$ converges and the noise is bounded.
                \item There is no additional drift, so $d_n = 0$ in our case.
            \end{enumerate}
        
            Therefore, we can use the theorem, which implies that the limit set of \ref{eq:mfp} is an internally chain transitive set of \refmbrd. Combined with Lemma~\ref{lem:sfp:ii:ict}, this gives the desired result.
    
    }\end{proof}

} 

\subsection{Convergence in zero-sum stochastic games} {

    The scheme of the proof is similar to that of identical-interest stochastic games: first, we characterize internally chain transitive sets of \refmbrd, then we show that systems \refmbrd and \ref{eq:mfp} can be related using stochastic approximations theory, thus the characterization of limit sets of \ref{eq:mfp}. Throughout this part, we use notations and hypothesis from subsection~\ref{app:sbr:zs}.
    
    \NewDocumentCommand{\boundict}{}{E}
    
    \begin{lemma}[Internally chain transitive sets in ZS case] Let $\ict$ be an internally chain transitive sets of system \refmbrd in the ZS case with the same initial values for all players. Then there exists $\boundict$ such that $\ict$ is contained in the following set:
    
        $$\secondict^{\boundict\ratelim} := \left\{(\auxact[], \auxval[], \estpoff[][][\st][], \esttrans[\st][][][])_{\st}\ \left|\ \forall \st, |\auxpoff[]+\smness\tble[\auxact[]]-\auxval[]| \leq \boundict \ratelim \right.\right\}$$
        \label{lem:sfp:zs:ict}
    \end{lemma}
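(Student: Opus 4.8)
The plan is to exhibit two Lyapunov functions for \refmbrd{} --- restricted to the slice on which the payoff and transition estimators are already exact --- and then invoke the standard characterization of internally chain transitive (ICT) sets as subsets of the zero set of a continuous Lyapunov function whose image has empty interior (\cite{benaimStochasticApproximationsDifferential2005}). First I would apply Lemma~\ref{lem:sfp:firstict} to get $\ict \subseteq \firstict$, so along $\ict$ the estimators satisfy $\estpoff[][\pl][\st][] = \poff[]$ and $\esttrans[\st][][][] = \trans[\st][][]$; in particular the estimated auxiliary payoff equals the true one, and, using the equal-initial-conditions hypothesis, the two players' continuation values are opposite, so one works with a single $\auxval[]$ exactly as in Subsection~\ref{app:sbr:zs}. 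Since $\firstict$ is forward-invariant for \refmbrd{} (the estimator part of the vector field vanishes there), $\ict$ is still an ICT set for the restriction of \refmbrd{} to $\firstict$, and on that restriction every exponential error term of Subsection~\ref{app:sbr:zs} is identically zero.

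Next I would build the first Lyapunov function from the duality gap $w_\st$ of the auxiliary game. With the error terms gone, Lemma~\ref{lem:sbr:zs:diffduality} gives, for each $\st$ and every solution lying in $\firstict$, $\diff{w_\st} \le -\ratemin\, w_\st + D\ratelim$, where $D$ is the game-dependent constant of that lemma. Hence $\{\, w_\st \le 2D\ratelim\ratemin^{-1}\ \text{for all }\st \,\}$ is forward-invariant, and $V_1 := \max_{\st\in\states}\max\{\, w_\st - 2D\ratelim\ratemin^{-1},\, 0\,\}$ is a continuous function, non-increasing along every solution and strictly decreasing wherever it is positive, hence a Lyapunov function for its zero set (whose $V_1$-image is $\{0\}$). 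The Lyapunov characterization of ICT sets then forces $\ict \subseteq \{\, w_\st \le 2D\ratelim\ratemin^{-1}\ \text{for all }\st \,\}$. Because the duality gap controls the distance of the auxiliary payoff $\auxpoff[] + \smness\tble[\auxact[]]$ to the value $v_\st$ of the auxiliary game parameterized by $\auxval[]$, this yields, uniformly on $\ict$, $|\auxpoff[] + \smness\tble[\auxact[]] - v_\st| \le 2D\ratelim\ratemin^{-1}$ for every $\st$ --- the estimates (\ref{eq:sbr:hypcv}), now valid on the whole of $\ict$ with a constant proportional to $\ratelim$ instead of merely asymptotically.

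Then I would build the second Lyapunov function by transcribing Lemmas~\ref{lem:usf} and~\ref{lem:sbr:zs:cvval}. On $\ict$, which lies in $\firstict$ intersected with the small-duality-gap set just obtained, the hypotheses of Lemma~\ref{lem:usf} are satisfied with the role of $\xi$ played by a suitable multiple of $\ratelim$ and with exact estimators, so whenever $\max_\st |\auxval[] - v_\st|$ exceeds a game-dependent constant times $\ratelim$, say $C\ratelim$, that maximum strictly decreases along every solution at a rate bounded below by a positive multiple of the update rate; the decisive input, exactly as in (\ref{pr:sbr:zs7}) via Lemma~A.2 of \cite{leslieBestresponseDynamicsZerosum2020}, is that the moving target $v_\st$ drifts at rate at most $\disc$ times that of $\auxval[]$, so $\auxval[]$ genuinely catches up. Since the update rate has infinite integral by (\ref{hyp:ratesbr}), $V_2 := \max_{\st\in\states}\max\{\, |\auxval[] - v_\st| - C\ratelim,\, 0\,\}$ is then a Lyapunov function for $\{\, |\auxval[] - v_\st| \le C\ratelim\ \text{for all }\st \,\}$, so $\ict$ is contained there as well. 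Combining the two inclusions, every point of $\ict$ satisfies
\[
|\auxpoff[] + \smness\tble[\auxact[]] - \auxval[]| \le |\auxpoff[] + \smness\tble[\auxact[]] - v_\st| + |v_\st - \auxval[]| \le (2D\ratemin^{-1} + C)\,\ratelim \qquad \text{for every } \st,
\]
which is the claim with $\boundict := 2D\ratemin^{-1} + C$ (depending only on $G$, through $D$, $\disc$ and $\ratemin$).

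The step I expect to be the main obstacle is upgrading the ``$\limsup$ along a fixed solution'' statements of Subsection~\ref{app:sbr:zs} into honest Lyapunov functions for the differential inclusion: one must check that the relevant sublevel sets are forward-invariant (so that the candidate functions are non-increasing, not merely eventually decreasing from large values), verify that the non-autonomous update rate is correctly absorbed into the autonomous inclusion underlying the chain-transitivity analysis through (\ref{hyp:ratelim}), and handle the coupling between $\auxval[]$ and the moving value $v_\st$, for which the slow-drift estimate (\ref{pr:sbr:zs7}) is what makes the contraction argument close. Everything else is a routine transcription of the computations already carried out in Lemmas~\ref{lem:sbr:zs:diffduality}, \ref{lem:usf} and~\ref{lem:sbr:zs:cvval}.
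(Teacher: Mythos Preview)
Your proposal is correct and follows essentially the same route as the paper: first reduce to $\firstict$ via Lemma~\ref{lem:sfp:firstict}, then use the (truncated) duality gap $\max\{\duagap - 2\boundduagap\ratelim\ratemin^{-1},0\}$ as a Lyapunov function on $\firstict$ (this is stated in Lemma~\ref{lem:sbr:zs:cvduality}), and finally use the function $g$ of Lemma~\ref{lem:sbr:zs:cvval} as a second Lyapunov function on the small-duality-gap slice, exactly as the paper does. Your exposition is in fact more explicit than the paper's --- you spell out the forward-invariance of the intermediate sublevel sets and the final triangle-inequality combination, and you track $\max_\st|\auxval[]-v_\st|$ directly rather than through the state $s_f$, which is a harmless cosmetic difference.
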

    
    \begin{proof}
        With notations of subsection~\ref{app:sbr:zs}, we use the same function $g(\auxval[]) = \max\{|\usf-\vsf|, 2\epsilon\}$ as in Lemma~\ref{lem:sbr:zs:cvval}. Furthermore, assumption (\ref{eq:sbr:hypcv}) is satisfied for points of $\ict$ because $\ict$ is contained in $\firstict$ (Lemma~\ref{lem:sfp:firstict}) and the duality gap $\duagap$ is guaranteed to be small enough. Indeed, $\duagap$ defined in subsection \ref{app:sbr:zs} is almost a Lyapunov function: relatively to set $\firstict$, $\duagap[]-2\boundduagap\ratelim\minsbr^{-1}$ is a Lyapunov function (Lemma~\ref{lem:sbr:zs:cvduality}). Therefore, Lemma~\ref{lem:usf} implies that $g$ is Lyapunov as well (Lemma~\ref{lem:sbr:zs:cvval}) and this implies that $\ict \subseteq \secondict^{\boundict\ratelim}$.\todo{elaborate}
    \end{proof}
    
    \begin{proof}[Proof of Theorem~\ref{thm:sfp:zerosum}]
        Theorem of stochastic approximations apply identically to the identical-interest stochastic games case in the previous subsection. Therefore, with the characterization of internally chain transitive sets (Lemma~\ref{lem:sfp:zs:ict}), we have an inclusion for the limit sets of \ref{eq:mfp} in the zero-sum case as well.
    \end{proof}

} 

\end{document}